\providecommand{\U}[1]{\protect\rule{.1in}{.1in}}
\newtheorem{theorem}{Theorem}
\newtheorem{corollary}{Corollary}
\newtheorem{lemma}{Lemma}
\newtheorem{proposition}{Proposition}
\theoremstyle{definition}
\def\baselinestretch{1.3}
\begin{document}

\title{Wealth Effect on Portfolio Allocation in Incomplete Markets\thanks{%
\protect\linespread{1.1}\protect\scriptsize We would like to thank
T.\ Berrada, J.\ Detemple, I. Karatzas, R. Koijen, H.\ Langlois, P.\ Glasserman, and
M.\ Rindisbacher for valuable comments and suggestions. We also benifited from
the comments of participants at the 2018 3rd (resp. the 2019 4th) PKU-NUS
Annual International Conference on Quantitative Finance and Economics, the
2019 Myron Scholes Financial Forum at Nanjing University, and seminar at
Columbia University. The research of Chenxu Li was supported by the Guanghua
School of Management, the Center for Statistical Science, the High-performance
Computing Platform, and the Key Laboratory of Mathematical Economics and
Quantitative Finance (Ministry of Education) at Peking University, as well as
the National Natural Science Foundation of China (Grant 71671003).%
}}
\author{Chenxu Li\thanks{%
\protect\linespread{1.1}\protect\scriptsize Address: Guanghua School of
Management, Peking University, Beijing, 100871, P. R. China. E-mail address:
\texttt{cxli@gsm.pku.edu.cn.}%
}\\{\footnotesize Guanghua School of Management}\\{\footnotesize Peking University}
\and Olivier Scaillet\thanks{%
\protect\linespread{1.1}\protect\scriptsize Address: University of
Geneva and Swiss Finance Institute, Bd du Pont d'Arve 40, CH - 1211 Geneve 4,
Suisse. E-mail address: \texttt{olivier.scaillet@unige.ch.}%
}\\{\footnotesize University of Geneva and}\\{\footnotesize Swiss Finance Institute}
\and Yiwen Shen\thanks{%
\protect\linespread{1.1}\protect\scriptsize
Address: Department of Information Systems, Business Statistics and Operations Management, The Hong Kong University of Science and Technology, Kowloon, Hong Kong SAR. E-mail
address: \texttt{yiwenshen@ust.hk. }%
}\\{\footnotesize The Hong Kong University of }\\{\footnotesize Science and Technology} }
\date{\today }
\maketitle

\begin{abstract}
We develop a novel five-component decomposition of optimal dynamic portfolio
choice, which reveals the simultaneous impacts from market incompleteness and
wealth-dependent utilities. Under the HARA utility and a nonrandom interest
rate, we can explicitly solve for the optimal policy as a combination of a
bond holding scheme and the corresponding simpler CRRA strategy. Under a
stochastic volatility model estimated on US equity data, we use closed-form
solution to demonstrate the sophisticated impacts from the wealth-dependent
utilities, including cycle-dependence and hysteresis effect in optimal
portfolio allocation, as well as a risk-return trade-off in investment performance.

\noindent\textbf{Keywords:} optimal portfolio choice, stochastic volatility,
incomplete market, wealth-dependent utility, closed-form.\newline

\noindent\textbf{JEL Codes:} C61, C63, G11.

\end{abstract}

\newpage

\section{Introduction}

Optimal portfolio choice is a central topic in modern financial economics,
drawing continuous attention from both academia and industry. Hedge funds,
asset management firms, and pension funds, which manage large positions of
portfolios, as well as individual investors, are confronted with this type of
decisions frequently. The optimal portfolio choice problem has also drawn
long-standing interest in academic research. The celebrated static
mean-variance framework of \cite{markowitz1952portfolio} laid a foundation.
Following the seminal work by \cite{Samuelson69} and \cite{merton69,merton71},
various studies have been developed for the optimal dynamic portfolio choice;
see the comprehensive surveys in, e.g., \cite{Detemple2014},
\cite{Brandt_2009_survey}, and \cite{wachter2010asset}, as well as the
references therein. As an optimal stochastic control problem in a
continuous-time setting, studying optimal policies usually relies on two
approaches. The first one is the well-known dynamic programming method, which
employs the highly nonlinear Hamilton-Jacobi-Bellman (HJB hereafter) equation
to characterize the optimal policy. The second one is the martingale method
pioneered and developed by, e.g., \cite{pliska86},
\cite{KaratzasLehoczkyShreve87}, \cite{cox1989optimal},
\cite{OconeKaratzas1991}, \cite{Cvitanic_Karatzas_1992_AAP},
\cite{Detemple_2003_JF}, and \cite{Koijen2014} among others.

For the purpose of understanding optimal policies from an economic
perspective, the decomposition of optimal policies into mean variance and
hedging components was initiated by \cite{merton71} and matured as a
state-of-the-art approach; see, e.g., \cite{Detemple_2003_JF},
\cite{Detemple_Rindisbacher_2005_MF,DetempleRindisbacher2010}, and
\cite{Detemple2014}.\footnote{See also \cite{Basak_Chabakuri_2010} for
establishing an analytical characterization of optimal portfolios for
mean-variance investors.} For the purpose of implementing and analyzing the
behavior of optimal portfolios, existing works largely focus on specific
affine models (see, e.g., \cite{duffiepansingleton00}) and wealth-independent
utilities, such as the basic constant relative risk aversion (CRRA) utility
and its generalization to the recursive utility by separating risk aversion
and intertemporal elasticity of substitution (see, e.g.,
\cite{epstein1989risk} and \cite{duffie1992stochastic}). While these
specifications can bring mathematical convenience, e.g., closed-form optimal
portfolio policies in some rare and limited cases\footnote{See, e.g.,
\cite{Kim96} and \cite{Wachter02} for modeling stochastic market price of risk
of the asset by using an Ornstein-Uhlenbeck model, \cite{LiouiPoncet2001} for
considering stochastic interest rates by employing a constant-parameter
instantaneous forward rate model, \cite{Pan_Liu_2003_JFE} for discussing
dynamic derivative strategies, \cite{Liu_Longstaff_Pan_2003_JF} for studying
impacts of event risk via affine stochastic volatility models with jumps,
\cite{liu2007portfolio} for taking various stochastic environments (e.g.,
stochastic volatility) into account by modeling the asset return via quadratic
affine processes, and \cite{Burraschietal2010} for characterizing hedging
components against both stochastic volatility and correlation risks under
Wishart processes.}, they limit the model capacity to capture empirically
flexible market dynamics and investor preferences, e.g., wealth-dependent
utilities. As an effective method that can be applied to flexible diffusion
models without closed-form policies, \cite{Detemple_2003_JF} develop a Monte
Carlo simulation approach based on the above-mentioned decomposition of the
optimal policy. However, this milestone of methods is by far limited to the
complete market setting.\footnote{\cite{MC_Cvitanic_2003} develop an
alternative simulation approach. Besides, other numerical methods are
proposed; see, e.g., the early attempts based on the dynamic programming
approach; see, e.g., \cite{FF_MOR_1991}, \cite{Brennan_Schwartz_Lagnado_1997},
\cite{brennan1998role}, \cite{chacko2005dynamic}, \cite{brennan2002dynamic},
and \cite{campbell2004strategic}. We refer to the recent book of
\cite{Dumas_book2017} for a survey of different numerical methods available
for optimal portfolio choice.} Now, a notable open question is to obtain an
economically insightful decomposition of optimal policies under the more
flexible and realistic setting with general incomplete market dynamics and
wealth-dependent utilities, e.g., the hyperbolic absolute risk aversion (HARA)
utility. We expect such a decomposition to reflect the fundamental impacts
from these two effects at the theoretical level, i.e., how market
incompleteness and wealth-dependent utility, as well as their interaction,
affect the optimal policy. Subsequently, the decomposition facilitates the
implementation of the optimal policy via either closed-form solutions or
potential numerical methods, and allows us to obtain new economic insights by
flexible empirical applications. These challenges are exactly our focus.

In this paper, we develop and implement a decomposition for the optimal
portfolio policy under a general class of incomplete market diffusion models
and flexible utilities over both intermediate consumption and terminal wealth.
In our decomposition, we express each component of the optimal policy as
conditional expectation of random variables with sophisticated but explicit
dynamics. In an incomplete market, investors cannot fully hedge the risk by
investing in the risky assets. As a preparation, we apply and explore the
\textquotedblleft least favorable completion\textquotedblright\ principle
developed by \cite{karatzas_martingale_1991} under general diffusion models.
It completes the market by introducing suitable fictitious assets. Then, we
establish the equivalence between the optimal policy in the completed market
and that in the original market by choosing the appropriate price of risk for
these fictitious assets. Such price of risk is endogenously determined by the
investor's utility function and investment horizon, and is thus referred to as
the investor-specific price of risk in incomplete market models. It is also
known as the \textquotedblleft shadow price\textquotedblright\ of market
incompleteness in the literature; see, e.g., \cite{DetempleRindisbacher2010}.

We begin by revealing and applying the following structure: Under general
incomplete market models with wealth-dependent utilities, the appropriate
investor-specific price of risk depends not only on the current market state,
but also implicitly on investor's wealth level. The latter dependence is
entirely absent in the market price of risk associated with the real assets.
Due to this special structure, the state price density, which plays a key role
in solving the optimal policy, also becomes wealth-dependent in general
incomplete market models. It is fundamentally different from the complete
market case, where the state price density only depends on the market state.
Thus, the structure of investor-specific price of risk introduces additional
channels for the investor's wealth level to affect the optimal policy.
Recognizing such structure is crucial for correctly establishing the
decomposition of optimal policy under general incomplete market models with
flexible utilities.

We first decompose the optimal policy into four components, all as functions
of the current market state variable and investor's wealth level. The optimal
policy includes the mean-variance component, the interest rate hedge
component, and two components for hedging the uncertainty in market and
investor-specific price of risk respectively. Thus, our results substantially
extend the existing complete-market policy decomposition with the
mean-variance, interest rate hedge, and market price of risk hedge; see, e.g.,
\cite{merton71} and \cite{Detemple_2003_JF}. As the investor-specific price of
risk depends on investor's wealth level, the component related to it needs to
hedge the uncertainty from both the market state and investor's wealth. It
highlights a key difference from the complete market case. In particular,
under the mild assumption that the investor-specific price of risk function is
differentiable in its arguments, we show the investor-specific price of risk
hedge component can be further decomposed into two parts, which hedge the
uncertainty from market state and investor's wealth respectively. It
effectively leads to a five-component decomposition of the optimal
policy,\ and highlights the fundamental differences between our decomposition
and the ones in the literature, e.g., \cite{merton71}, \cite{Detemple_2003_JF}%
, \cite{Detemple_Rindisbacher_2005_MF,DetempleRindisbacher2010}, and
\cite{Detemple2014}. To our best knowledge, we are the first to identify the
last component in the optimal policy, which hedges the uncertainty in
investor-specific price of risk due to variation in investor's wealth. We find
this component only appears in the optimal policy when the market is
incomplete and, at the same time, the utility is wealth-dependent. Thus, it
reveals the impact from the combination of market incompleteness and
wealth-dependent utility on optimal policies.

Our new decomposition, due to its structural clarity, facilitates the
implementation of optimal policy under flexible incomplete-market model with
wealth-dependent utility. As an important application, we apply our
decomposition to solve the optimal policy under the wealth-dependent HARA
utility. Compared with the simple CRRA utility that is commonly used in the
literature, the HARA utility offers more flexibility in capturing investor's
preference, and is more realistic in reflecting investment constraints such as
investment goal and subsistence level. However, it is much less studied due to
its commonly believed mathematical inconvenience; see, e.g., \cite{Kim96} for
a rare case with closed-form policy and \cite{duffie1997hedging} for
characterizing the optimal policy by viscosity solution in an incomplete
market with constant coefficients for stock and income dynamics. Our
decomposition clearly reveals how the investor's wealth level gets involved in
the optimal policy when transiting from the CRRA utility to the HARA utility.
In particular, under the special case with nonrandom but possibly time-varying
interest rate, we show that we can decompose the optimal policy under HARA
utility as a product of its counterpart under CRRA utility and a key
multiplier related to investor's wealth level and bond prices. Our
decomposition indicates that the HARA optimal policy is constructed as
follows: the investor first buys a series of zero-coupon bonds to exactly
satisfy the minimum requirements for terminal wealth and intermediate
consumptions, then allocates the remaining wealth just as an investor with
CRRA utility. Moreover, this intuitive structure suggests that the roles of
investor wealth and current market state can be explicitly separated in the
optimal policy under HARA utility. As such, our decomposition facilitates the
closed-form solution of HARA optimal policy under specific models. It also
brings potential benefits to the open problem of developing efficient
numerical approaches for incomplete market models with wealth-dependent utilities.

To demonstrate the wealth effects in optimal portfolio allocation, we analyze
the behaviors of optimal policies for investors with HARA utility under the
prototypical stochastic volatility model of \cite{heston93} (Heston-SV
hereafter). To achieve fresh empirical validity, we calibrate the model using
the daily data of SPDR S\&P 500 ETF from the recent ten years, i.e., 2010 to
2019. We employ the maximum likelihood estimation approach developed in
\cite{AitSahalia_Li_Li_2020_MMLE}, which is an efficient method for estimating
continuous-time models with latent factors. With our decomposition results for
HARA utility, we can solve the optimal policy in closed-form under the
Heston-SV model, which facilitates our subsequent economicl analysis. We show
that when switching from the simple CRRA utility to the HARA utility, the
optimal policy is impacted by not only the investor's wealth level, but also
the interest rate and investment horizon. Thus, the wealth-dependent property
of HARA utility can influence the optimal policy via multiple channels beyond
investor's current wealth. Our results provide a potential theoretical
explanation to the empirical evidence in the literature that the investment in
risky assets increases concavely in investor's financial wealth; see, e.g.,
\cite{Roussanov2010}, \cite{wachter2010household}, and \cite{calvet2014twin}.

In addition to the static analysis of the optimal policy at a fixed time
point, we analyze the significant wealth impacts from a dynamic perspective,
i.e., how the wealth-dependent property of the HARA utility interacts with the
complex market dynamics in affecting the optimal allocation strategy and
overall investment performance.\footnote{Although important, this type of
dynamic analysis is rare in the literature. \cite{MoreiraMuir2019} show that,
under stochastic volatility, ignoring the hedge component in the optimal
policy leads to a substantial utility loss. However, since
\cite{MoreiraMuir2019} employ the recursive utility of
\cite{duffie1992stochastic}, the consequent optimal policy is independent of
investor's wealth level, contrasting\ with our focus on wealth-dependence.} In
particular, we find that the wealth-dependent property of HARA utility leads
to sophisticated dependence of the optimal portfolio on the entire paths of
the asset price and volatility. As a vivid illustration, we find that the
optimal policies of HARA investors with high and low initial wealth levels
become more (resp.\ less) similar to each other in the bull (resp.\ bear)
market regimes. Next, we show the initial wealth level of HARA investors
substantially impacts the overall investment performance. It introduces a
risk-return trade-off induced by the wealth level: the HARA investor with
higher initial wealth allocates more wealth to the risky asset, leading to a
higher return but also more risk. We quantify such trade-off by simulating a
large number of paths under our estimated model, i.e., accounting for the
probabilistic distribution of possible market scenarios. With the simulated
paths and the closed-form optimal policy, we compute a priori expectations of
performance statistics including excess return mean, volatility, Sharpe ratio,
and maximum drawdown. We find that as HARA investor's initial wealth increases
by ten times from the subsistence level, the average annual excess return
increases from $3.5\%$ to $23.5\%$, while the volatility increases from
$4.2\%$ to $25.4\%$, and the maximum drawdown jumps from $8.2\%$ to $38.1\%$.
The huge differences in the investment performance exemplify the importance
and practical relevance of understanding the wealth effects in delegated
portfolio management. Moreover, the average Sharpe ratio increases from $0.74$
to $0.82$. This surge can be attributed to the additional uncertainty caused
by the market cycles, which is more pronounced for low wealth HARA investors.
The above cycle-dependence is entirely absent under the simple CRRA utility.

Then, we perform some more in-depth analysis on how the wealth effect impacts
the overall investment performance. First, we find that higher interest rate
and/or longer investment horizon increase both the average excess return and
volatility. It can be interpreted as follows. With higher interest rate or
longer horizon, investors can accumulate more wealth during their investment
horizon, leading to more allocation on the risky asset. In addition, such
effect is more significant for HARA investors with low initial wealth,
reflecting the interaction between the wealth effect and model parameters.
Finally, we reveal a novel hysteresis effect in the optimal portfolio
allocation. That is, the investment performance depends on not only the
realization of market states, but also the history of their occurrence. In
particular, when we shuffle the path of stock price and move the ``good''
years with high returns to the beginning of the investment horizon, the
average excess return of HARA investors remains almost unchanged, but their
volatility increases substantially. Such hysteresis effect essentially stems
from the cycle-dependence of optimal policy under the wealth-dependent HARA
utility, and entirely vanishes for CRRA investors. These analyses further
suggest that the wealth-dependent property can impact the optimal portfolio
allocation in sophisticated ways, which must be taken in to account in
delegated portfolio management.

The rest of this paper is organized as follows. Section \ref{section:model}
gives the model set-up and recalls the machinery of the fictitious completion
method. In Section \ref{section:structural_decomposition}, we establish the
decomposition for general incomplete market models with flexible utilities. We
reveal the fundamental impact of market incompleteness and wealth-dependent
utilities by comparing our decomposition to that under the CRRA utility and/or
complete market cases. In Section \ref{section:closed_form}, we apply our
decomposition to the wealth-dependent HARA utility and establish a closed-form
relation between optimal policies under HARA and CRRA utilities under
nonrandom interest rate. We devote Section \ref{section:wealth_impact} to the
economic analysis of the static and dynamic impacts of wealth-dependence. In
particular, we provide explanations for the cycle-dependence and hysteresis
effect of optimal policy under wealth-dependent utilities, which is unseen in
the previous literature on optimal dynamic portfolio allocation. Section
\ref{Section_conclusions_discussions} concludes and provides discussions. We
collect auxiliary derivations in the appendices. For completeness, we include
further results in the online supplementary material.

\section{Model set-up and fictitious completion method \label{section:model}}

We begin by setting up the model, utility function, and optimal dynamic
portfolio choice problem under general incomplete market models before
recalling the fictitious completion method used to get our new decomposition results.

\subsection{Model set-up}

Assume that the market consists of $m$ stocks and one savings account. The
stock price $S_{it},$ for $i=1,2,\ldots,m$, follows the generic SDE:
\begin{equation}
\frac{dS_{it}}{S_{it}}=\left(  \mu_{i}(t,Y_{t})-\delta_{i}(t,Y_{t})\right)
dt+\sigma_{i}(t,Y_{t})dW_{t}, \label{SDE_price}%
\end{equation}
where $Y_{t}$ is an $n$-dimensional state variable driven by the following
generic SDE:
\begin{equation}
dY_{t}=\alpha(t,Y_{t})dt+\beta(t,Y_{t})dW_{t}. \label{SDE_Y}%
\end{equation}
In (\ref{SDE_price}), $W_{t}$ is a standard $d-$dimensional Brownian motion;
$\mu_{i}(t,y)$ is a scalar function for modeling the mean rate of return;
$\delta_{i}(t,y)$ is a scalar function for modeling the dividend rate;
$\sigma_{i}(t,y)$ is a $d-$dimensional vector-valued function for modeling the
volatility. In (\ref{SDE_Y}), $\alpha(t,y)$ is an $n-$dimensional
vector-valued function for modeling the drift of the state variable $Y_{t};$
$\beta(t,y)$ is an $n\times d$ dimensional matrix-valued function for modeling
the diffusion of the state variable $Y_{t}.$ We assume the existence and
uniqueness of solutions to SDEs (\ref{SDE_price}) and (\ref{SDE_Y}%
).\footnote{Sufficient technical conditions include but are not limited to,
e.g., the Lipschitz condition and the polynomial growth condition on the
coefficient functions; see, e.g., Sections 5.2 and 5.3 in
\cite{Karaztas_Shreve}. In addition, we assume that the solution to SDE
(\ref{SDE_Y}) is Malliavin differentiable. A sufficient condition is that the
coefficient functions of SDE (\ref{SDE_Y}) have bounded derivatives for each
order; see, e.g., Section 2.2 of \cite{Nualart_MC_book}.} Besides, we assume
that the savings account appreciates at the instantaneous interest rate
$r_{t}=r(t,Y_{t})$ for some scalar-valued function $r(t,y).$ The state
variable $Y_{t}$ governs all the investment opportunities in the market
through the rate of return, the dividend rate, the volatility, and the
instantaneous interest rate.

We mainly focus on the incomplete market case where the number of independent
Brownian motions is strictly larger than the number of tradable risky assets,
i.e., $d>m$. In this case, we cannot fully hedge the uncertainty stemming from
the Brownian motion by investing in the risky assets. As we will show, due to
market incompleteness, the decomposition and subsequent implementation for the
optimal portfolio policy become a challenging issue; it enjoys a
high-dimensional nature with a sophisticated structure even for one-asset
cases. Denote the investor's wealth process by $X_{t}$. Then, it satisfies the
following wealth equation:
\begin{equation}
dX_{t}=(r_{t}X_{t}-c_{t})dt+X_{t}\pi_{t}^{\top}\left[  (\mu_{t}-r_{t}%
1_{m})dt+\sigma_{t}dW_{t}\right]  . \label{opt_constraint}%
\end{equation}
In (\ref{opt_constraint}), $\mu_{t}$ and $\sigma_{t}$ represent the mean rate
of return and volatility of the risky assets, which satisfy $\mu_{t}%
=\mu(t,Y_{t})$ and $\sigma_{t}=\sigma(t,Y_{t})$ where the $m$--dimensional
vector $\mu(t,y)$ and the $m\times d$--dimensional matrix $\sigma(t,y)$ are
defined by $\mu(t,y):=(\mu_{1}(t,y),\mu_{2}(t,y),\cdots, $ $\mu_{m}%
(t,y))^{\top}$ and $\sigma(t,y):=$ $(\sigma_{1}(t,y),\sigma_{2}(t,y),\cdots,$
$\sigma_{m}(t,y))^{\top}$. We assume the volatility function $\sigma(t,y)$ has
rank $m,$ i.e., its rows are linearly independent. Besides, $c_{t}$ is the
instantaneous consumption rate; $\pi_{t}$ is an $m-$dimensional vector
representing the weights of the risky assets in the portfolio; $1_{m}$ denotes
an $m-$dimensional column vector with all elements equal to $1.$ The investor
maximizes her expected utility over both intermediate consumptions and
terminal wealth by dynamically allocating her wealth among the risky assets
and the risk-free asset, subject to the non-bankruptcy condition. We can
formulate this optimization problem as
\begin{equation}
\sup_{(\pi_{t},c_{t})}E\left[  \int_{0}^{T}u(t,c_{t})dt+U(T,X_{T})\right]
,\text{ with }X_{t}\geq0\text{ for all }t\in\lbrack0,T], \label{utility}%
\end{equation}
where $u(t,\cdot)$ and $U(T,\cdot)$ are the time-additive utility functions of
the intermediate consumptions and the terminal wealth; they are allowed to be
time-varying, in order to reflect the time value, e.g., the discount effect,
and are assumed to be strictly increasing and concave with $\lim
_{x\rightarrow\infty}\partial u(t,x)/\partial x=0$ and $\lim_{x\rightarrow
\infty}\partial U(T,x)/\partial x=0$.

Under the incomplete market setting, we aim at developing a useful
decomposition and methods of implementation for the optimal policy under
general time-additive utility functions. They are sufficiently flexible for
capturing the effects of market incompleteness and wealth-dependent property
in investor's risk preference. A representative example of wealth-dependent
utilities is the HARA utility. Following the convention (see, e.g.,
\cite{carroll1996concavity}), we define it by
\begin{equation}
u(t,c)=we^{-\rho t}\frac{\left(  c-\bar{c}\right)  ^{1-\gamma}}{1-\gamma
}\text{ and }U(T,x)=(1-w)e^{-\rho T}\frac{\left(  x-\bar{x}\right)
^{1-\gamma}}{1-\gamma} \label{hara_utility}%
\end{equation}
for $c>\bar{c}$ and $x>\bar{x},$ where $\gamma>0$ is the risk aversion
coefficient; $w\in\lbrack0,1]$ is the weight for balancing the intermediate
consumption and the terminal wealth, and $\rho$ is the discount rate. The
constants $\bar{c}$ and $\bar{x}$ represent the minimum allowable levels for
the intermediate consumption and terminal wealth. The HARA utility allows for
imposing lower bound constraints on the intermediate consumption and/or
terminal wealth. This realistic feature is particularly suitable for
incorporating, e.g., portfolio insurance, investment goal constraints, and
subsistence level constraints. Although important, closed-form optimal
policies under the HARA utility are rare due to its technical difficulties;
see \cite{Kim96} for one such case with the stochastic market price of risk
modelled by an Ornstein-Uhlenbeck process. Potential numerical methods
include, e.g., the Monte Carlo simulation approaches of
\cite{Detemple_2003_JF} and \cite{MC_Cvitanic_2003}, which are, however,
developed under complete market settings.

As a simpler and special case, the HARA utility reduces to the widely used
CRRA utility when $\bar{c}$ and $\bar{x}$ are set as zero in
(\ref{hara_utility}).\footnote{The wealth independence nature of CRRA utility
brings mathematical convenience that leads to closed-form formulae of the
optimal policy or significant simplifications of the optimization problem
under some specific models; see, e.g., \cite{Wachter02}, and
\cite{liu2007portfolio} for closed-form optimal policies, as well as
\cite{Detemple_2003_JF} for a Monte Carlo simulation approach.} An alternative
generalization of the simple CRRA utility is the recursive utility that aims
at separating risk aversion from elasticity of intertemporal substitution;
see, e.g., \cite{epstein1989risk} and \cite{duffie1992stochastic}. However,
most applications with the recursive utility inherit the wealth-independent
property of the CRRA utility (see, e.g., \cite{chacko2005dynamic} and
\cite{MoreiraMuir2019}), thus cannot reveal the fundamental impact of
investor's wealth on the optimal policy. We do not intend to cover the
recursive utility in this paper, as the general time-additive utility we
consider is flexible enough for our focus on the sophisticated impacts of
market incompleteness and wealth dependence on both theoretical and practical
aspects. For ease of exposition, we abbreviate time-additive utility as
utility in what follows.

\subsection{Fictitious completion method}

To establish a novel decomposition of the optimal policy, we set as a basic
foundation the least favorable completion\footnote{As documented in the
literature (see, e.g., \cite{karatzas_martingale_1991}), we can interpret the
terminology \textquotedblleft least favorable completion\textquotedblright\ as
follows. Consider all the possible fictitious completions and their associated
optimal policies. We naturally say that a completion is more (resp.\ less)
favorable if its corresponding optimal policy results in higher (resp.\ lower)
expected utility. The completion (\ref{Pi_F_0}) below, which leads to an
optimal portfolio with zero weight on the fictitious assets, must be the least
favorable one. Indeed, in any other fictitious completion, since this
portfolio without the fictitious assets is admissible (i.e., a candidate
portfolio strategy), the optimal one must result in a higher expected utility
and thus becomes more favorable.} principle introduced in
\cite{karatzas_martingale_1991} at the purely theoretical level. For this
purpose, we begin by introducing the following notations as necessary
preparations, based on which we explore and disentangle the essential
structures of the optimal policy in the next section.

We introduce $d-m$ fictitious assets without dividend payment to complete the
market, following \cite{karatzas_martingale_1991}. Their prices $F_{it}$, for
$i=1,2,\ldots,d-m$, satisfy the following SDE:
\begin{equation}
\frac{dF_{it}}{F_{it}}=\mu_{it}^{F}dt+\sigma_{i}^{F}(t,Y_{t})dW_{t},
\label{SDE_F}%
\end{equation}
where the mean rates of returns $\mu_{it}^{F}$ are stochastic processes
adaptive to the filtration generated by the Brownian motion $W_{t}$. We can
choose the volatility function $\sigma^{F}(t,y):=(\sigma_{1}^{F}%
(t,y),\cdots,\sigma_{d-m}^{F}(t,y))^{\top}$ of the fictitious assets
arbitrarily, as long as it has rank $d-m$ and satisfies the following
orthogonal condition with respect to the volatility function $\sigma(t,y)$ of
the real risky assets $S_{t}$:
\begin{equation}
\sigma(t,y)\sigma^{F}(t,y)^{\top}\equiv0_{m\times(d-m)}. \label{ortho_cond}%
\end{equation}
It guarantees that the fictitious and real assets are driven by different
Brownian shocks, and thus leads to the success of the market completion.

Combining the $m$ real risky assets with prices $S_{t}$ in (\ref{SDE_price})
and the $d-m$ fictitious risky assets with prices $F_{t}$ in (\ref{SDE_F}), we
construct a completed market consisting of $d$ risky assets and driven by $d$
independent Brownian motions. In this completed market, we represent the
prices of the risky assets, including both the real and fictitious ones, by a
$d$--dimensional column vector $\mathcal{S}_{t}=(S_{t}^{\top},F_{t}^{\top
})^{\top}$. According to (\ref{SDE_price}) and (\ref{SDE_F}), $\mathcal{S}%
_{t}$ is driven by the SDE: $d\mathcal{S}_{t}=$diag$(\mathcal{S}_{t})\left[
\mu_{t}^{\mathcal{S}}dt+\sigma^{\mathcal{S}}(t,Y_{t})dW_{t}\right]  ,$ with
the diagonal matrix diag$(\mathcal{S}_{t})=$ diag$(S_{t},F_{t}),$ the
$d$--dimensional column vector $\mu_{t}^{\mathcal{S}}=((\mu(t,Y_{t}%
)-\delta(t,Y_{t}))^{\top},(\mu_{t}^{F})^{\top})^{\top}$, and the $d\times d$
dimensional matrix $\sigma^{\mathcal{S}}(t,Y_{t})$ $=(\sigma(t,Y_{t})^{\top
},\sigma^{F}(t,Y_{t})^{\top})^{\top}$. By linear algebra, the orthogonal
condition (\ref{ortho_cond}) implies that $\sigma^{\mathcal{S}}(t,y)$ must be
nonsingular. Thus, we are now in a complete market, where we can fully hedge
the uncertainty stemming from all Brownian motions. The completed market
allows for investing in both the real assets $S_{t}$ and the fictitious assets
$F_{t}$\emph{. }We denote by $\pi_{t}$ and $\pi_{t}^{F}$ their corresponding
weights, which are $m$ and $(d-m)$--dimensional vectors. Similar to
(\ref{utility}), we consider the utility maximization problem in this
completed market, still with the non-bankruptcy constraint $X_{t}\geq0$.

In the completed market, we define the total price of risk as $\theta
_{t}^{\mathcal{S}}:=\sigma^{\mathcal{S}}(t,Y_{t})^{-1}(\mu_{t}^{\mathcal{S}%
}-r(t,Y_{t})1_{d}).$ By the orthogonal condition (\ref{ortho_cond}), we can
decompose the total price of risk as:
\begin{equation}
\theta_{t}^{\mathcal{S}}=\theta^{h}(t,Y_{t})+\theta_{t}^{u}. \label{mpr_total}%
\end{equation}
Here, $\theta^{h}(t,Y_{t})$ and $\theta_{t}^{u}$ are the prices of risk
associated with the real and fictitious assets, respectively. They are defined
by the $d$--dimensional column vectors:%
\begin{subequations}
\begin{equation}
\theta^{h}(t,Y_{t}):=\sigma(t,Y_{t})^{+}(\mu(t,Y_{t})-r(t,Y_{t})1_{m})
\label{def_thetah}%
\end{equation}
and%
\begin{equation}
\theta_{t}^{u}:=\sigma^{F}(t,Y_{t})^{+}(\mu_{t}^{F}-r(t,Y_{t})1_{d-m}),
\label{def_thetau}%
\end{equation}
where $A^{+}:=A^{\top}(AA^{\top})^{-1}$ denotes the Moore--Penrose inverse
(see, e.g., \cite{penrose_1955}) of a general matrix $A$ with linearly
independent rows. The term $\theta^{h}(t,Y_{t})$ in (\ref{def_thetah}) is
referred to as the market price of risk, as it is fully determined by the real
assets shared by all investors in the market. The term $\theta_{t}^{u}$ in
(\ref{def_thetau}), however, is purely associated with the fictitious assets,
which are specifically introduced for solving the optimal portfolio choice
problem (\ref{utility}) in the incomplete market. As we will show momentarily,
$\theta_{t}^{u}$ is endogenously determined by the investor's utility function
and the investment horizon. Thus, following the literature, we refer to
$\theta_{t}^{u}$ as the investor-specific price of risk, since it varies from
one investor to another. Our investigation starts from expressing the
functional form of $\theta_{t}^{u}$. Then, we develop a novel decomposition of
the optimal policy based on the structure of $\theta_{t}^{u}$.

With the total market price of risk in (\ref{mpr_total}), we introduce the
state price density as%
\end{subequations}
\begin{equation}
\xi_{t}^{\mathcal{S}}:=\exp\left(  -\int_{0}^{t}r(v,Y_{v})dv-\int_{0}%
^{t}(\theta_{v}^{\mathcal{S}})^{\top}dW_{v}-\frac{1}{2}\int_{0}^{t}(\theta
_{v}^{\mathcal{S}})^{\top}\theta_{v}^{\mathcal{S}}dv\right)  .\footnote{To
guarantee the martingale property of $\xi_{t}^{\mathcal{S}}\exp(\int_{0}%
^{t}r(v,Y_{v})dv)$, we assume that the total price of risk $\theta
_{v}^{\mathcal{S}}$ satisfies the Novikov condition: $\displaystyle E\left[
\exp\left(  \frac{1}{2}\int_{0}^{T}(\theta_{v}^{\mathcal{S}})^{\top}\theta
_{v}^{\mathcal{S}}dv\right)  \right]  <\infty.$} \label{def_Xit_incomp}%
\end{equation}
For any $s\geq t\geq0,$ we define the relative state price density as
$\xi_{t,s}^{\mathcal{S}}=\xi_{s}^{\mathcal{S}}/\xi_{t}^{\mathcal{S}}.$ By
Ito's formula, it satisfies $d\xi_{t,s}^{\mathcal{S}}=-\xi_{t,s}^{\mathcal{S}%
}[r(s,Y_{s})ds+(\theta_{s}^{\mathcal{S}})^{\top}dW_{s}] $ with initial value
$\xi_{t,t}^{\mathcal{S}}=1$.\emph{ }The above dynamics of $\xi_{t,s}%
^{\mathcal{S}}$ clearly hinges on the undetermined investor-specific price of
risk $\theta_{s}^{u}$.

The martingale approach pioneered by \cite{KaratzasLehoczkyShreve87} and
\cite{cox1989optimal} starts by formulating the dynamic problem (\ref{utility}%
) with information up to time $t$ as the following equivalent static
optimization problem:
\begin{equation}
\sup_{(c_{t},X_{T})}E_{t}\left[  \int_{t}^{T}u(s,c_{s})ds+U(T,X_{T})\right]
\text{ subject to }E_{t}\left[  \int_{t}^{T}\xi_{t,s}^{\mathcal{S}}c_{s}%
ds+\xi_{t,T}^{\mathcal{S}}X_{T}\right]  \leq X_{t}, \label{Max_expectation}%
\end{equation}
where, throughout the paper, $E_{t}$ denotes the expectation condition on the
information up to time $t$ and $X_{t}$ is the wealth level assuming that the
investor always follows the optimal policy. Then, following the standard
method of Lagrangian multiplier, we can represent the optimal intermediate
consumption and terminal wealth as $c_{t}=I^{u}\left(  t,\lambda_{t}^{\ast
}\right)  $ and $X_{T}=I^{U}\left(  T,\lambda_{T}^{\ast}\right)  $, with
$I^{u}(t,\cdot)$ and $I^{U}(t,\cdot)$ being the inverse marginal utility
functions of $u(t,\cdot)$ and $U(t,\cdot)$, i.e., the functions satisfying
$\partial u/\partial x(t,I^{u}(t,y))=y$ and $\partial U/\partial
x(t,I^{U}(t,y))=y$. The quantity $\lambda_{t}^{\ast}$ denotes the Lagrangian
multiplier for the wealth constraint in (\ref{Max_expectation}). It is
uniquely characterized by%
\begin{equation}
X_{t}=E_{t}[\mathcal{G}_{t,T}(\lambda_{t}^{\ast})], \label{XtGt}%
\end{equation}
where $\mathcal{G}_{t,T}(\lambda_{t}^{\ast})$ is defined as $\mathcal{G}%
_{t,T}(\lambda_{t}^{\ast}):=\Gamma_{t,T}^{U}(\lambda_{t}^{\ast})+\int_{t}%
^{T}\Gamma_{t,s}^{u}(\lambda_{t}^{\ast})ds.$ Here, $\Gamma_{t,T}^{U}%
(\lambda_{t}^{\ast})$ and $\Gamma_{t,s}^{u}(\lambda_{t}^{\ast})$ are given by%
\begin{subequations}
\begin{equation}
\Gamma_{t,T}^{U}(\lambda_{t}^{\ast})=\xi_{t,T}^{\mathcal{S}}I^{U}\left(
T,\lambda_{t}^{\ast}\xi_{t,T}^{\mathcal{S}}\right)  \text{ and }\Gamma
_{t,s}^{u}(\lambda_{t}^{\ast})=\xi_{t,s}^{\mathcal{S}}I^{u}\left(
s,\lambda_{t}^{\ast}\xi_{t,s}^{\mathcal{S}}\right)  . \label{Gamma_u1}%
\end{equation}
By (\ref{XtGt}), we can determine the multiplier $\lambda_{t}^{\ast}$ with
information up to time $t$. Besides, we introduce the following quantities for
expressing the optimal policy in the next section:%
\begin{equation}
\Upsilon_{t,T}^{U}(\lambda_{t}^{\ast})=\lambda_{t}^{\ast}\left(  \xi
_{t,T}^{\mathcal{S}}\right)  ^{2}\frac{\partial I^{U}}{\partial y}\left(
T,\lambda_{t}^{\ast}\xi_{t,T}^{\mathcal{S}}\right)  \text{ and }\Upsilon
_{t,s}^{u}(\lambda_{t}^{\ast})=\lambda_{t}^{\ast}\left(  \xi_{t,s}%
^{\mathcal{S}}\right)  ^{2}\frac{\partial I^{u}}{\partial y}\left(
s,\lambda_{t}^{\ast}\xi_{t,s}^{\mathcal{S}}\right)  . \label{Gamma_u2}%
\end{equation}

Consequently, we can express the optimal policy $(\pi_{t},\pi_{t}^{F})$ for
the completed market via the martingale representation theorem (see, e.g.,
Section 3.4 in \cite{Karaztas_Shreve}). With the Clark-Ocone formula (see
\cite{OconeKaratzas1991}), we can further represent this optimal policy in the
form of conditional expectations of suitable random variables. Under a general
and flexible complete-market diffusion model, \cite{Detemple_2003_JF} propose
an explicit conditional expectation form of the optimal policy, and develop a
Monte Carlo simulation method for its implementation; see also, e.g.,
\cite{DetempleRindisbacher2010} along this line of contributions and
\cite{Detemple2014} for a comprehensive survey of the related developments. We
aim at explicitly decomposing the optimal policy for the incomplete market
case and reveal the fundamental difference that arises from market
incompleteness and wealth-dependent utilities, and more importantly, the
particular case where these two situations co-exist.

By the least favorable completion principle proposed in
\cite{karatzas_martingale_1991}, the optimal policy $\pi_{t}$ for the real
assets in the completed market coincides with its counterpart in the original
incomplete market, as long as we properly choose the investor-specific price
of risk $\theta_{v}^{u}$ such that the optimal weights for the fictitious
assets are always identically zero, i.e.,%
\end{subequations}
\begin{equation}
\pi_{v}^{F}\equiv0_{d-m},\text{ for any }0\leq v\leq T. \label{Pi_F_0}%
\end{equation}
Given an arbitrary choice of the volatility function $\sigma^{F}(v,y),$ the
least favorable constraint (\ref{Pi_F_0}) and the orthogonal condition
(\ref{ortho_cond}) determine the desired $\theta_{v}^{u}$ for $0\leq v\leq T$.
Then, the corresponding optimal policy $\pi_{t}$ of the real assets for the
completed market is also optimal for the original incomplete market. In
particular, the desired $\theta_{v}^{u}$ satisfying (\ref{Pi_F_0}) and the
resulting optimal policy $\pi_{t}$ are independent of the specific choice of
$\sigma^{F}(v,y).$

\section{A novel decomposition of optimal dynamic portfolio
choice\label{section:structural_decomposition}}

In this section, we deliver our new decomposition results based on the
previous setup. We substantially develop economic insights regarding the
structure of the optimal policy under the incomplete market setting ($d>m$)
for models with flexible dynamics (\ref{SDE_price}) -- (\ref{SDE_Y}) and
flexible utility functions.

\subsection{Decomposing optimal policy\label{Section:five_comp_decomposition}}

We start by applying the following lemma stating the functional representation
of the unknown investor-specific price of risk $\theta_{v}^{u}$ that satisfies
the least favorable completion constraint (\ref{Pi_F_0}) in general incomplete
market models with flexible utilities.

\begin{lemma}
\label{lemma:thetau_represent} \bigskip The investor-specific price of risk
$\theta_{v}^{u}$ can be expressed as
\begin{equation}
\theta_{v}^{u}=\theta^{u}\left(  v,Y_{v},\lambda_{v}^{\ast};T\right)  ,
\label{thetau_representation_v}%
\end{equation}
for some function $\theta^{u}\left(  v,y,\lambda;T\right)  $ endogenously
determined by the investor's utility function and investment horizon; it
depends on the time $v$, market state $Y_{v}$, as well as investor's wealth
level $X_{v}$ via the multiplier $\lambda_{v}^{\ast}$.
\end{lemma}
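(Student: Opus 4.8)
The plan is to show that the least-favorable price of risk $\theta_v^u$, which is defined implicitly through the constraint $\pi_v^F \equiv 0_{d-m}$, can be written as a deterministic function of the current time, market state, and the Lagrangian multiplier. The starting point is the characterization of the fictitious weights via the martingale representation of the optimal wealth process. First I would write the optimal wealth at time $v$ as the conditional expectation $X_v = E_v[\mathcal{G}_{v,T}(\lambda_v^\ast)]$ from \eqref{XtGt}, where the integrand $\mathcal{G}_{v,T}(\lambda_v^\ast) = \Gamma_{v,T}^U(\lambda_v^\ast) + \int_v^T \Gamma_{v,s}^u(\lambda_v^\ast)\,ds$ is built from the relative state price densities $\xi_{v,s}^{\mathcal{S}}$ and the inverse marginal utilities as in \eqref{Gamma_u1}. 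Applying the martingale representation theorem (or the Clark--Ocone formula) to this conditional expectation produces the diffusion coefficient of $X_v$ relative to the Brownian motion $W$, and comparing it with the diffusion term $X_v \pi_v^{\top}\sigma_v$ in the completed-market wealth equation lets me read off the optimal weights on both the real and fictitious assets.

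Next I would impose the least-favorable constraint $\pi_v^F \equiv 0$. The weight on the fictitious assets is obtained by projecting the wealth diffusion onto the directions spanned by $\sigma^F(v,Y_v)$; setting this projection to zero yields an equation that the diffusion of $X_v$ must satisfy. The key observation is that the stochastic part of $\mathcal{G}_{v,T}(\lambda_v^\ast)$ enters only through $\xi_{v,s}^{\mathcal{S}}$, whose dynamics $d\xi_{v,s}^{\mathcal{S}} = -\xi_{v,s}^{\mathcal{S}}[r(s,Y_s)\,ds + (\theta_s^{\mathcal{S}})^{\top}dW_s]$ carry the investor-specific price of risk $\theta_s^u$ inside $\theta_s^{\mathcal{S}} = \theta^h(s,Y_s) + \theta_s^u$. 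So the constraint $\pi_v^F \equiv 0$ becomes a (generically implicit) fixed-point condition that determines $\theta_v^u$ in terms of the same family of objects. The heart of the argument is a Markovian/measurability reduction: because the drift and diffusion coefficients of $Y$, the interest rate, the market price of risk $\theta^h$, and the inverse marginal utilities are all deterministic functions of $(s,Y_s)$ (and of the terminal horizon $T$), and because $\xi_{v,s}^{\mathcal{S}}$ is generated forward from time $v$, the conditional expectation $E_v[\cdot]$ depends on the time-$v$ information only through the pair $(Y_v, \lambda_v^\ast)$. Hence the quantity that pins down $\theta_v^u$ is itself a function $\theta^u(v,Y_v,\lambda_v^\ast;T)$.

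To make this rigorous I would argue by a forward induction or a fixed-point/consistency argument running backward from the horizon $T$: one posits that $\theta_s^u = \theta^u(s,Y_s,\lambda_s^\ast;T)$ has the claimed functional form for $s > v$, substitutes this into the dynamics of $\xi_{v,s}^{\mathcal{S}}$, and verifies that the resulting constraint at time $v$ again delivers a function of $(v,Y_v,\lambda_v^\ast)$ alone. The dependence on investor's wealth $X_v$ enters solely through $\lambda_v^\ast$, since \eqref{XtGt} establishes a one-to-one relation between $X_v$ and $\lambda_v^\ast$ given $(v,Y_v)$; this is where I would invoke the strict monotonicity and concavity of the utilities (guaranteeing that $I^U$ and $I^u$ are strictly decreasing, so $\mathcal{G}_{v,T}$ is strictly decreasing in $\lambda$ and the multiplier is uniquely identified). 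Finally, I would note that the independence of $\theta_v^u$ from the arbitrary choice of $\sigma^F$, already asserted in the setup after \eqref{Pi_F_0}, confirms that the constructed function is well defined.

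The main obstacle I anticipate is the circularity of the construction: $\theta_v^u$ appears inside the state price density $\xi_{v,s}^{\mathcal{S}}$ that is used to compute the very expectation defining the constraint that determines $\theta_v^u$. Establishing that this fixed-point problem has a solution of the asserted functional form — rather than merely an adapted process — requires the Markovian structure of the model to be exploited carefully, and the differentiation implicit in applying the Clark--Ocone formula relies on the Malliavin differentiability assumption on the solution of \eqref{SDE_Y}. I would treat this step at the level of the functional representation, emphasizing that the joint Markov property of $(Y_v,\lambda_v^\ast)$ is what collapses the a priori path-dependent object into a function of the current state and the multiplier, and defer the detailed verification of measurability and integrability to the conditions already assumed in the model setup.
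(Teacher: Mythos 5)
Your overall strategy --- apply the Clark--Ocone/martingale representation to $X_v=E_v[\mathcal{G}_{v,T}(\lambda_v^\ast)]$, read off the fictitious weights, impose $\pi_v^F\equiv 0_{d-m}$, and then argue a Markovian reduction in the state $(Y_v,\lambda_v^\ast)$ --- is a legitimate route, and it is essentially how the paper later derives the \emph{equation} characterizing $\theta^u$ (Theorem \ref{thm_integral_equation}). Your observation that the wealth level enters only through $\lambda_v^\ast$ via the strictly monotone relation (\ref{XtGt}) is also correct and matches the paper. However, the paper does not prove Lemma \ref{lemma:thetau_represent} this way: it establishes the functional representation by combining the fictitious completion of \cite{karatzas_martingale_1991} with the minimax local martingale (dual) approach of \cite{HePearson91}, characterizing $\theta^u$ as the minimizer of a dual stochastic control problem of the form $\inf_{\theta^u\in\mathrm{Ker}(\sigma)}E[\int_0^T\tilde u(v,\lambda_v^\ast)dv+\tilde U(T,\lambda_T^\ast)]$ over the controlled Markov state $(Y_s,\lambda_s^\ast)$. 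In that formulation the Markov feedback form $\theta^u(v,Y_v,\lambda_v^\ast;T)$ of the optimal control follows from standard dynamic programming, and existence of the optimizer is supplied by the He--Pearson duality theory.

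The genuine gap in your argument is exactly the obstacle you flag at the end and then defer: the constraint $\pi_v^F\equiv 0_{d-m}$ yields a forward--backward fixed-point system in which $\theta^u$ appears inside the dynamics of $\xi_{v,s}^{\mathcal{S}}$ used to compute the conditional expectations that determine $\theta^u$. Your proposed resolution --- posit the functional form for $s>v$ and verify that the time-$v$ constraint again delivers a function of $(v,Y_v,\lambda_v^\ast)$ --- is only a consistency check: it shows that a Markovian solution, if one exists, is self-consistent, but it does not rule out that the actual least-favorable $\theta_v^u$ (whose existence is guaranteed as an adapted process by the Karatzas et al.\ theory) is genuinely path-dependent. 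In continuous time there is no backward induction to anchor this; what you have is an FBSDE-type existence problem, and proving it admits a solution in feedback form is precisely the content of the lemma. This is the step the dual characterization is designed to dispatch, and without it (or an equivalent existence argument for the fixed point) your proof of the lemma is incomplete.
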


\begin{proof}
See Section S.1 in the online supplementary material, where we verify such a result by combining the fictitious completion
approach in \cite{karatzas_martingale_1991} and the minimax local martingale
approach in \cite{HePearson91}.
\end{proof}

Representation (\ref{thetau_representation_v}) reveals the structure of the
investor-specific price of risk $\theta^{u}_{v} = \theta^{u}\left(
v,Y_{v},\lambda_{v}^{\ast};T\right)  $, which is strikingly different from
that of the market price of risk $\theta^{h}\left(  v,Y_{v}\right)  $ defined
in (\ref{def_thetah}) for real assets. It leads to fundamental differences
between incomplete and complete market cases. First, $\theta^{u}\left(
v,Y_{v},\lambda_{v}^{\ast};T\right)  $ depends on the constraint multiplier
$\lambda_{v}^{\ast}$, which solves $X_{v}=E_{v}[\mathcal{G}_{v,T}(\lambda
_{v}^{\ast})]$ by (\ref{XtGt}). Thus, $\theta^{u}\left(  v,Y_{v},\lambda
_{v}^{\ast};T\right)  $ implicitly depends on the current wealth level $X_{v}%
$, and as a consequence, also depends on the entire path of previous market
dynamics up to time $v$. Second, $\theta^{u}\left(  v,Y_{v},\lambda_{v}^{\ast
};T\right)  $ also depends on the investment horizon $T$, which is
economically meaningful and technically important at the level of
implementation. However, neither of these two types of dependence exists in
the market price of risk $\theta^{h}\left(  v,Y_{v}\right)  $\emph{.
}Accordingly, we get the clear insight that, when completing the market
following the least favorable principle (\ref{Pi_F_0}), the introduced
fictitious assets ought to depend on both the current wealth level and the
investment horizon of the specific investor.

The above analysis based on (\ref{thetau_representation_v}) plainly reveals
the economic nature of the investor-specific price of risk $\theta_{v}^{u}$,
which is also known as the shadow price for market incompleteness. However,
the importance of structure (\ref{thetau_representation_v}) goes far beyond
this point. As we will show in Theorem \ref{thm_representation_new} below, the
structure of the investor-specific price of risk is crucial for correctly
establishing the novel decomposition of the optimal policy. It reveals the
fundamental impacts from market incompleteness and wealth-dependent utilities.
In particular, it shows that market incompleteness introduces new channels for
the investors' wealth to impact the optimal policy. Moreover, the structure
(\ref{thetau_representation_v}) facilitates the implementation of the optimal
policy by closed-form solutions or potential numerical methods.

\begin{theorem}
\label{thm_representation_new}Under the incomplete market model
(\ref{SDE_price}) -- (\ref{SDE_Y}), the optimal policy $\pi_{t}$ for the real
assets with prices $S_{t}$ admits the following decomposition:
\begin{equation}
\pi_{t}=\pi^{mv}(t,X_{t},Y_{t})+\pi^{r}(t,X_{t},Y_{t})+\pi^{\theta}%
(t,X_{t},Y_{t}). \label{thm_decomp1}%
\end{equation}
The terms $\pi^{mv}(t,X_{t},Y_{t})$, $\pi^{r}(t,X_{t},Y_{t})$, and
$\pi^{\theta}(t,X_{t},Y_{t})$ denote the mean-variance component, the interest
rate hedge component, and the price of risk hedge component. The price of risk
hedge component $\pi^{\theta}(t,X_{t},Y_{t})$ is further decomposed as
\begin{equation}
\pi^{\theta}(t,X_{t},Y_{t})=\pi^{h}(t,X_{t},Y_{t})+\pi^{u}(t,X_{t},Y_{t}),
\label{thm_decomp2}%
\end{equation}
where the first two components $\pi^{h}(t,X_{t},Y_{t})\ $and $\pi^{u}%
(t,X_{t},Y_{t})$ hedge the uncertainties in market and investor-specific price
risk. The components can be expressed as conditional expectations on random
variables with explicit dynamics\footnote{In line with the time--$t$
formulation of the optimization problem (\ref{Max_expectation}), we express
these components by the time--$t$ state variable $Y_{t}$ and the current
wealth $X_{t}$, rather than the initial wealth $X_{0}$ as in most of the
existing literature, e.g., \cite{Detemple_2003_JF}. We can see that, by
solving constraint (\ref{XtGt}), $\lambda_{t}^{\ast}$ is a function of $t$,
$X_{t}$, and $Y_{t}.$}:
\begin{subequations}
\begin{align}
\pi^{mv}(t,X_{t},Y_{t})  &  =-(\sigma(t,Y_{t})^{+})^{\top}\theta^{h}%
(t,Y_{t})E_{t}[\mathcal{Q}_{t,T}\left(  \lambda_{t}^{\ast}\right)
]/X_{t},\text{ }\label{pimv_thm1}\\
\pi^{r}(t,X_{t},Y_{t})  &  =-(\sigma(t,Y_{t})^{+})^{\top}E_{t}[\mathcal{H}%
_{t,T}^{r}\left(  \lambda_{t}^{\ast}\right)  ]/X_{t},\text{ } \label{pir_thm1}%
\\
\pi^{\theta}(t,X_{t},Y_{t})  &  =-(\sigma(t,Y_{t})^{+})^{\top}E_{t}%
[\mathcal{H}_{t,T}^{\theta}\left(  \lambda_{t}^{\ast}\right)  ]/X_{t},
\label{pitheta_thm1}%
\end{align}
and
\end{subequations}
\begin{subequations}
\begin{align}
\pi^{h}(t,X_{t},Y_{t})  &  =-(\sigma(t,Y_{t})^{+})^{\top}E_{t}[\mathcal{H}%
_{t,T}^{h}]/X_{t},\label{pit_thm1}\\
\pi^{u}(t,X_{t},Y_{t})  &  =-(\sigma(t,Y_{t})^{+})^{\top}E_{t}[\mathcal{H}%
_{t,T}^{u}\left(  \lambda_{t}^{\ast}\right)  ]/X_{t}. \label{pit_thm2}%
\end{align}
Hereof, $\lambda_{t}^{\ast}$ is the multiplier uniquely determined by
(\ref{XtGt}), i.e., $X_{t}=E_{t}[\mathcal{G}_{t,T}(\lambda_{t}^{\ast})]$. It
depends on $X_{t}$ and satisfies the relation $\lambda_{t}^{\ast}=\lambda
_{0}^{\ast}\xi_{t}^{\mathcal{S}}.$ The expressions for $\mathcal{Q}%
_{t,T}\left(  \lambda_{t}^{\ast}\right)  $, $\mathcal{H}_{t,T}^{r}\left(
\lambda_{t}^{\ast}\right)  $, $\mathcal{H}_{t,T}^{\theta}\left(  \lambda
_{t}^{\ast}\right)  $, $\mathcal{H}_{t,T}^{h}$, and $\mathcal{H}_{t,T}%
^{u}\left(  \lambda_{t}^{\ast}\right)  $ are explicitly given in Proposition
\ref{corollary_expression} below. The optimal intermediate consumption $c_{t}$
and terminal wealth $X_{T}$ are given by $c_{t}=I^{u}\left(  t,\lambda
_{t}^{\ast}\right)  $ and $X_{T}=I^{U}\left(  T,\lambda_{T}^{\ast}\right)  .$
\end{subequations}
\end{theorem}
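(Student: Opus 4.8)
The plan is to work in the completed market built from the fictitious assets and to recover the real-asset weights $\pi_t$ as the first $m$ components of the full completed-market policy $\pi_t^{\mathcal{S}}=(\pi_t^{\top},(\pi_t^{F})^{\top})^{\top}$, invoking the least favorable completion (\ref{Pi_F_0}). Since the completed market is complete, the optimal discounted wealth-plus-consumption process $Z_t:=\xi_t^{\mathcal{S}}X_t+\int_0^t\xi_s^{\mathcal{S}}c_s\,ds$ is a martingale, with terminal value $Z_T=\xi_T^{\mathcal{S}}I^{U}(T,\lambda_0^{\ast}\xi_T^{\mathcal{S}})+\int_0^T\xi_s^{\mathcal{S}}I^{u}(s,\lambda_0^{\ast}\xi_s^{\mathcal{S}})\,ds$, using the static optimum (\ref{Max_expectation}) and the relation $\lambda_s^{\ast}=\lambda_0^{\ast}\xi_s^{\mathcal{S}}$. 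Applying It\^o's formula to $\xi_t^{\mathcal{S}}X_t$ via the completed-market wealth equation and the dynamics of $\xi_t^{\mathcal{S}}$ from (\ref{def_Xit_incomp}), I would match its diffusion coefficient against the martingale representation $dZ_t=\psi_t^{\top}\,dW_t$, obtaining $\xi_t^{\mathcal{S}}X_t\big[(\pi_t^{\mathcal{S}})^{\top}\sigma^{\mathcal{S}}(t,Y_t)-(\theta_t^{\mathcal{S}})^{\top}\big]=\psi_t^{\top}$, whence $\pi_t^{\mathcal{S}}=(\sigma^{\mathcal{S}}(t,Y_t)^{-1})^{\top}\big[\theta_t^{\mathcal{S}}+\psi_t/(\xi_t^{\mathcal{S}}X_t)\big]$.

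Next I would extract the first $m$ rows. Exploiting the block structure $\sigma^{\mathcal{S}}=(\sigma^{\top},(\sigma^{F})^{\top})^{\top}$ together with the orthogonality (\ref{ortho_cond}), one checks that $(\sigma^{\mathcal{S}})^{-1}=(\sigma^{+},(\sigma^{F})^{+})$, so the first $m$ rows of $(\sigma^{\mathcal{S},-1})^{\top}$ equal $(\sigma^{+})^{\top}$, and moreover $(\sigma^{+})^{\top}\theta_t^{u}=0$ since $\sigma(\sigma^{F})^{\top}=0$. This yields $\pi_t=(\sigma^{+})^{\top}\theta^{h}(t,Y_t)+(\sigma^{+})^{\top}\psi_t/(\xi_t^{\mathcal{S}}X_t)$, manifestly independent of the arbitrary $\sigma^{F}$, in line with the least favorable completion.

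The bulk of the work is then computing $\psi_t=E_t[D_tZ_T]$ through the Clark--Ocone formula. I would Malliavin-differentiate $\xi_s^{\mathcal{S}}$ using (\ref{def_Xit_incomp}) and decompose $D_t\log\xi_s^{\mathcal{S}}$ into three pieces: the diagonal (direct) term $-\theta_t^{\mathcal{S}}$ coming from differentiating the stochastic integral at $v=t$, the interest-rate terms $-\int_t^s D_t r_v\,dv$, and the price-of-risk terms $-\int_t^s(D_t\theta_v^{\mathcal{S}})^{\top}\,dW_v$ and $-\int_t^s(D_t\theta_v^{\mathcal{S}})^{\top}\theta_v^{\mathcal{S}}\,dv$. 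Feeding these through the chain rule with $\partial I^{U}/\partial y$ and $\partial I^{u}/\partial y$ (which produces the $\Upsilon$ quantities of (\ref{Gamma_u2})) and projecting by $(\sigma^{+})^{\top}$, the diagonal term is identified with the mean-variance component $\pi^{mv}$ and its $\mathcal{Q}_{t,T}$ factor in (\ref{pimv_thm1}), the $D_t r$ terms with the interest-rate hedge $\pi^{r}$ and $\mathcal{H}^{r}_{t,T}$ in (\ref{pir_thm1}), and the $D_t\theta^{\mathcal{S}}$ terms with the price-of-risk hedge $\pi^{\theta}$ and $\mathcal{H}^{\theta}_{t,T}$ in (\ref{pitheta_thm1}), matching the random variables defined in Proposition \ref{corollary_expression}.

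Finally, for the split (\ref{thm_decomp2}) I would write $\theta_v^{\mathcal{S}}=\theta^{h}(v,Y_v)+\theta_v^{u}$, so $D_t\theta_v^{\mathcal{S}}=D_t\theta^{h}(v,Y_v)+D_t\theta^{u}(v,Y_v,\lambda_v^{\ast};T)$. The first summand produces $\pi^{h}$ with $\mathcal{H}^{h}_{t,T}$ in (\ref{pit_thm1}); for the second, the differentiability assumption on $\theta^{u}$ and the chain rule give $D_t\theta_v^{u}=\partial_y\theta^{u}\,D_tY_v+\partial_{\lambda}\theta^{u}\,D_t\lambda_v^{\ast}$ with $D_t\lambda_v^{\ast}=\lambda_0^{\ast}D_t\xi_v^{\mathcal{S}}$, yielding $\pi^{u}$ with $\mathcal{H}^{u}_{t,T}$ in (\ref{pit_thm2}). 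The hard part is precisely this last term: because $D_t\lambda_v^{\ast}$ reintroduces $D_t\xi_v^{\mathcal{S}}$, which itself contains $\theta^{u}$ through (\ref{def_Xit_incomp}) and Lemma \ref{lemma:thetau_represent}, the Malliavin derivative $D_t\theta_v^{u}$ is self-referential, encoding the implicit wealth channel. Disentangling this recursion, verifying the resulting expression is well-defined, and confirming its independence of $\sigma^{F}$, is the central obstacle, and it is exactly what isolates the novel component $\pi^{u}$ arising from the interaction of market incompleteness with wealth-dependent utility.
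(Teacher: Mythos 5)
Your proposal is correct and follows essentially the same route as the paper: martingale representation of $\xi_t^{\mathcal{S}}X_t+\int_0^t\xi_s^{\mathcal{S}}c_s\,ds$ in the least-favorably completed market, extraction of the real-asset block via the orthogonality $\sigma(\sigma^F)^{\top}=0$, and Clark--Ocone applied to the Malliavin derivative of $\xi_{t,s}^{\mathcal{S}}$, with the wealth channel entering through $\mathcal{D}_t\theta^u_s$ via $\lambda_s^{\ast}=\lambda_0^{\ast}\xi_s^{\mathcal{S}}$. Two steps you gloss over do work out but deserve a line each: the explicit $(\sigma^+)^{\top}\theta^h$ term combines with the $\Gamma$-part of the diagonal contribution through the budget identity $X_t=E_t[\mathcal{G}_{t,T}(\lambda_t^{\ast})]$ to leave exactly $-\theta^hE_t[\mathcal{Q}_{t,T}]/X_t$, and the split $H^{\theta}_{t,s}=H^h_{t,s}+H^u_{t,s}$ holds pathwise because the cross terms $\int_t^s\big[(\mathcal{D}_tY_v)\nabla\theta^h\,\theta^u_v+(\mathcal{D}_t\theta^u_v)\theta^h_v\big]dv=\int_t^s\mathcal{D}_t\big(\theta^h_v\!\cdot\!\theta^u_v\big)dv$ vanish by the orthogonality $\theta^h_v\!\cdot\!\theta^u_v\equiv0$.
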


\begin{proof}
See Section \ref{Proof_piu_decomp} in the online supplementary material.
\end{proof}

In Proposition \ref{corollary_expression} below, we provide the explicit
expressions based on standard SDEs for the variables $\mathcal{Q}_{t,T}\left(
\lambda_{t}^{\ast}\right)  $, $\mathcal{H}_{t,T}^{r}\left(  \lambda_{t}^{\ast
}\right)  $, $\mathcal{H}_{t,T}^{\theta}\left(  \lambda_{t}^{\ast}\right)  $,
$\mathcal{H}_{t,T}^{h},$ and $\mathcal{H}_{t,T}^{u}\left(  \lambda_{t}^{\ast
}\right)  $ involved in Theorem \ref{thm_representation_new}. For this
purpose, we apply the representation of the individual-specific price of risk
$\theta_{v}^{u}$ in (\ref{thetau_representation_v}) to introduce the following
$\lambda_{t}^{\ast}-$parameterized version of $\theta_{s}^{u}:$%
\begin{equation}
\theta_{s}^{u}(\lambda_{t}^{\ast})=\theta^{u}\left(  s,Y_{s},\lambda_{s}%
^{\ast};T\right)  =\theta^{u}\left(  s,Y_{s},\lambda_{t}^{\ast}\xi
_{t,s}^{\mathcal{S}}(\lambda_{t}^{\ast});T\right)  ,
\label{thetau_representation_s}%
\end{equation}
where the second equality follows from the definition $\xi_{t,s}^{\mathcal{S}%
}=\xi_{s}^{\mathcal{S}}/\xi_{t}^{\mathcal{S}}$ and the relation $\lambda
_{t}^{\ast}=\lambda_{0}^{\ast}\xi_{t}^{\mathcal{S}}$, i.e., $\lambda_{s}%
^{\ast}=\lambda_{0}^{\ast}\xi_{s}^{\mathcal{S}}=\lambda_{t}^{\ast}\xi
_{t,s}^{\mathcal{S}};$ $\xi_{t,s}^{\mathcal{S}}(\lambda_{t}^{\ast})$ in
(\ref{thetau_representation_s}) denotes the $\lambda_{t}^{\ast}-$parameterized
version of the state price density, which evolves according to%
\begin{equation}
d\xi_{t,s}^{\mathcal{S}}(\lambda_{t}^{\ast})=-\xi_{t,s}^{\mathcal{S}}%
(\lambda_{t}^{\ast})[r(s,Y_{s})ds+\theta_{s}^{\mathcal{S}}(\lambda_{t}^{\ast
})^{\top}dW_{s}], \label{thm1_SDE_xi_incomp_explicit}%
\end{equation}
with%
\begin{equation}
\theta_{s}^{\mathcal{S}}(\lambda_{t}^{\ast})=\theta_{s}^{h}(t,Y_{t}%
)+\theta^{u}\left(  s,Y_{s},\lambda_{t}^{\ast}\xi_{t,s}^{\mathcal{S}}%
(\lambda_{t}^{\ast});T\right)  . \label{theta_S_lambda}%
\end{equation}

We introduce these $\lambda_{t}^{\ast}-$dependent versions
(\ref{thetau_representation_s})--(\ref{theta_S_lambda}) to highlight the
impact from the investor's wealth, as $\lambda_{t}^{\ast}$ depends on the
current wealth level $X_{t}$ via (\ref{XtGt}), i.e., $X_{t}=E_{t}%
[\mathcal{G}_{t,T}(\lambda_{t}^{\ast})]$. In addition, we see that
$\lambda_{t}^{\ast}$ can be fully determined by the information at time $t$.
Thus, these $\lambda_{t}^{\ast}$--dependent versions clearly reveal the
temporal structure of the optimal policy by isolating the information
available at time $t$.

\begin{proposition}
\label{corollary_expression}The quantities $\mathcal{Q}_{t,T}\left(
\lambda_{t}^{\ast}\right)  $, $\mathcal{H}_{t,T}^{r}(\lambda_{t}^{\ast})$, and
$\mathcal{H}_{t,T}^{\theta}(\lambda_{t}^{\ast})$ in (\ref{pimv_thm1}) --
(\ref{pitheta_thm1}) are given by%
\begin{subequations}
\begin{align}
&  \mathcal{Q}_{t,T}(\lambda_{t}^{\ast})=\Upsilon_{t,T}^{U}(\lambda_{t}^{\ast
})+\int_{t}^{T}\Upsilon_{t,s}^{u}(\lambda_{t}^{\ast})ds,\label{coro_Q}\\
&  \mathcal{H}_{t,T}^{r}(\lambda_{t}^{\ast})=(\Gamma_{t,T}^{U}(\lambda
_{t}^{\ast})+\Upsilon_{t,T}^{U}(\lambda_{t}^{\ast}))H_{t,T}^{r}+\int_{t}%
^{T}(\Gamma_{t,s}^{u}(\lambda_{t}^{\ast})+\Upsilon_{t,s}^{u}(\lambda_{t}%
^{\ast}))H_{t,s}^{r}ds,\label{coro_Hr}\\
&  \mathcal{H}_{t,T}^{\theta}(\lambda_{t}^{\ast})=(\Gamma_{t,T}^{U}%
(\lambda_{t}^{\ast})+\Upsilon_{t,T}^{U}(\lambda_{t}^{\ast}))H_{t,T}^{\theta
}(\lambda_{t}^{\ast})+\int_{t}^{T}(\Gamma_{t,s}^{u}(\lambda_{t}^{\ast
})+\Upsilon_{t,s}^{u}(\lambda_{t}^{\ast}))H_{t,s}^{\theta}(\lambda_{t}^{\ast
})ds, \label{coro_Ht}%
\end{align}
where $\Gamma_{t,T}^{U}(\lambda_{t}^{\ast})$, $\Gamma_{t,s}^{u}(\lambda
_{t}^{\ast}),$ $\Upsilon_{t,T}^{U}(\lambda_{t}^{\ast})$, and $\Upsilon
_{t,s}^{u}(\lambda_{t}^{\ast})$ are defined in (\ref{Gamma_u1}) and
(\ref{Gamma_u2}) except for replacing the relative state price density
$\xi_{t,s}^{\mathcal{S}}$ by the $\lambda_{t}^{\ast}-$dependent version
$\xi_{t,s}^{\mathcal{S}}(\lambda_{t}^{\ast})$. $H_{t,s}^{r}$ in (\ref{coro_Hr}%
) is a $d$--dimensional vector-valued processes evolving according to SDEs:%
\end{subequations}
\begin{equation}
dH_{t,s}^{r}=\left(  \mathcal{D}_{t}Y_{s}\right)  \nabla r(s,Y_{s})ds,
\label{thm1_SDE_Hr}%
\end{equation}
for $t\leq s\leq T,$ with initial value $H_{t,t}^{r}=0_{d}.$ Here and
throughout this paper, $\nabla$ denotes the gradient of functions with respect
to the arguments in the place of $Y_{s}$\footnote{For an $m-$dimensional
vector-valued function $f(t,y)=(f_{1}(t,y),f_{2}(t,y),\cdots,f_{m}(t,y))$, its
gradient is an $n\times m$ matrix with each element given by $\left[  \nabla
f(t,y)\right]  _{ij}=\partial f_{j}/\partial y_{i}(t,y),$ for $i=1,2,\ldots,n$
and $j=1,2,\ldots,m.$}, and $\mathcal{D}_{t}$ denotes the Malliavin derivative
at time $t$. Specifically, $\mathcal{D}_{t}Y_{s}$ is a $d\times n$ matrix with
$\mathcal{D}_{t}Y_{s}=((\mathcal{D}_{1t}Y_{s})^{\top},(\mathcal{D}_{2t}%
Y_{s})^{\top},\cdots,(\mathcal{D}_{dt}Y_{s})^{\top})^{\top},$ where each
$\mathcal{D}_{it}Y_{s}$ is an $n$--dimensional column vector satisfying the
SDE:
\begin{equation}
d\mathcal{D}_{it}Y_{s}=\left(  \nabla\alpha(s,Y_{s})\right)  ^{\top
}\mathcal{D}_{it}Y_{s}ds+\sum_{j=1}^{d}\left(  \nabla\beta_{j}(s,Y_{s}%
)\right)  ^{\top}\mathcal{D}_{it}Y_{s}dW_{js},\text{ }\lim_{s\rightarrow
t}\mathcal{D}_{it}Y_{s}=\beta_{i}(t,Y_{t}), \label{thm1_SDE_DYM}%
\end{equation}
Besides, $H_{t,s}^{\theta}(\lambda_{t}^{\ast})$ in (\ref{coro_Ht}) follows the
SDE\footnote{Here, $\mathcal{D}_{t}\theta_{s}^{u}(\lambda_{t}^{\ast})$ is a
$d\times d$ matrix with $\mathcal{D}_{t}\theta_{s}^{u}(\lambda_{t}^{\ast
})=((\mathcal{D}_{1t}\theta_{s}^{u}(\lambda_{t}^{\ast}))^{\top},(\mathcal{D}%
_{2t}\theta_{s}^{u}(\lambda_{t}^{\ast}))^{\top},\cdots,(\mathcal{D}_{dt}%
\theta_{s}^{u}(\lambda_{t}^{\ast}))^{\top})^{\top},$ where each $\mathcal{D}%
_{it}\theta_{s}^{u}(\lambda_{t}^{\ast})$ is a $d$--dimensional column
vector.}
\begin{equation}
dH_{t,s}^{\theta}(\lambda_{t}^{\ast})=\big[ \left(  \mathcal{D}_{t}%
Y_{s}\right)  \nabla\theta^{h}(s,Y_{s})+\mathcal{D}_{t}\theta_{s}^{u}%
(\lambda_{t}^{\ast})\big] (\theta_{s}^{\mathcal{S}}(\lambda_{t}^{\ast
})ds+dW_{s}), \label{thm1_SDE_Htheta}%
\end{equation}
with initial value $H_{t,t}^{\theta}(\lambda_{t}^{\ast})=0_{d}$. Finally, the
terms $\mathcal{H}_{t,T}^{h}$ and $\mathcal{H}_{t,T}^{u}(\lambda_{t}^{\ast})$
in (\ref{pit_thm1})--(\ref{pit_thm2}) are defined in the same way as that for
$\mathcal{H}_{t,T}^{\theta}(\lambda_{t}^{\ast})$ in (\ref{coro_Ht}) except for
replacing $H_{t,s}^{\theta}(\lambda_{t}^{\ast})$ by $H_{t,s}^{h}$ and
$H_{t,s}^{u}(\lambda_{t}^{\ast})$ for $t\leq s\leq T$, which are both
$d$--dimensional vector-valued processes evolving according to SDEs:
\begin{subequations}
\begin{align}
dH_{t,s}^{h}  &  =\left(  \mathcal{D}_{t}Y_{s}\right)  \nabla\theta
^{h}(s,Y_{s})(\theta^{h}(s,Y_{s})ds+dW_{s}),\label{dH_thetah}\\
dH_{t,s}^{u}(\lambda_{t}^{\ast})  &  =\mathcal{D}_{t}\theta_{s}^{u}%
(\lambda_{t}^{\ast})(\theta_{s}^{u}(\lambda_{t}^{\ast})ds+dW_{s}),
\label{dH_thetau1}%
\end{align}
with initial values $H_{t,t}^{h}=H_{t,t}^{u}(\lambda_{t}^{\ast})=0_{d}$.
\end{subequations}
\end{proposition}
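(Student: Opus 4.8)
The plan is to recover the portfolio as the diffusion coefficient of the optimal wealth and then read off the five pieces by expanding the underlying Malliavin derivatives. Set $\mathcal{F}:=\xi_{T}^{\mathcal{S}}I^{U}(T,\lambda_{0}^{\ast}\xi_{T}^{\mathcal{S}})+\int_{0}^{T}\xi_{s}^{\mathcal{S}}I^{u}(s,\lambda_{0}^{\ast}\xi_{s}^{\mathcal{S}})\,ds$. Using $\lambda_{t}^{\ast}=\lambda_{0}^{\ast}\xi_{t}^{\mathcal{S}}$ and the binding budget constraint behind \eqref{XtGt}, the process $M_{t}:=\xi_{t}^{\mathcal{S}}X_{t}+\int_{0}^{t}\xi_{s}^{\mathcal{S}}c_{s}\,ds=E_{t}[\mathcal{F}]$ is a martingale. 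First I would apply the Clark--Ocone formula to get its diffusion coefficient as $E_{t}[\mathcal{D}_{t}\mathcal{F}]$, and separately apply It\^{o}'s formula to $\xi_{t}^{\mathcal{S}}X_{t}$ (using \eqref{def_Xit_incomp}, \eqref{opt_constraint}, and $\sigma_{t}\theta^{h}(t,Y_{t})=\mu_{t}-r_{t}1_{m}$) to get the same coefficient as $\xi_{t}^{\mathcal{S}}X_{t}\left(\sigma_{t}^{\top}\pi_{t}-\theta_{t}^{\mathcal{S}}\right)$. Equating the two and inverting $\sigma_{t}^{\top}$ by its Moore--Penrose inverse gives $\pi_{t}=(\sigma_{t}^{+})^{\top}\left[\theta_{t}^{\mathcal{S}}+E_{t}[\mathcal{D}_{t}\mathcal{F}]/(\xi_{t}^{\mathcal{S}}X_{t})\right]$; the investor-specific direction disappears from the first summand because the orthogonality \eqref{ortho_cond} forces $\sigma_{t}\theta_{t}^{u}=0$, so $(\sigma_{t}^{+})^{\top}\theta_{t}^{\mathcal{S}}=(\sigma_{t}^{+})^{\top}\theta^{h}(t,Y_{t})$.

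Next I would compute $\mathcal{D}_{t}\mathcal{F}$. The product and chain rules reduce each summand to a wealth-weighting factor times $\mathcal{D}_{t}\log\xi_{s}^{\mathcal{S}}$, namely $\mathcal{D}_{t}[\xi_{s}^{\mathcal{S}}I^{u}]=\xi_{t}^{\mathcal{S}}\left(\Gamma_{t,s}^{u}+\Upsilon_{t,s}^{u}\right)\mathcal{D}_{t}\log\xi_{s}^{\mathcal{S}}$ after the change of base from $\lambda_{0}^{\ast}$ to $\lambda_{t}^{\ast}$, with $\Gamma,\Upsilon$ as in \eqref{Gamma_u1}--\eqref{Gamma_u2}, and likewise for the terminal term. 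Differentiating the exponent in \eqref{def_Xit_incomp}, the drift part contributes $-\int_{t}^{s}(\mathcal{D}_{t}Y_{v})\nabla r\,dv=-H_{t,s}^{r}$, while the stochastic-integral part, through the Malliavin rule $\mathcal{D}_{t}\int(\theta_{v}^{\mathcal{S}})^{\top}dW_{v}=\theta_{t}^{\mathcal{S}}+\int_{t}^{s}\mathcal{D}_{t}\theta_{v}^{\mathcal{S}}\,dW_{v}$ together with the quadratic part, combines into $-\theta_{t}^{\mathcal{S}}-\int_{t}^{s}(\mathcal{D}_{t}\theta_{v}^{\mathcal{S}})(\theta_{v}^{\mathcal{S}}\,dv+dW_{v})=-\theta_{t}^{\mathcal{S}}-H_{t,s}^{\theta}$; hence $\mathcal{D}_{t}\log\xi_{s}^{\mathcal{S}}=-\theta_{t}^{\mathcal{S}}-H_{t,s}^{r}-H_{t,s}^{\theta}$. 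Substituting back, the $\mathcal{F}_{t}$-measurable $\theta_{t}^{\mathcal{S}}$ factors out of $E_{t}$ and, after dividing by $X_{t}$, produces the mean-variance term with $E_{t}[\mathcal{Q}_{t,T}]$ from \eqref{coro_Q}; the $H^{r}$ and $H^{\theta}$ pieces reassemble into $E_{t}[\mathcal{H}_{t,T}^{r}]$ and $E_{t}[\mathcal{H}_{t,T}^{\theta}]$ of \eqref{coro_Hr}--\eqref{coro_Ht}, giving \eqref{pimv_thm1}--\eqref{pitheta_thm1}.

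The stated SDEs then follow by differentiating these path functionals in $s$. The first-variation equation \eqref{thm1_SDE_DYM} for $\mathcal{D}_{t}Y_{s}$ with $\mathcal{D}_{t}Y_{t}=\beta(t,Y_{t})$ is a standard Malliavin-calculus fact; differentiating $H^{r}$ and $H^{\theta}$ in $s$ and using $\mathcal{D}_{t}\theta_{s}^{\mathcal{S}}=(\mathcal{D}_{t}Y_{s})\nabla\theta^{h}+\mathcal{D}_{t}\theta_{s}^{u}$ yields \eqref{thm1_SDE_Hr} and \eqref{thm1_SDE_Htheta}. To reach the split $\pi^{\theta}=\pi^{h}+\pi^{u}$ I would separate this Malliavin derivative into its market and investor-specific parts and define $H^{h}$, $H^{u}$ through \eqref{dH_thetah}--\eqref{dH_thetau1}; the orthogonality $\theta^{h}\cdot\theta^{u}\equiv0$ inherited from \eqref{ortho_cond} makes $(\mathcal{D}_{t}Y_{s}\nabla\theta^{h})\theta^{u}=-(\mathcal{D}_{t}Y_{s}\nabla\theta^{u})\theta^{h}$, so the pointwise cross terms generated by the different drift integrators cancel against one another, leaving only the contribution of the wealth channel inside $\mathcal{D}_{t}\theta_{s}^{u}$.

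The hard part is exactly that investor-specific piece. Because $\theta_{s}^{u}=\theta^{u}(s,Y_{s},\lambda_{t}^{\ast}\xi_{t,s}^{\mathcal{S}};T)$ depends on wealth through the multiplier, the chain rule gives $\mathcal{D}_{t}\theta_{s}^{u}=(\mathcal{D}_{t}Y_{s})\nabla_{y}\theta^{u}+\lambda_{s}^{\ast}\partial_{\lambda}\theta^{u}\,\mathcal{D}_{t}\log\xi_{t,s}^{\mathcal{S}}$, and since $\mathcal{D}_{t}\log\xi_{t,s}^{\mathcal{S}}=-H_{t,s}^{r}-H_{t,s}^{\theta}$ itself contains $H^{\theta}$ and thus $\mathcal{D}_{t}\theta^{u}$, the relations for $H^{\theta}$ and $\mathcal{D}_{t}\theta^{u}$ form a coupled linear system driven by $Y$. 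I expect the main obstacle to be establishing that this coupled system is well posed under the differentiability of $\theta^{u}$ from Lemma \ref{lemma:thetau_represent}, correctly isolating the wealth-channel term as the genuinely new (fifth) component that survives only when the market is incomplete and the utility is wealth-dependent, and justifying the interchange of the Malliavin derivative with the time integrals and the conditional expectation, which requires the Novikov/integrability conditions assumed for $\xi^{\mathcal{S}}$ together with integrability of the first-variation and inverse-marginal-utility factors. Once this coupling is handled, the algebraic identification of $\mathcal{Q}$, $\mathcal{H}^{r}$, $\mathcal{H}^{\theta}$, $\mathcal{H}^{h}$, and $\mathcal{H}^{u}$ is routine.
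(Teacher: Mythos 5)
Your plan is correct and follows essentially the route the paper itself takes: write $\xi_{t}^{\mathcal{S}}X_{t}+\int_{0}^{t}\xi_{s}^{\mathcal{S}}c_{s}\,ds$ as a conditional expectation of $\mathcal{F}$, extract the portfolio from the diffusion coefficient via the Clark--Ocone formula, and expand $\mathcal{D}_{t}\log\xi_{s}^{\mathcal{S}}=-\theta_{t}^{\mathcal{S}}-H_{t,s}^{r}-H_{t,s}^{\theta}$ to identify $\mathcal{Q}$, $\mathcal{H}^{r}$, $\mathcal{H}^{\theta}$, with the split $H^{\theta}=H^{h}+H^{u}$ following from $\mathcal{D}_{t}\bigl((\theta^{h})^{\top}\theta^{u}\bigr)=0$ exactly as you argue. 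Two points to tighten in the write-up: the stated form of $\pi^{mv}$ with coefficient $-\theta^{h}E_{t}[\mathcal{Q}_{t,T}(\lambda_{t}^{\ast})]/X_{t}$ only emerges after explicitly cancelling the leading $\theta_{t}^{\mathcal{S}}$ against $E_{t}[\mathcal{G}_{t,T}(\lambda_{t}^{\ast})]/X_{t}=1$, and in your final paragraph $\mathcal{D}_{t}\log\xi_{t,s}^{\mathcal{S}}(\lambda_{t}^{\ast})$ should also carry the boundary term $-\theta_{t}^{\mathcal{S}}$ (consistent with (\ref{dH_thetauLm})), though that omission only affects the further decomposition in Proposition \ref{prop:piu_decomp}, not the present statement.
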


\begin{proof}
See Section S.1 in the online supplementary material.
\end{proof}

The explicit structure of the optimal policy, with four components given in
(\ref{thm_decomp1}) and (\ref{thm_decomp2}) extends the usual complete-market
decomposition with only three components; see, e.g., \cite{merton71} and
\cite{Detemple_2003_JF}. We now analyze the economic implication of each
component. The component $\pi^{mv}(t,X_{t},Y_{t})$ in (\ref{pimv_thm1}) is the
mean-variance component, as reflected by the market price of risk $\theta
^{h}(t,Y_{t})$ defined in (\ref{def_thetah}) -- a mean-variance trade-off for
the risky assets. The second component $\pi^{r}(t,X_{t},Y_{t})$ in
(\ref{pir_thm1}) is for hedging the uncertainty in the interest rate, as seen
from the gradient of interest rate $\nabla r$ in (\ref{thm1_SDE_Hr}). Finally,
the component $\pi^{\theta}(t,X_{t},Y_{t})$ given by (\ref{pitheta_thm1})
hedges the uncertainty in the price of risk. By (\ref{thm_decomp2}), it can be
further decomposed into two components $\pi^{h}(t,X_{t},Y_{t})$ and $\pi
^{u}(t,X_{t},Y_{t})$. The two components hedge the uncertainty in market price
of risk $\theta^{h}$ and investor-specific price of risk $\theta^{u}$, as seen
by the gradient $\nabla\theta^{h}$ in (\ref{dH_thetah}) and the Malliavin term
$\mathcal{D}_{t}\theta_{s}^{u}(\lambda_{t}^{\ast})$ in (\ref{dH_thetau1}). As
we will discuss in Section \ref{section: complete_representation}, the last
component $\pi^{u}(t,X_{t},Y_{t})$ is absent under complete market models.

In particular, as a natural analogue to a classical derivative, we can
intuitively understand the Malliavin derivative $\mathcal{D}_{t}$ as the
sensitivity to the underlying Brownian motion $W_{t};$ see Appendix D of
\cite{Detemple_2003_JF} for an accessible survey of Malliavin calculus in
finance.\footnote{We can view Malliavin calculus as the stochastic calculus of
variation in the space of sample paths. Malliavin calculus has proven its
important role in financial economics through its merit in solving portfolio
choice problems, see, e.g., \cite{OconeKaratzas1991}, \cite{Detemple_2003_JF},
\cite{Detemple_Rindisbacher_2005_MF}, and \cite{DetempleRindisbacher2010}.
See, for example, \cite{Nualart_MC_book} for a book-length discussion of the
theory of Malliavin calculus.} Indeed, throughout this paper, the Malliavin
derivative on the state variable $\mathcal{D}_{t}Y_{s}$ can be viewed as a
standard multidimensional diffusion process with dynamics given explicitly in
(\ref{thm1_SDE_DYM}). On the other hand, the Malliavin derivative on the
investor-specific price of risk $\mathcal{D}_{t}\theta_{s}^{u}(\lambda
_{t}^{\ast})$ clearly depends on the structure of $\theta_{s}^{u}(\lambda
_{t}^{\ast})$ given in (\ref{thetau_representation_s}). Specifically, the
underlying Brownian motion $W_{t}$ impacts the value of $\theta_{s}%
^{u}(\lambda_{t}^{\ast})=\theta^{u}\left(  s,Y_{s},\lambda_{s}^{\ast
};T\right)  $ via both the state variable $Y_{s}$ and the multiplier
$\lambda_{s}^{\ast}$. This structure introduces additional variation in the
investor-specific price of risk $\theta^{u}(s,Y_{s},\lambda_{s}^{\ast};T)$,
which is absent in the market price of risk $\theta^{h}(s,Y_{s})$.

Our decompositions in Theorem \ref{thm_representation_new} and Proposition
\ref{corollary_expression} provide an explicit representation of the optimal
policy in general incomplete market models. In addition, they clearly reveal
how investor's wealth $X_{t}$ impacts the optimal policy, i.e., the channels
for the wealth-dependent effect. It can be seen by checking how the time-$t$
multiplier $\lambda_{t}^{\ast}$ gets involved in the optimal policy, as it is
determined by $X_{t}$ according to (\ref{XtGt}). First, the time-$t$
multiplier $\lambda_{t}^{\ast}$ directly appears in the optimal policies
(\ref{pimv_thm1}) -- (\ref{pitheta_thm1}) through the functions $\Gamma
^{\iota}_{t,T}(\lambda_{t}^{\ast})$ and $\Upsilon^{\iota}_{t,T}(\lambda
_{t}^{\ast})$ for $\iota\in\{u,U\}$, which are given by (\ref{Gamma_u1}) and
(\ref{Gamma_u2}). Second, as shown by (\ref{thm1_SDE_xi_incomp_explicit}) and
(\ref{thm1_SDE_Htheta}), the multiplier $\lambda_{t}^{\ast}$ affects the
dynamics of the building blocks $\xi^{\mathcal{S}}_{t,s}(\lambda_{t}^{\ast})$
and $H^{\theta}_{t,s}(\lambda_{t}^{\ast})$. It introduces an implicit impact
of the wealth level on the optimal policy. By (\ref{theta_S_lambda}), such
implicit impact essentially stems from the special structure of the
investor-specific price of risk $\theta^{u}_{s}(\lambda_{t}^{\ast})$, which
only appears in incomplete market models. Thus, the market incompleteness
leads to additional channels for the wealth-dependent effect, i.e., via the
structure of the investor-specific price of risk.

Now, we still face one remaining difficulty in solving for the optimal policy
-- the investor-specific price of risk $\theta^{u}\left(  v,Y_{v},\lambda
_{v}^{\ast};T\right)  $ in (\ref{thetau_representation_v}) is still
undetermined. For this task, we establish a novel equation for characterizing
$\theta^{u}\left(  v,Y_{v},\lambda_{v}^{\ast};T\right)  $ in Theorem
\ref{thm_integral_equation} below. To obtain $\theta^{u}\left(  v,Y_{v}%
,\lambda_{v}^{\ast};T\right)  $, we first express the optimal policy for the
fictitious assets $\pi_{v}^{F}$ using representations similar to those in
Theorem \ref{thm_representation_new}. Then, we combine the least favorable
completion principle (\ref{Pi_F_0}) and the orthogonal condition
(\ref{ortho_cond}). The derivation of the proper $\theta^{u}\left(
v,Y_{v},\lambda_{v}^{\ast};T\right)  $ again shows that it is essential to
understand the correct structure of investor-specific price of risk in
(\ref{thetau_representation_v}) and establish the explicit decomposition of
optimal policy as in Theorem \ref{thm_representation_new}.

\begin{theorem}
\label{thm_integral_equation}The proper investor-specific price of risk\emph{
}$\theta_{v}^{u}=\theta^{u}\left(  v,Y_{v},\lambda_{v}^{\ast};T\right)  $ for
the least favorable completion satisfies the following $d-$dimensional
equation:
\begin{equation}
\theta^{u}\left(  v,Y_{v},\lambda_{v}^{\ast};T\right)  =\frac{\sigma
(v,Y_{v})^{+}\sigma(v,Y_{v})-I_{d}}{E[\mathcal{Q}_{v,T}(\lambda_{v}^{\ast
})|Y_{v},\lambda_{v}^{\ast}]}\times(E[\mathcal{H}_{v,T}^{r}(\lambda_{v}^{\ast
})|Y_{v},\lambda_{v}^{\ast}]+E[\mathcal{H}_{v,T}^{\theta}(\lambda_{v}^{\ast
})|Y_{v},\lambda_{v}^{\ast}]), \label{integral_theta}%
\end{equation}
for $0\leq v\leq T$. Here $I_{d}$ denotes the $d$--dimensional identity
matrix; $\sigma(v,Y_{v})^{+}$ is given by $\sigma(v,Y_{v})^{+}=\sigma
(v,Y_{v})^{\top}(\sigma(v,Y_{v})\sigma(v,Y_{v})^{\top})^{-1}$.
\end{theorem}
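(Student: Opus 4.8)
The plan is to obtain the characterizing equation (\ref{integral_theta}) by applying the least favorable constraint (\ref{Pi_F_0}) to the optimal weights on the \emph{fictitious} assets. First I would write down $\pi_v^F$ in the completed market. Since that market is genuinely complete, with the nonsingular volatility matrix $\sigma^{\mathcal{S}}(t,y)=(\sigma(t,y)^{\top},\sigma^F(t,y)^{\top})^{\top}$, the full optimal portfolio $(\pi_v^{\top},(\pi_v^F)^{\top})^{\top}$ admits the same representation as in Theorem \ref{thm_representation_new}, but with $(\sigma^{\mathcal{S}})^{-1}$ in place of $\sigma^+$ and the total price of risk $\theta_v^{\mathcal{S}}=\theta^h+\theta^u$ of (\ref{mpr_total}) in the mean-variance piece. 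The crucial algebraic observation is that the orthogonal condition (\ref{ortho_cond}) forces the block factorization $(\sigma^{\mathcal{S}})^{-1}=(\sigma^+\mid(\sigma^F)^+)$: indeed $\sigma\sigma^+=I_m$, $\sigma^F(\sigma^F)^+=I_{d-m}$, while $\sigma(\sigma^F)^+=0$ and $\sigma^F\sigma^+=0$ follow from (\ref{ortho_cond}). Transposing, the top $m$ rows reproduce the real-asset policy of Theorem \ref{thm_representation_new}, and the bottom $d-m$ rows give
\begin{equation*}
\pi_v^F=-((\sigma^F(v,Y_v))^+)^{\top}\big[\theta_v^{\mathcal{S}}E_v[\mathcal{Q}_{v,T}]+E_v[\mathcal{H}_{v,T}^r]+E_v[\mathcal{H}_{v,T}^{\theta}]\big]/X_v,
\end{equation*}
where by the Markov structure of $(Y_s,\xi_{v,s}^{\mathcal{S}}(\lambda_v^*))$ the conditioning $E_v[\cdot]$ coincides with $E[\cdot\mid Y_v,\lambda_v^*]$.

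Next I would simplify the mean-variance term using that $\theta^h$ lies in the range of $\sigma^{\top}$ and $\theta^u$ in the range of $(\sigma^F)^{\top}$ (each is the output of the corresponding Moore--Penrose inverse in (\ref{def_thetah})--(\ref{def_thetau})). Then (\ref{ortho_cond}) yields $((\sigma^F)^+)^{\top}\theta^h=(\sigma^F(\sigma^F)^{\top})^{-1}\sigma^F\theta^h=0$, so that $((\sigma^F)^+)^{\top}\theta_v^{\mathcal{S}}=((\sigma^F)^+)^{\top}\theta_v^u$. Symmetrically $(\sigma^+)^{\top}\theta^u=0$, which is exactly what collapses the real-asset mean-variance term to $\theta^h$ alone in (\ref{pimv_thm1}), providing a useful consistency check.

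Then I would impose the least favorable constraint $\pi_v^F\equiv0_{d-m}$. Since $((\sigma^F)^+)^{\top}=(\sigma^F(\sigma^F)^{\top})^{-1}\sigma^F$ has full row rank, $((\sigma^F)^+)^{\top}v=0$ is equivalent to $\sigma^F v=0$, i.e.\ to $v$ lying in the range of $\sigma^{\top}$. Writing the complementary orthogonal projections $P=\sigma^+\sigma$ and $P^F=(\sigma^F)^+\sigma^F$, condition (\ref{ortho_cond}) together with nonsingularity of $\sigma^{\mathcal{S}}$ gives $P+P^F=I_d$, so the constraint reads $(I_d-\sigma^+\sigma)v=0$. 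Applying this to $v=\theta_v^uE_v[\mathcal{Q}_{v,T}]+E_v[\mathcal{H}_{v,T}^r]+E_v[\mathcal{H}_{v,T}^{\theta}]$, and using $(I_d-\sigma^+\sigma)\theta_v^u=\theta_v^u$ (as $\theta_v^u$ sits in the range of $(\sigma^F)^{\top}$), I obtain
\begin{equation*}
\theta_v^u\,E_v[\mathcal{Q}_{v,T}]+(I_d-\sigma^+\sigma)\big(E_v[\mathcal{H}_{v,T}^r]+E_v[\mathcal{H}_{v,T}^{\theta}]\big)=0.
\end{equation*}
Solving for $\theta_v^u$, dividing by the scalar $E_v[\mathcal{Q}_{v,T}]$, and rewriting $I_d-\sigma^+\sigma=-(\sigma^+\sigma-I_d)$ produces exactly (\ref{integral_theta}), with the conditional expectations $E[\cdot\mid Y_v,\lambda_v^*]$ as stated.

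I expect the main obstacle to be the very first step: rigorously establishing that the completed-market optimal policy splits into real and fictitious blocks sharing the \emph{identical} hedging demands $\mathcal{Q}_{v,T}$, $\mathcal{H}_{v,T}^r$, $\mathcal{H}_{v,T}^{\theta}$, with only the premultiplying pseudoinverse changing. This amounts to re-running the Clark--Ocone and martingale-representation derivation of Theorem \ref{thm_representation_new} for the fictitious coordinates and verifying that the total-price-of-risk hedge $\mathcal{H}_{v,T}^{\theta}$, whose driving SDE (\ref{thm1_SDE_Htheta}) already carries both $\nabla\theta^h$ and $\mathcal{D}_t\theta_s^u$, is genuinely the same object in both blocks. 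I would also emphasize that (\ref{integral_theta}) is implicit rather than closed-form: $\theta^u$ reappears on the right through $\theta_s^{\mathcal{S}}$ and $\mathcal{D}_t\theta_s^u$ inside $\mathcal{H}_{v,T}^{\theta}$, so the statement is a fixed-point characterization of $\theta^u$ and no attempt to solve it is needed here.
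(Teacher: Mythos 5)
Your proposal is correct and follows essentially the same route the paper takes: express the fictitious-asset weights $\pi_v^F$ via the same conditional-expectation representation as in Theorem \ref{thm_representation_new} (with $((\sigma^F)^+)^{\top}$ replacing $(\sigma^+)^{\top}$), impose the least favorable constraint (\ref{Pi_F_0}), and use the orthogonality (\ref{ortho_cond}) together with $\sigma^+\sigma+(\sigma^F)^+\sigma^F=I_d$ to solve for $\theta^u$. The block inversion of $\sigma^{\mathcal{S}}$, the annihilation identities $((\sigma^F)^+)^{\top}\theta^h=0$ and $(\sigma^+)^{\top}\theta^u=0$, and the final projection argument all check out.
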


\begin{proof}
See Section S.2 in the online supplementary material.
\end{proof}

Theorem \ref{thm_integral_equation} shows that the investor-specific price of
risk $\theta^{u}\left(  v,Y_{v},\lambda_{v}^{\ast};T\right)  $ is determined
by a complex multidimensional equation system, which consists of equation
(\ref{integral_theta}) and the SDEs of $Y_{s},$ $\xi_{v,s}^{\mathcal{S}%
}(\lambda_{v}^{\ast}),$ $H_{v,s}^{r},$ $H_{v,s}^{\theta}(\lambda_{v}^{\ast}),$
and $\mathcal{D}_{iv}Y_{s}$ given in (\ref{SDE_Y}),
(\ref{thm1_SDE_xi_incomp_explicit}), (\ref{thm1_SDE_Hr}),
(\ref{thm1_SDE_Htheta}), and (\ref{thm1_SDE_DYM}). In the literature, other
types of differential equations are employed for characterizing optimal
portfolios in incomplete markets. For example, \cite{HePearson91} indirectly
relate the optimal policy and its equivalent local martingale measure to the
solution of a quasi-linear PDE. \cite{Detemple_Rindisbacher_2005_MF} derive
and solve a forward-backward SDE governing the shadow price under a model of
partially hedgeable Gaussian interest rate and CRRA utility (see also
\cite{DetempleRindisbacher2010} for generalizing such results). The
contribution of our characterization (\ref{integral_theta}) lies in that it
reveals the explicit structure of the investor-specific price of risk using
the representation (\ref{thetau_representation_v}). In particular, it clearly
shows the wealth level $X_{v}$ impacts the value of $\theta^{u}\left(
v,Y_{v},\lambda_{v}^{\ast};T\right)  $ via the multiplier $\lambda_{v}^{\ast}%
$, as shown by the SDEs of $\xi_{v,s}^{\mathcal{S}}(\lambda_{v}^{\ast})$ and
$H_{v,s}^{\theta}(\lambda_{v}^{\ast})$ in (\ref{thm1_SDE_xi_incomp_explicit})
and (\ref{thm1_SDE_Htheta}).

In addition, equation (\ref{integral_theta}) implies its terminal condition:
\begin{equation}
\theta^{u}\left(  T,Y_{T},\lambda_{T}^{\ast};T\right)  \equiv0_{d},
\label{integral_terminal}%
\end{equation}
which corresponds to the investor-specific price of risk with the investment
horizon shrinking to zero. It suggests that, under a vanishing investment
horizon, the returns of the fictitious assets $\mu_{t}^{F}$ employed in the
least favorable completion converge to the risk-free return $r(t,Y_{t})$. It
follows from $\theta^{u}\left(  v,Y_{v},\lambda_{v}^{\ast};T\right)
=\sigma^{F}(v,Y_{v})^{+}(\mu_{v}^{F}-r(v,Y_{v})1_{d-m})\ $ by
(\ref{def_thetau}) and (\ref{thetau_representation_v}). The terminal condition
(\ref{integral_terminal}) plays an important role in potential numerical
methods for solving the investor-specific price of risk $\theta^{u}\left(
v,Y_{v},\lambda_{v}^{\ast};T\right)  $. For example, in a simulation approach,
it can serve as the boundary condition for backward induction. It verifies
that the investor-specific price of risk indeed endogenously depends on the
investment horizon of each investor.

\subsection{Revealing wealth-related component under differentiability}

\label{section:differentiability}

In this section, we reveal a novel structural impact of the wealth-dependent
property on the optimal policy, under the additional assumption that the
investor-specific price of risk function $\theta^{u}(v,y,\lambda;T)$ is
differentiable in its arguments. As we will show in Section
\ref{section:hn_closedform}, this natural assumption holds for the classic
Heston-SV model.

\begin{proposition}
\label{prop:piu_decomp}\bigskip Under the assumption that the
investor-specific price of risk function $\theta^{u}(v,y,\lambda;T)$ is
differentiable in its arguments, we can decompose the component $\pi
^{u}(t,X_{t},Y_{t})$ as
\begin{equation}
\pi^{u}(t,X_{t},Y_{t})=\pi^{u,Y}(t,X_{t},Y_{t})+\pi^{u,\lambda}(t,X_{t}%
,Y_{t}), \label{piu_decomp}%
\end{equation}
where%
\begin{subequations}
\begin{align}
\pi^{u,Y}(t,X_{t},Y_{t})  &  =-(\sigma(t,Y_{t})^{+})^{\top}E_{t}%
[\mathcal{H}_{t,T}^{u,Y}(\lambda_{t}^{\ast})]/X_{t},\label{pi_uY}\\
\pi^{u,\lambda}(t,X_{t},Y_{t})  &  =-(\sigma(t,Y_{t})^{+})^{\top}%
E_{t}[\mathcal{H}_{t,T}^{u,\lambda}\left(  \lambda_{t}^{\ast}\right)  ]/X_{t}.
\label{pi_uLm}%
\end{align}
The terms $\mathcal{H}_{t,T}^{u,Y}(\lambda_{t}^{\ast})$ and $\mathcal{H}%
_{t,T}^{u,\lambda}(\lambda_{t}^{\ast})$ in (\ref{pi_uY})--(\ref{pi_uLm}) are
defined in the same way as that for $\mathcal{H}_{t,T}^{\theta}(\lambda
_{t}^{\ast})$ in (\ref{coro_Ht}) except for replacing $H_{t,s}^{\theta
}(\lambda_{t}^{\ast})$ by $H_{t,s}^{u,Y}(\lambda_{t}^{\ast})$ and
$H_{t,s}^{u,\lambda}(\lambda_{t}^{\ast})$ for $t\leq s\leq T$, which follow
SDEs:%
\end{subequations}
\begin{subequations}
\begin{align}
dH_{t,s}^{u,Y}(\lambda_{t}^{\ast})  &  =\left(  \mathcal{D}_{t}Y_{s}\right)
\nabla\theta^{u}(s,Y_{s},\lambda_{t}^{\ast}\xi_{t,s}^{\mathcal{S}}(\lambda
_{t}^{\ast});T)(\theta_{s}^{u}(\lambda_{t}^{\ast})ds+dW_{s}%
),\label{dH_thetauY}\\
dH_{t,s}^{u,\lambda}(\lambda_{t}^{\ast})  &  =-\lambda_{t}^{\ast}\xi
_{t,s}^{\mathcal{S}}(\lambda_{t}^{\ast})\big(\theta_{t}^{\mathcal{S}}%
(\lambda_{t}^{\ast})+H_{t,s}^{r}+H_{t,s}^{\theta}(\lambda_{t}^{\ast
})\big)\nonumber\\
&  \cdot\partial\theta^{u}/\partial\lambda(s,Y_{s},\lambda_{t}^{\ast}\xi
_{t,s}^{\mathcal{S}}(\lambda_{t}^{\ast});T)(\theta_{s}^{u}(\lambda_{t}^{\ast
})ds+dW_{s}). \label{dH_thetauLm}%
\end{align}

\end{subequations}
\end{proposition}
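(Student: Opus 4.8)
The plan is to trace the additive decomposition (\ref{piu_decomp}) back to a splitting of the single Malliavin term $\mathcal{D}_{t}\theta_{s}^{u}(\lambda_{t}^{\ast})$ that drives the SDE (\ref{dH_thetau1}) for $H_{t,s}^{u}(\lambda_{t}^{\ast})$. Both $\pi^{u}$ in (\ref{pit_thm2}) and the quantity $\mathcal{H}_{t,T}^{u}(\lambda_{t}^{\ast})$ (assembled from $H_{t,s}^{u}$ exactly as $\mathcal{H}_{t,T}^{\theta}$ is assembled from $H_{t,s}^{\theta}$ in (\ref{coro_Ht})) are \emph{linear} in the underlying process $H_{t,s}^{u}$. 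Hence it suffices to prove the pathwise identity $H_{t,s}^{u}(\lambda_{t}^{\ast})=H_{t,s}^{u,Y}(\lambda_{t}^{\ast})+H_{t,s}^{u,\lambda}(\lambda_{t}^{\ast})$; the stated expressions for $\mathcal{H}^{u,Y}$, $\mathcal{H}^{u,\lambda}$ and thus for $\pi^{u,Y}$, $\pi^{u,\lambda}$ follow by linearity. Since all three processes start from $0_{d}$, this in turn reduces to showing that the coefficient $\mathcal{D}_{t}\theta_{s}^{u}(\lambda_{t}^{\ast})$ in (\ref{dH_thetau1}) equals the sum of the coefficients multiplying $(\theta_{s}^{u}(\lambda_{t}^{\ast})ds+dW_{s})$ in (\ref{dH_thetauY}) and (\ref{dH_thetauLm}).

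First I would differentiate the representation (\ref{thetau_representation_s}), $\theta_{s}^{u}(\lambda_{t}^{\ast})=\theta^{u}(s,Y_{s},\lambda_{t}^{\ast}\xi_{t,s}^{\mathcal{S}}(\lambda_{t}^{\ast});T)$, by the Malliavin chain rule. Writing $\lambda_{s}^{\ast}:=\lambda_{t}^{\ast}\xi_{t,s}^{\mathcal{S}}(\lambda_{t}^{\ast})$ and using the differentiability hypothesis, this gives
\[
\mathcal{D}_{t}\theta_{s}^{u}(\lambda_{t}^{\ast})=\left(\mathcal{D}_{t}Y_{s}\right)\nabla\theta^{u}(s,Y_{s},\lambda_{s}^{\ast};T)+\mathcal{D}_{t}\lambda_{s}^{\ast}\,\frac{\partial\theta^{u}}{\partial\lambda}(s,Y_{s},\lambda_{s}^{\ast};T).
\]
The first summand is precisely the coefficient of (\ref{dH_thetauY}) and so contributes $H_{t,s}^{u,Y}$. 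The remaining task is to identify the second summand with the coefficient of (\ref{dH_thetauLm}), which hinges on evaluating $\mathcal{D}_{t}\lambda_{s}^{\ast}=\lambda_{t}^{\ast}\,\mathcal{D}_{t}\xi_{t,s}^{\mathcal{S}}(\lambda_{t}^{\ast})$.

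The crux is the Malliavin derivative of the state price density. Expressing $\xi_{t,s}^{\mathcal{S}}(\lambda_{t}^{\ast})$ in exponential form from its SDE (\ref{thm1_SDE_xi_incomp_explicit}) and differentiating, I would collect three contributions to $\mathcal{D}_{t}\log\xi_{t,s}^{\mathcal{S}}(\lambda_{t}^{\ast})$: a boundary ``kick'' $-\theta_{t}^{\mathcal{S}}(\lambda_{t}^{\ast})$ from differentiating the Itô integral at its lower limit $t$; the accumulated interest-rate sensitivity $-\int_{t}^{s}(\mathcal{D}_{t}Y_{v})\nabla r(v,Y_{v})\,dv=-H_{t,s}^{r}$ by (\ref{thm1_SDE_Hr}); and the combined drift-and-diffusion sensitivity of $\int_{t}^{s}(\theta_{v}^{\mathcal{S}})^{\top}dW_{v}+\tfrac{1}{2}\int_{t}^{s}\|\theta_{v}^{\mathcal{S}}\|^{2}dv$, which after pairing the two pieces into the form $\big[(\mathcal{D}_{t}Y_{v})\nabla\theta^{h}+\mathcal{D}_{t}\theta_{v}^{u}\big](\theta_{v}^{\mathcal{S}}dv+dW_{v})$ integrates to $-H_{t,s}^{\theta}(\lambda_{t}^{\ast})$ by (\ref{thm1_SDE_Htheta}) together with the splitting (\ref{theta_S_lambda}) of $\theta^{\mathcal{S}}$. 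This yields $\mathcal{D}_{t}\xi_{t,s}^{\mathcal{S}}(\lambda_{t}^{\ast})=-\xi_{t,s}^{\mathcal{S}}(\lambda_{t}^{\ast})\big(\theta_{t}^{\mathcal{S}}(\lambda_{t}^{\ast})+H_{t,s}^{r}+H_{t,s}^{\theta}(\lambda_{t}^{\ast})\big)$, hence $\mathcal{D}_{t}\lambda_{s}^{\ast}=-\lambda_{t}^{\ast}\xi_{t,s}^{\mathcal{S}}(\lambda_{t}^{\ast})\big(\theta_{t}^{\mathcal{S}}(\lambda_{t}^{\ast})+H_{t,s}^{r}+H_{t,s}^{\theta}(\lambda_{t}^{\ast})\big)$. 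Substituting this into the second summand reproduces exactly the coefficient of (\ref{dH_thetauLm}), so it contributes $H_{t,s}^{u,\lambda}$ and the split is complete.

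The main obstacle I anticipate is this state-price-density computation, for two reasons. It carries an apparent self-reference, since $H_{t,s}^{\theta}(\lambda_{t}^{\ast})$ itself contains $\mathcal{D}_{t}\theta_{v}^{u}$, the very object being decomposed; I would resolve this by treating $H_{t,s}^{\theta}$ as a predetermined building block already defined in Proposition \ref{corollary_expression}, so that the formula for $\mathcal{D}_{t}\lambda_{s}^{\ast}$ is a legitimate closed expression rather than a circular definition. The second delicate point is the boundary term at the lower limit $t$: one must use the correct convention for the Malliavin derivative of an Itô integral whose domain begins at the differentiation time, since it is the value $\theta_{t}^{\mathcal{S}}(\lambda_{t}^{\ast})$ at $t$ (not $\theta_{s}^{\mathcal{S}}$) that must appear in order to match (\ref{dH_thetauLm}). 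Once these are handled, the linearity argument of the first paragraph delivers (\ref{piu_decomp}) and the accompanying expressions (\ref{pi_uY})--(\ref{pi_uLm}).
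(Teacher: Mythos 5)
Your argument is correct and follows essentially the route the paper takes: reduce the additive split of $\pi^{u}$ by linearity to a pathwise identity $H_{t,s}^{u}=H_{t,s}^{u,Y}+H_{t,s}^{u,\lambda}$, apply the Malliavin chain rule to $\theta^{u}(s,Y_{s},\lambda_{t}^{\ast}\xi_{t,s}^{\mathcal{S}}(\lambda_{t}^{\ast});T)$, and identify the $\lambda$-channel via $\mathcal{D}_{t}\xi_{t,s}^{\mathcal{S}}(\lambda_{t}^{\ast})=-\xi_{t,s}^{\mathcal{S}}(\lambda_{t}^{\ast})\bigl(\theta_{t}^{\mathcal{S}}(\lambda_{t}^{\ast})+H_{t,s}^{r}+H_{t,s}^{\theta}(\lambda_{t}^{\ast})\bigr)$, with the boundary term correctly evaluated at $t$. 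You also correctly handle the two genuine subtleties (the lower-limit term of the It\^{o} integral and the apparent self-reference through $H_{t,s}^{\theta}$, which is resolved because that process is already a well-defined building block from Proposition \ref{corollary_expression}).
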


\begin{proof}
See Section \ref{Proof_piu_decomp} in the online supplementary material.
\end{proof}

As one of our contributions, we show in (\ref{piu_decomp}) that we can further
decompose the investor-specific price of risk hedge component $\pi^{u}%
(t,X_{t},Y_{t})$ into two interpretable parts under the differentiability
assumption. By combining (\ref{piu_decomp}) with (\ref{thm_decomp1}) --
(\ref{thm_decomp2}) in Theorem \ref{thm_representation_new}, we obtain a final
decomposition of the optimal policy with five interpretable components.
Specifically, the first term $\pi^{u,Y}(t,X_{t},Y_{t})$ hedges the fluctuation
in the investor-specific price of risk that arises from the state variable
$Y_{t}$, as explicitly reflected by the gradient $\nabla\theta^{u}%
(s,Y_{s},\lambda_{t}^{\ast}\xi_{t,s}^{\mathcal{S}}(\lambda_{t}^{\ast});T)$ in
(\ref{dH_thetau1}). Its structure resembles that of $\pi^{h}(t,X_{t},Y_{t})$,
except for replacing $\theta^{h}$ by $\theta^{u}$ in (\ref{dH_thetah}). The
second term $\pi^{u,\lambda}(t,X_{t},Y_{t})$, however, is introduced via a
fundamentally different channel: it essentially hedges the uncertainty in the
investor-specific price of risk due to the variation in investor's wealth
level. To see this, we note that $\pi^{u,\lambda}(t,X_{t},Y_{t})$ hinges on
the partial derivative $\partial\theta^{u}/\partial\lambda(s,Y_{s},\lambda
_{t}^{\ast}\xi_{t,s}^{\mathcal{S}}(\lambda_{t}^{\ast});T)$ in
(\ref{dH_thetauLm}). It captures the sensitivity\ of $\theta^{u}$ with respect
to the multiplier $\lambda_{s}^{\ast}=\lambda_{t}^{\ast}\xi_{t,s}%
^{\mathcal{S}}(\lambda_{t}^{\ast})$, which is directly related to investor's
wealth level $X_{s}$ via $X_{s}=E_{s}[G_{s,T}(\lambda_{s}^{\ast})]$ by
(\ref{XtGt}).

By the above analysis, we see that investors need to hedge the uncertainty in
investor-specific price of risk from both the state variable and investor's
wealth level. The two channels of uncertainty stems from the structure of
investor-specific price of risk in (\ref{thetau_representation_v}) --- it
depends on both the state variable $Y_{s}$ and investor's wealth level $X_{s}$
via the multiplier $\lambda_{s}^{\ast}$. For the market price of risk
$\theta^{h}(s,Y_{s})$, however, only the uncertainty from the state variable
needs to be hedged via the component $\pi^{h}(t,X_{t},Y_{t})$. In addition, as
we will show in the next two sections, the term $\pi^{u,\lambda}(t,X_{t}%
,Y_{t})$ can only possibly appear when the market is incomplete and, at the
same time, the utility is wealth-dependent. That is, it vanishes under the
CRRA utility, and also under complete market models. Thus, it highlights the
impact on the optimal policy coming from the interaction of market
incompleteness and wealth-dependent utilities, i.e., the proper way for the
investor to complete the market depends on her current wealth level. To our
best knowledge, we are the first to show this additional term in optimal
dynamic portfolio allocation.

\subsection{Decomposing optimal policy under the CRRA utility}

\label{section:crra_policy}

To reveal the structural impact of wealth-dependent utility on the optimal
policy, we now investigate how the optimal policy degenerates under the
wealth-independent CRRA utility, i.e., with $\bar{c}=0$ and $\bar{x}=0$ in
(\ref{hara_utility}). The results are obtained by applying our general
decomposition in Theorems \ref{thm_representation_new} and
\ref{thm_integral_equation}, and then simplifying the results using the
special structure of the CRRA utility. We provide the main results here and
discuss the key insights. We refer to \ref{section:impact_utility} for a
further detailed discussion.

First, we can show that the investor-specific price of risk $\theta^{u}\left(
v,Y_{v},\lambda_{v}^{\ast};T\right)  $ introduced in
(\ref{thetau_representation_v}) is independent of the multiplier $\lambda
_{v}^{\ast}$ and thus can be reduced to $\theta^{u}\left(  v,Y_{v};T\right)  $
under the CRRA utility. Thus, it is no longer affected by investor's wealth
level $X_{v}$ via the equation $X_{v}=E_{v}[\mathcal{G}_{v,T}(\lambda
_{v}^{\ast})]$. It means that under incomplete market models, the proper way
for a CRRA investor to complete the market only depends on the market state
and its horizon. Consequently, the relative state price density $\xi
_{t,s}^{\mathcal{S}}(\lambda_{t}^{\ast})$ and Malliavin term $H_{t,s}^{\theta
}(\lambda_{t}^{\ast})$, as given in general by dynamics
(\ref{thm1_SDE_xi_incomp_explicit}) and (\ref{thm1_SDE_Htheta}), become
independent of the multiplier $\lambda_{t}^{\ast}.$ Thus, they can be written
as $\xi_{t,s}^{\mathcal{S}}$ and $H_{t,s}^{\theta}$ with dynamics%
\begin{subequations}%
\begin{equation}
d\xi_{t,s}^{\mathcal{S}}=-\xi_{t,s}^{\mathcal{S}}[r(s,Y_{s})ds+(\theta
^{h}(s,Y_{s})+\theta^{u}(s,Y_{s};T))^{\top}dW_{s}], \label{SDE_xi_crra}%
\end{equation}
and%
\begin{equation}
dH_{t,s}^{\theta}=\left(  \mathcal{D}_{t}Y_{s}\right)  (\nabla\theta
^{h}(s,Y_{s})+\nabla\theta^{u}(s,Y_{s};T))[\theta^{\mathcal{S}}(s,Y_{s}%
;T)ds+dW_{s}], \label{SDE_Htheta_CRRA}%
\end{equation}%
\end{subequations}%
where $\theta^{\mathcal{S}}(s,Y_{s};T)=\theta^{h}(s,Y_{s})+\theta^{u}%
(s,Y_{s};T)$.

By the above analysis, we see the wealth-independent property of the CRRA
utility is embodied by multiple building blocks of the optimal policy: the
investor-specific price of risk $\theta^{u}\left(  v,Y_{v};T\right)  $, the
relative state price density $\xi_{t,s}^{\mathcal{S}}$, and the Malliavin term
$H_{t,s}^{\theta}$ are all independent of investor's current wealth level, in
contrast to their counterparts under general utilities. With these
preparations, we provide the decomposition of optimal policy under the CRRA
utility in the following proposition.

\begin{proposition}
\label{Corollary_CRRA copy}Under the incomplete market model (\ref{SDE_price})
-- (\ref{SDE_Y}) and the CRRA utility function given in (\ref{hara_utility})
with $\overline{x}=\overline{c}=0$, the investor-specific price of risk\emph{
}$\theta^{u}\left(  v,Y_{v},\lambda_{v}^{\ast};T\right)  $ does not depend on
the multiplier $\lambda_{v}^{\ast}$, and can be simplified as $\theta
^{u}\left(  v,Y_{v};T\right)  $. Consequently, the relative state price
density $\xi_{t,s}^{\mathcal{S}}(\lambda_{t}^{\ast})$ and Malliavin term
$H_{t,s}^{\theta}(\lambda_{t}^{\ast})$ become independent of $\lambda
_{t}^{\ast}$, with dynamics given in (\ref{SDE_xi_crra}) --
(\ref{SDE_Htheta_CRRA}). The optimal policy is wealth-independent, and allows
the decomposition:
\begin{equation}
\pi_{t}=\pi^{mv}(t,Y_{t})+\pi^{r}(t,Y_{t})+\pi^{\theta}(t,Y_{t})
\label{crra_decompose}%
\end{equation}
and $\pi^{\theta}(t,Y_{t})=\pi^{h}(t,Y_{t})+\pi^{u}(t,Y_{t}),$ where the
mean-variance component $\pi^{mv}(t,Y_{t})$ can be explicitly solved as%
\begin{equation}
\pi^{mv}(t,Y_{t})=\frac{1}{\gamma}(\sigma(t,Y_{t})^{+})^{\top}\theta
^{h}(t,Y_{t})=\frac{1}{\gamma}(\sigma(t,Y_{t})\sigma(t,Y_{t})^{\top}%
)^{-1}\left(  \mu(t,Y_{t})-r(t,Y_{t})1_{m}\right)  ; \label{pmv_crra_explicit}%
\end{equation}
the other components\ and the equation characterizing $\theta^{u}\left(
v,Y_{v};T\right)  $ are explicitly given in the appendix. Under the
differentiability assumption for $\theta^{u}\left(  v,y;T\right)  $, we have
$\pi^{u}(t,Y_{t})=\pi^{u,Y}(t,Y_{t})$ in (\ref{piu_decomp}), i.e., the last
component $\pi^{u,\lambda}(t,X_{t},Y_{t})$ vanishes under the CRRA utiltiy.
\end{proposition}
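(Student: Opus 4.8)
The plan is to exploit the homogeneity of the CRRA inverse marginal utilities and run a self-consistency argument on the characterizing equation \eqref{integral_theta}. First I would record the algebraic engine: for the power utilities in \eqref{hara_utility} with $\bar{c}=\bar{x}=0$ the inverse marginal utilities satisfy $I^{u}(s,ky)=k^{-1/\gamma}I^{u}(s,y)$ and $I^{U}(T,ky)=k^{-1/\gamma}I^{U}(T,y)$, and a direct differentiation yields the pointwise identity $\Upsilon_{t,s}^{\iota}(\lambda)=-\tfrac{1}{\gamma}\Gamma_{t,s}^{\iota}(\lambda)$ for $\iota\in\{u,U\}$, with $\Gamma^{\iota},\Upsilon^{\iota}$ as in \eqref{Gamma_u1}--\eqref{Gamma_u2}. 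Summing the terminal-wealth and consumption pieces then gives $\mathcal{Q}_{t,T}(\lambda)=-\tfrac{1}{\gamma}\mathcal{G}_{t,T}(\lambda)$, so that $E_{t}[\mathcal{Q}_{t,T}(\lambda_{t}^{\ast})]=-\tfrac{1}{\gamma}X_{t}$ by the budget identity \eqref{XtGt}. This already collapses the mean--variance term \eqref{pimv_thm1} to $\pi^{mv}=\tfrac{1}{\gamma}(\sigma^{+})^{\top}\theta^{h}$, and the explicit form \eqref{pmv_crra_explicit} follows from $(\sigma^{+})^{\top}\theta^{h}=(\sigma\sigma^{\top})^{-1}(\mu-r1_{m})$.

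Next I would establish the $\lambda$-independence of $\theta^{u}$ by a fixed-point/consistency argument on \eqref{integral_theta}. Adopt the ansatz $\theta_{v}^{u}=\theta^{u}(v,Y_{v};T)$. Under it the total price of risk \eqref{theta_S_lambda} loses its $\lambda$-argument, so \eqref{thm1_SDE_xi_incomp_explicit} and \eqref{thm1_SDE_Htheta} reduce to \eqref{SDE_xi_crra} and \eqref{SDE_Htheta_CRRA} (using the chain rule $\mathcal{D}_{t}\theta_{s}^{u}=(\mathcal{D}_{t}Y_{s})\nabla\theta^{u}(s,Y_{s};T)$), making $\xi_{t,s}^{\mathcal{S}}$, $H_{t,s}^{r}$, and $H_{t,s}^{\theta}$ all $\lambda$-free. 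The homogeneity of $I^{U},I^{u}$ then forces each of $\mathcal{G}_{t,T}$, $\mathcal{H}_{t,T}^{r}$, and $\mathcal{H}_{t,T}^{\theta}$ to factor as $(\lambda_{t}^{\ast})^{-1/\gamma}$ times a $\lambda$-free random variable; combined with $\mathcal{Q}=-\tfrac{1}{\gamma}\mathcal{G}$, the common factor $(\lambda_{v}^{\ast})^{-1/\gamma}$ cancels between the numerator and denominator of the right-hand side of \eqref{integral_theta}, so the map returns a $\lambda$-free $\theta^{u}$. Invoking uniqueness of the solution of the determining system in Theorem \ref{thm_integral_equation} closes the loop and yields $\theta^{u}(v,Y_{v},\lambda_{v}^{\ast};T)=\theta^{u}(v,Y_{v};T)$.

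With $\theta^{u}$ pinned down as $\lambda$-free, wealth-independence of the remaining components is immediate from the same cancellation: writing any hedge term as $-(\sigma^{+})^{\top}E_{t}[\mathcal{H}]/X_{t}$ with $\mathcal{H}=(\lambda_{t}^{\ast})^{-1/\gamma}\widetilde{\mathcal{H}}$ and using $X_{t}=E_{t}[\mathcal{G}_{t,T}]=(\lambda_{t}^{\ast})^{-1/\gamma}E_{t}[\widetilde{\mathcal{G}}]$, the factor $(\lambda_{t}^{\ast})^{-1/\gamma}$ cancels and leaves $E_{t}[\widetilde{\mathcal{H}}]/E_{t}[\widetilde{\mathcal{G}}]$, a function of $(t,Y_{t})$ alone; this gives the decomposition \eqref{crra_decompose}. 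Finally, the vanishing of $\pi^{u,\lambda}$ under differentiability is a one-line consequence: the driver of $H_{t,s}^{u,\lambda}$ in \eqref{dH_thetauLm} carries the factor $\partial\theta^{u}/\partial\lambda$, which is identically zero once $\theta^{u}$ is known to be $\lambda$-free, so $H_{t,s}^{u,\lambda}\equiv0_{d}$ and hence $\pi^{u}=\pi^{u,Y}$.

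I expect the main obstacle to be the self-consistency step: equation \eqref{integral_theta} is genuinely a fixed point in $\theta^{u}$, since $\theta^{u}$ reappears on the right through $\xi_{t,s}^{\mathcal{S}}$ and, nonlinearly, through its own Malliavin derivative inside $H_{t,s}^{\theta}$. Exhibiting one $\lambda$-free solution is easy; the delicate point is to argue that it is \emph{the} solution, which is where I rely on the uniqueness built into Theorem \ref{thm_integral_equation} (equivalently, on the scaling covariance of the determining system under $\lambda_{v}^{\ast}\mapsto k\lambda_{v}^{\ast}$, which leaves $\theta^{u}$ invariant).
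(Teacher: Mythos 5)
Your computations are all correct: the degree-$(-1/\gamma)$ homogeneity of $I^{u},I^{U}$ does give $\Upsilon^{\iota}_{t,s}(\lambda)=-\tfrac{1}{\gamma}\Gamma^{\iota}_{t,s}(\lambda)$, hence $E_{t}[\mathcal{Q}_{t,T}(\lambda_{t}^{\ast})]=-X_{t}/\gamma$ and the explicit form (\ref{pmv_crra_explicit}) via $(\sigma^{+})^{\top}\theta^{h}=(\sigma\sigma^{\top})^{-1}(\mu-r1_{m})$; the factorization $\mathcal{H}=(\lambda_{t}^{\ast})^{-1/\gamma}\widetilde{\mathcal{H}}$ against $X_{t}=(\lambda_{t}^{\ast})^{-1/\gamma}E_{t}[\widetilde{\mathcal{G}}_{t,T}]$ is exactly how the paper's appendix expressions (\ref{pi_CRRA})--(\ref{crra_G}) arise; and the vanishing of $\pi^{u,\lambda}$ from $\partial\theta^{u}/\partial\lambda\equiv 0$ in (\ref{dH_thetauLm}) is the paper's Corollary \ref{Corollary_CRRA_pitheta}. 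Where you genuinely diverge from the paper is the $\lambda$-independence of $\theta^{u}$: you run a self-consistency/fixed-point argument on equation (\ref{integral_theta}) and then appeal to uniqueness of its solution, whereas the paper works from the He--Pearson dual problem (the route visible in the proof of Proposition \ref{prop:hara_opt}): under CRRA the dual objective is $(\lambda_{v}^{\ast})^{1-1/\gamma}$ times a $\lambda$-free functional of the control $\theta^{u}$, so its (unique, by convexity) minimizer cannot depend on $\lambda_{v}^{\ast}$. The soft spot in your primary route is that the paper never establishes uniqueness of solutions to the system in Theorem \ref{thm_integral_equation} in general --- it is a necessary condition derived from optimality, and uniqueness is only verified for the Heston example --- so exhibiting one $\lambda$-free solution does not by itself identify the true shadow price. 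Your parenthetical fallback (scaling covariance of the determining system under $\lambda_{v}^{\ast}\mapsto k\lambda_{v}^{\ast}$) is the right fix and is essentially equivalent to the dual-problem factorization; if you lead with that, the argument is airtight and the rest of your proof stands as written.
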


\begin{proof}
See Section \ref{proof_corrolary_crra} in the online supplementary material.
\end{proof}

Comparing the decomposition results in Proposition \ref{Corollary_CRRA copy}
with those in Theorem \ref{thm_representation_new}, we find that the optimal
policy under the CRRA utility differs from that under general utilities in the
following aspects. First, the optimal policy under the CRRA utility is
independent of investor's wealth level $X_{t}$, reflecting the
wealth-independent property of the CRRA utility. It is guaranteed as follows.
First, the time--$t$ multiplier $\lambda_{t}^{\ast}$ does not appear in the
optimal policy (\ref{crra_decompose}) under the CRRA utility. Second, with the
simplified structure of $\theta^{u}(s,Y_{s};T)$, the building blocks
$\xi_{t,s}^{\mathcal{S}}$ and $H_{t,s}^{\theta}$ in (\ref{SDE_xi_crra}) --
(\ref{SDE_Htheta_CRRA}) are also independent of $\lambda_{t}^{\ast}$. It
ensures that the wealth level $X_{t}$ is not implicitly involved in the
optimal policy.

Next, the mean-variance component $\pi^{mv}(t,Y_{t})$ can be explicitly solved
as (\ref{pmv_crra_explicit}) under the CRRA utility. It is given as the
product of the inverse covariance matrix $(\sigma(t,Y_{t})\sigma
(t,Y_{t})^{\top})^{-1}$ and the excess return $\mu(t,Y_{t})-r(t,Y_{t})1_{m}$,
and then further divided by the investor's risk aversion level $\gamma$.
Moreover, the mean-variance component is \textquotedblleft
myopic\textquotedblright\ under the CRRA utility, as it is independent of the
investor's horizon. Finally, the last component $\pi^{u,\lambda}(t,X_{t}%
,Y_{t})$ in (\ref{piu_decomp}) vanishes in the CRRA policy, as $\theta
^{u}(s,Y_{s};T)$ no longer depends on the multiplier $\lambda_{s}^{\ast}$. It
suggests investors do not need to hedge the uncertainty in investor-specific
price of risk due to variation in their wealth level. This again reflects the
impact of the wealth-independent property of the CRRA utility.

\subsection{Discussing the complete market
case\label{section: complete_representation}}

We now briefly describe the decomposition of optimal policy under complete
market models. By comparing it with its counterpart under incomplete market
models, we can reveal the fundamental impact of market incompleteness on
optimal portfolio allocation. The complete market model is set under the
general model (\ref{SDE_price}) and (\ref{SDE_Y}) with the number of risky
assets being equal to the number of driving Brownian motions, i.e., $m=d$. In
addition, we assume that the resulting square matrix $\displaystyle\sigma
(t,y)$ is non-singular. Thus, we can fully hedge the risk by investing in the
risky assets, and do not need the completion procedure with fictitious assets.
In this case, the market price of risk in (\ref{def_thetah}) is defined as
$\theta(t,Y_{t}):=\sigma(t,Y_{t})^{-1}\left(  \mu(t,Y_{t})-r(t,Y_{t}%
)1_{m}\right)  ,$ while the investor-specific price of risk $\theta_{t}^{u}$
vanishes as the investor does not need to complete the market by fictitious
assets. Then, the decomposition of the optimal policy follows as a special
case of the incomplete market results in Theorem \ref{thm_representation_new},
via replacing the components by their counterparts in the complete market.

The market completeness introduces the following two simplifications of
optimal policy decomposition. First, as the undetermined investor-specific
price of risk $\theta_{s}^{u}$ is not involved, the dynamics of the relative
state price density $\xi_{t,s}^{\mathcal{S}}$ and Malliavin term
$H_{t,s}^{\theta}$, as given in (\ref{thm1_SDE_xi_incomp_explicit}) and
(\ref{thm1_SDE_Htheta}) for general incomplete markets, are now explicitly
given. Furthermore, they no longer depend on the multiplier $\lambda_{t}%
^{\ast}$, thus are independent of investor's wealth level. Second, it is
straightforward to verify that the last component, i.e., $\pi^{u}%
(t,X_{t},Y_{t})$, in the optimal policy decomposition (\ref{thm_decomp1})
vanishes under the complete market model, as the investors do not need to
hedge the uncertainty in the investor-specific price of risk. We omit further
details, and refer to, e.g., \cite{Detemple_2003_JF}.

\section{Closed-form optimal policies for wealth-dependent HARA
utility\label{section:closed_form}}

In the previous section, we have revealed the fundamental impact of
wealth-dependent utility on optimal policy in general incomplete market
models. As discussed in Section \ref{Section:five_comp_decomposition}, such
impact essentially stems from the special structure of the investor-specific
price of risk, which implicitly depends on the investor's wealth level $X_{v}$
via the multiplier $\lambda_{v}^{\ast}$. In this and subsequent sections, we
show that the wealth impact on investor's portfolio allocation goes far beyond
this. In particular, we study the optimal policy under the HARA utility in
(\ref{hara_utility}), which is a representative wealth-dependent utility.
First, under nonrandom but possibly time-varying interest rate, we show that
the optimal policy under the HARA utility can be decomposed into a bond
holding scheme and the corresponding CRRA portfolio. Next, we solve the
optimal policies in closed-form under the Heston-SV model for investors with
HARA utility. We further calibrate the model using the S\&P 500 ETF data in
the recent ten years, which is then used in Section
\ref{section:numerical_all} to illustrate the static and dynamic impacts of
wealth-dependence and get further economic insights.

\subsection{Connection between optimal policies under HARA and CRRA utilities}

We first apply our decomposition results in Theorems
\ref{thm_representation_new} and \ref{thm_integral_equation} to explicitly
characterize the optimal policy under HARA utility. To save space, we provide
and discuss the results in Corollary \ref{Corollary_HARA} of
\ref{Appendix_policy_HARA}. Comparing with the CRRA policy in Section
\ref{section:crra_policy}, our decomposition clearly reveals the channels for
investor's wealth to affect the optimal policy under the HARA utility. These
channels are essentially introduced by the wealth constraints $\bar{c}$ and
$\bar{x}$ in (\ref{hara_utility}) for HARA investors.

Next, we study the optimal policy under HARA utility with nonrandom but
possibly time-varying interest rate. In this special case, we show the
investor-specific price of risk under HARA utility is indeed identical to that
under the corresponding CRRA utility. Furthermore, the optimal policies under
HARA and CRRA utilities are connected to each other by a simple multiplier
related to current wealth level and bond prices. This new relationship sheds
light on the construction of the optimal policy under the HARA utility. The
results are summarized in Proposition \ref{prop:hara_opt} below.

\begin{proposition}
\label{prop:hara_opt}With nonrandom but possibly time-varying interest rate
$r_{t}$, the investor-specific price of risk $\theta_{v}^{u}$ under HARA
utility (\ref{hara_utility}) coincides with its counterpart under CRRA
utility. It does not depend on the multiplier $\lambda_{v}^{\ast}$ and allows
the representation $\theta_{v}^{u}=\theta^{u}\left(  v,Y_{v};T\right)  .$ The
optimal policy under HARA utility satisfies the following simple ratio
relationship with its counterpart under CRRA utility:
\begin{equation}
\pi_{H}^{mv}(t,X_{t},Y_{t})=\pi_{C}^{mv}(t,Y_{t})\frac{\bar{X}_{t}}{X_{t}%
}\text{ and }\pi_{H}^{\theta}(t,X_{t},Y_{t})=\pi_{C}^{\theta}(t,Y_{t}%
)\frac{\bar{X}_{t}}{X_{t}}, \label{H_C_ratio}%
\end{equation}
as well as $\pi_{H}^{r}(t,X_{t},Y_{t})=\pi_{C}^{r}(t,Y_{t})\bar{X}_{t}%
/X_{t}\equiv0_{d}$ due to the deterministic nature of interest rate. Here, the
subscripts $H$ and $C$ represent for the HARA and CRRA utilities. Besides,
$\bar{X}_{t}$ in (\ref{H_C_ratio}) is given by%
\begin{subequations}
\begin{equation}
\bar{X}_{t}=X_{t}-\overline{x}B_{t,T}-\overline{c}\int_{t}^{T}B_{t,s}ds,\text{
for }w>0, \label{xtH_1}%
\end{equation}
and
\begin{equation}
\bar{X}_{t}=X_{t}-\overline{x}B_{t,T},\text{ for }w=0, \label{xtH_2}%
\end{equation}
where $B_{t,s}:=\exp(-\int_{t}^{s}r_{v}dv)$ is the price at time $t$ for a
zero-coupon bond with face value one that matures at time $s$.
\end{subequations}
\end{proposition}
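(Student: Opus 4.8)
The plan is to reduce the HARA problem to the CRRA problem via two facts: the HARA inverse marginal utilities differ from their CRRA counterparts only by the additive constants $\bar c$ and $\bar x$, and under a deterministic interest rate those minimum requirements are funded by riskless zero-coupon bonds that carry no risky-asset position. I would proceed in three stages, using subscripts $H$ and $C$ for HARA and CRRA.

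First I would record the elementary inverse-marginal-utility identities. Inverting (\ref{hara_utility}) gives $I^{u}_{H}(t,y)=\bar c+I^{u}_{C}(t,y)$ and $I^{U}_{H}(T,y)=\bar x+I^{U}_{C}(T,y)$, and since the constants vanish under differentiation, $\partial I^{u}_{H}/\partial y=\partial I^{u}_{C}/\partial y$ and likewise for $I^{U}$. Substituting into (\ref{Gamma_u1})--(\ref{Gamma_u2}) then yields $\Upsilon^{u}_{t,s,H}=\Upsilon^{u}_{t,s,C}$ and $\Upsilon^{U}_{t,T,H}=\Upsilon^{U}_{t,T,C}$, whereas $\Gamma^{u}_{t,s,H}=\bar c\,\xi^{\mathcal S}_{t,s}+\Gamma^{u}_{t,s,C}$ and $\Gamma^{U}_{t,T,H}=\bar x\,\xi^{\mathcal S}_{t,T}+\Gamma^{U}_{t,T,C}$. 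In particular $\mathcal Q_{t,T,H}=\mathcal Q_{t,T,C}$ by (\ref{coro_Q}), and by (\ref{coro_Ht}) the HARA version of $\mathcal H^{\theta}$ equals its CRRA version plus the extra terms $\bar x\,\xi^{\mathcal S}_{t,T}H^{\theta}_{t,T}+\bar c\int_{t}^{T}\xi^{\mathcal S}_{t,s}H^{\theta}_{t,s}\,ds$.

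Second, and this is the crux, I would show these extra terms carry zero conditional expectation. Writing $\xi^{\mathcal S}_{t,s}=B_{t,s}Z_{t,s}$ with $B_{t,s}$ the deterministic bond price and $Z_{t,s}$ the exponential martingale generated by $\theta^{\mathcal S}$, an Ito computation shows that $Z_{t,s}H^{\theta}_{t,s}$ is a driftless local martingale started at zero: in the dynamics (\ref{thm1_SDE_Htheta}) the drift of $H^{\theta}$ is carried by the same matrix that multiplies $dW_{s}$, so the cross-variation of $H^{\theta}$ with $Z$ exactly offsets the drift of $Z_{t,s}\,dH^{\theta}_{t,s}$. Since $B_{t,s}$ factors out of the expectation, $E_{t}[\xi^{\mathcal S}_{t,s}H^{\theta}_{t,s}]=0$, and the same reasoning (conditioning on $Y_{v},\lambda^{*}_{v}$) kills the extra terms inside the characterizing equation (\ref{integral_theta}). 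Because the deterministic rate also forces $\nabla r\equiv0$, hence $H^{r}\equiv0$ through (\ref{thm1_SDE_Hr}), the HARA right-hand side of (\ref{integral_theta}) collapses onto the CRRA one; verifying that the CRRA solution $\theta^{u}(v,Y_{v};T)$ therefore satisfies the HARA equation and invoking uniqueness of its solution establishes $\theta^{u}_{H}=\theta^{u}_{C}$ and its independence of $\lambda^{*}_{v}$. Consequently $\xi^{\mathcal S}_{t,s}$ and $H^{\theta}_{t,s}$ reduce to the common wealth-independent objects of (\ref{SDE_xi_crra})--(\ref{SDE_Htheta_CRRA}).

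Third I would assemble the ratio relations. Taking conditional expectations of $\mathcal G_{t,T,H}$ in (\ref{XtGt}) and using $E_{t}[\xi^{\mathcal S}_{t,s}]=B_{t,s}$ (again because $Z_{t,s}$ is a martingale) gives $X_{t}=\bar x B_{t,T}+\bar c\int_{t}^{T}B_{t,s}\,ds+E_{t}[\mathcal G_{t,T,C}(\lambda^{*}_{t})]$, so that $\bar X_{t}$ in (\ref{xtH_1}) satisfies $\bar X_{t}=E_{t}[\mathcal G_{t,T,C}(\lambda^{*}_{t})]$; that is, the same multiplier $\lambda^{*}_{t}$ solves the CRRA wealth constraint at residual wealth $\bar X_{t}$. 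Feeding $\mathcal Q_{t,T,H}=\mathcal Q_{t,T,C}$ and $E_{t}[\mathcal H^{\theta}_{t,T,H}]=E_{t}[\mathcal H^{\theta}_{t,T,C}]$ into (\ref{pimv_thm1}) and (\ref{pitheta_thm1}) then produces $\pi^{mv}_{H}=\pi^{mv}_{C}\,\bar X_{t}/X_{t}$ and $\pi^{\theta}_{H}=\pi^{\theta}_{C}\,\bar X_{t}/X_{t}$, since the only difference between the HARA and CRRA expressions is division by $X_{t}$ versus $\bar X_{t}$; and $H^{r}\equiv0$ gives $\pi^{r}_{H}\equiv0_{d}$. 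The $w=0$ case simply drops the consumption integral and reproduces (\ref{xtH_2}). The main obstacle is the second stage: the identity $E_{t}[\xi^{\mathcal S}_{t,s}H^{\theta}_{t,s}]=0$ is precisely what guarantees that the minimum-level constants $\bar c,\bar x$ leave the investor-specific price of risk untouched, and it hinges essentially on the deterministic-rate factorization $\xi^{\mathcal S}_{t,s}=B_{t,s}Z_{t,s}$ together with the local-martingale structure of $H^{\theta}$.
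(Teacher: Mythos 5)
Your Stage 1 (the additive shift $I^{u}_{H}=\bar c+I^{u}_{C}$, $I^{U}_{H}=\bar x+I^{U}_{C}$ and its consequences for $\Gamma$, $\Upsilon$, $\mathcal{Q}$, $\mathcal{H}^{\theta}$), your martingale identity $E_{t}[\xi^{\mathcal S}_{t,s}H^{\theta}_{t,s}]=0$ (which is exactly the paper's Lemma \ref{lemma_xiH}, proved by the same It\^o computation on $\eta_s H^{\theta}_{t,s}$), and your Stage 3 assembly of the ratio via $(\lambda_t^{\ast})^{-1/\gamma}=\bar X_t/E_t[\tilde{\mathcal G}_{t,T}]$ all match the paper's Part 2 and its supporting computations. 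The divergence, and the gap, is in how you establish the first claim of the proposition, that $\theta^{u}_{H}=\theta^{u}_{C}$ and is independent of $\lambda^{\ast}_{v}$.

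You argue that the extra terms in the HARA version of the characterizing equation (\ref{integral_theta}) (equivalently (\ref{integral_hara})) have zero conditional expectation, so the HARA equation collapses onto the CRRA equation, and you then ``invoke uniqueness of its solution.'' That uniqueness is not available: (\ref{integral_theta}) is only a necessary condition derived from the least-favorable-completion constraint $\pi^{F}_{v}\equiv 0$, and it is a highly implicit fixed-point system ($\theta^{u}$ enters the right-hand side through the dynamics of $\xi^{\mathcal S}$, $H^{\theta}$, and the conditioning on $\lambda^{\ast}_{v}$). The paper never proves existence or uniqueness for this system in general; it only observes, for the Heston model, that the closed-form solution demonstrates these properties in that particular case. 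So your argument shows that the CRRA $\theta^{u}(v,Y_v;T)$ \emph{is a} solution of the (reduced) HARA equation, not that it \emph{is the} HARA investor-specific price of risk; the independence of $\lambda^{\ast}_{v}$ suffers from the same circularity (the equation is $\lambda$-free only after one already assumes the solution is). The paper closes this hole differently: it characterizes $\theta^{u}$ variationally as the minimizer of the He--Pearson dual problem (\ref{dual_HP_1_main}), shows that under a deterministic rate the HARA dual objective differs from the CRRA one only by the term $\tfrac{\gamma-1}{\gamma}E_v[A_{v,T}]=\tfrac{\gamma-1}{\gamma}\lambda^{\ast}_{v}\bigl[\bar x B_{v,T}+\bar c\int_v^T B_{v,s}\,ds\bigr]$, which does not depend on the control $\theta^{u}$, and concludes that the two optimization problems --- hence their optimal controls --- coincide. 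To repair your proof you would either need to supply a uniqueness argument for the equation system or replace Stage 2's identification step with this dual-problem comparison (your bond-price factorization $E_v[\xi^{\mathcal S}_{v,s}]=B_{v,s}$ is precisely the ingredient that makes the dual argument work, so the repair is close at hand).
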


\begin{proof}
See \ref{Appendix_proof_HARA_prop}.
\end{proof}

We now discuss the main findings in Proposition \ref{prop:hara_opt}. First,
the investor-specific prices of risk under the two utilities agree with each
other. That is, when there is no uncertainty in the interest rate, the
investor completes the market in\ exactly the same way under the two
utilities, and the impact of the current wealth level entirely vanishes in the
investor-specific price of risk for HARA investors. Second, the ratio
relationship (\ref{H_C_ratio}) connects the optimal policy under HARA utility
to its counterpart under the CRRA utility, which is much easier to solve in
closed form (or implement via, e.g., simulation methods) due to its
wealth-independent nature. The relationship (\ref{H_C_ratio}) provides a
convenient way to compute the optimal policy under HARA utility. Although we
assume a deterministic interest rate in Proposition \ref{prop:hara_opt}, no
assumptions are imposed on the state variable. Thus, the relationship
(\ref{H_C_ratio}) can be applied to various models with sophisticated state
variables and complex dynamics. In Section \ref{section:hn_closedform}, we
explicitly illustrate such an application using the Heston-SV model used in
\cite{liu2007portfolio}. Extension of Proposition \ref{prop:hara_opt} to the
case with random interest rates can be regarded as an open research topic, for
which the change of numeraire techniques in \cite{DetempleRindisbacher2010}
may render a useful tool.

The relationship (\ref{H_C_ratio}) allows for the following intuitive economic
interpretation. With a deterministic interest rate, $B_{t,s}$ represents the
time--$t$ price of a zero-coupon bond with face value one maturing at time
$s$. Thus, $\bar{X}_{t}$ given in (\ref{xtH_1}), i.e., $\bar{X}_{t}%
=X_{t}-\overline{x}B_{t,T}-\overline{c}\int_{t}^{T}B_{t,s}ds$ is the remaining
wealth after the investor buys $\overline{x}$ zero-coupon bonds maturing at
$T$ and a continuum of $\overline{c}\,ds$ zero-coupon bonds maturing at $s$
for all $s\in\lbrack t,T]$. The continuous payments from this bonds holding
position exactly render the minimum terminal wealth $\overline{x}$ and
intermediate consumption $\overline{c}$ required by the HARA utility
(\ref{hara_utility}), i.e., $\overline{x}$ at time $T$ and $\overline{c}\,ds$
at each $s\in\lbrack t,T]$. After purchasing the bonds, the HARA investor
allocates the remaining wealth following the optimal policy under the CRRA
utility, i.e., $\pi_{C}^{mv}(t,Y_{t})\bar{X}_{t}$ and $\pi_{C}^{\theta
}(t,Y_{t})\bar{X}_{t}$ for the mean-variance and price of risk hedge
components. It leads to the optimal policy under HARA utility given in
(\ref{H_C_ratio}). To summarize these insights, the HARA investor first buys a
series of zero-coupon bonds to satisfy the minimum requirements for terminal
wealth and intermediate consumptions over the entire investment horizon, then
allocates her remaining wealth just as a pure CRRA investor.\footnote{A
similar intuition is developed in \cite{detemple1992optimal}. They show that
under a complete market with deterministic coefficients, investors with habit
formation will first invest in a perfectly safe portfolio that finances habit
consumption and then invest as a standard CRRA investor.} This simple but
important relationship demonstrates again the application potential of our
decomposition results for general incomplete market models in Theorems
\ref{thm_representation_new} and \ref{thm_integral_equation}. Such a structure
explains how HARA investors maximize their expected utility while fulfilling
the minimum requirements for terminal wealth and intermediate consumptions.

From an economic aspect, we can view relationship (\ref{H_C_ratio}) as a
decomposition of the HARA policy that separates the roles of the state
variable and the investor's wealth level: the state variable $Y_{t}$ impacts
the optimal policy only via the CRRA policies $\pi_{C}^{mv}(t,Y_{t})$ and
$\pi_{C}^{\theta}(t,Y_{t})$, while the current wealth level $X_{t}$ impacts
the optimal policy only via the ratio $\bar{X}_{t}/X_{t}$. Moreover, the
decomposition (\ref{H_C_ratio}) implies the following behavior of the optimal
policy. First, the HARA investors allocate more on risky assets as their
wealth level increases, since the ratio $\bar{X}_{t}/X_{t}$ monotonically
increases in the current wealth level $X_{t}$. In the limit as $X_{t}%
\ $approaches infinity, the optimal policies $\pi_{H}^{mv}(t,X_{t},Y_{t})$ and
$\pi_{H}^{\theta}(t,X_{t},Y_{t})$ converge to their CRRA counterparts, as the
multiplier $\bar{X}_{t}/X_{t}$ converges to one. These findings reconcile the
analysis from the Arrow-Pratt relative risk aversion coefficient defined by
$\gamma^{U}(x):=-(\partial U(t,x)/\partial x)^{-1}x\partial U^{2}%
(t,x)/\partial x^{2}.$ Under the HARA utility for terminal wealth, it is given
by $\gamma^{U}(X_{t})=\gamma X_{t}/(X_{t}-\overline{x})$, which decreases in
$X_{t}$ and converges to its CRRA counterpart $\gamma$ as $X_{t}$ approaches
infinity. Thus, the decrease in risk aversion degree motivates HARA investors
to invest more in risky assets as their wealth level increases.

\subsection{Closed-form HARA policy under Heston-SV model
\label{section:hn_closedform}}

In what follows, we solve the optimal policy in closed form for HARA investors
under the Heston-SV model, applying the closed-form relationship of
Proposition \ref{prop:hara_opt}. We begin by setting up the model. The asset
price $S_{t}$ follows
\begin{subequations}
\begin{equation}
dS_{t}/S_{t}=(r+\lambda V_{t})dt+\sqrt{V_{t}}dW_{1t}, \label{Heston_St}%
\end{equation}
and the variance $V_{t}$ follows
\begin{equation}
dV_{t}=\kappa(\theta-V_{t})dt+\sigma\sqrt{V_{t}}(\rho dW_{1t}+\sqrt{1-\rho
^{2}}dW_{2t}), \label{Heston_Vt}%
\end{equation}
where $W_{1t}$ and $W_{2t}$ are two independent standard one-dimensional
Brownian motions. Here, the parameter $r$ denotes the risk-free interest rate;
the parameter $\lambda$ controls the market price of risk; the positive
parameters $\kappa,$ $\theta,$ and $\sigma$ represent the rate of
mean-reversion, the long-run mean, and the proportional volatility of the
variance process $V_{t}$. We assume the Feller's condition holds:\ $2\kappa
\theta>\sigma^{2}$. The leverage effect parameter $\rho\in\lbrack-1,1]$
measures the instantaneous correlation between the asset return and the change
in its variance.

The Heston-SV model belongs to the class of affine models
\citep{duffiepansingleton00}. This model and its generalizations can match a
number of stylized facts in stock return and variance dynamics, and are thus
widely applied in the empirical literature (see, e.g.,
\cite{poteshman2001underreaction} and \cite{pan2002jump}). For example, when
$\rho<0$, the Heston-SV model captures the \textquotedblleft leverage
effect\textquotedblright, i.e., the stock return and variance are negatively
correlated. Besides, the market price of risk (Sharpe ratio) of the risky
asset, $\lambda\sqrt{V_{t}}$, is proportional to $\sqrt{V_{t}}$ and thus
increases in the volatility. As noted in \cite{liu2007portfolio}, this feature
of the Heston-SV model is empirically supported by \cite{campbell1999force}.
In what follows, we solve the optimal policy under the Heston-SV model for
investors with HARA utility over the terminal wealth, i.e.,
\end{subequations}
\begin{equation}
U(T,x)=\left(  x-\overline{x}\right)  ^{1-\gamma}/(1-\gamma
)\ \label{HARA_utility}%
\end{equation}
for some risk aversion coefficient $\gamma>1$. Note that in the case of
$\overline{x}=0$, it reduces to the CRRA utility, i.e.,
\begin{equation}
U(T,x)=x^{1-\gamma}/(1-\gamma). \label{CRRA_utility}%
\end{equation}

While the optimal policy for CRRA investors under the Heston-SV model is
available in closed form (see \cite{liu2007portfolio}), its counterpart for
HARA investors, to our best knowledge, is absent in the literature. It is
probably due to the common consensus that HARA utility usually causes
mathematical inconvenience. We circumvent this challenge by applying
Proposition \ref{prop:hara_opt} on decomposing the optimal policy for HARA
investors, and deduce the next corollary.

\begin{corollary}
\label{corollary:Heston}Under the Heston-SV model in (\ref{Heston_St}) --
(\ref{Heston_Vt}) and the HARA utility (\ref{HARA_utility}) over terminal
wealth, the optimal policy $\pi_{H}(t,X_{t},V_{t})$ can be solved in
closed-form as
\begin{equation}
\pi_{H}(t,X_{t},V_{t})=\frac{\bar{X}_{t}}{X_{t}}\pi_{C}(t,V_{t}),
\label{HN_hara}%
\end{equation}
where $\bar{X}_{t}$ follows
\begin{equation}
\bar{X}_{t}=X_{t}-\overline{x}\exp(-r(T-t)), \label{Xbar_Heston}%
\end{equation}
and $\pi_{C}(t,V_{t})$ is the optimal policy under the corresponding CRRA
utility (\ref{CRRA_utility}). It admits the following closed-form
decomposition%
\begin{equation}
\pi_{C}(t,V_{t})=\pi_{C}^{mv}(t,V_{t})+\pi_{C}^{\theta}(t,V_{t}),
\label{piHN_crra}%
\end{equation}
where the mean-variance component and the price of risk hedge component are
given by $\pi_{C}^{mv}(t,V_{t})=\lambda/\gamma$ and $\pi_{C}^{\theta}%
(t,V_{t})=-\rho\sigma\delta\phi(T-t)$. The function $\phi(\tau)$ is defined
by
\[
\phi(\tau)=\frac{2\left[  \exp(\varsigma\tau)-1\right]  }{(\tilde{\kappa
}+\varsigma)\left[  \exp(\varsigma\tau)-1\right]  +2\varsigma},
\]
with $\tilde{\kappa}=\kappa-(1-\gamma)\lambda\rho\sigma/\gamma,$
$\delta=-\left(  1-\gamma\right)  \lambda^{2}/(2\gamma^{2})$, and
$\varsigma=\sqrt{\tilde{\kappa}^{2}+2\delta\sigma^{2}\left(  \rho^{2}%
+\gamma(1-\rho^{2})\right)  }$. Finally, the investor-specific price of risk,
i.e., the shadow price of market incompleteness, can be solved in closed form
as $\theta^{u}(t,V_{t};T)=\left(  0,\theta_{2}^{u}(t,V_{t};T)\right)  ^{\top}%
$, where
\begin{equation}
\theta_{2}^{u}(t,V_{t};T)=-\frac{1}{\rho}\gamma\sqrt{1-\rho^{2}}\sqrt{V_{t}%
}\pi_{C}^{\theta}(t,V_{t})=\gamma\sqrt{1-\rho^{2}}\sigma\delta\phi
(T-t)\sqrt{V_{t}}. \label{thetau_explicit}%
\end{equation}

\end{corollary}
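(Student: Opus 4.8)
The plan is to apply Proposition \ref{prop:hara_opt} directly, since the Heston-SV model has a nonrandom (in fact constant) interest rate $r$, which is exactly the hypothesis required. Thus the ratio relationship \eqref{H_C_ratio} immediately yields the multiplicative structure \eqref{HN_hara}. First I would verify the model fits our general framework: the Heston-SV dynamics \eqref{Heston_St}--\eqref{Heston_Vt} are a special case of \eqref{SDE_price}--\eqref{SDE_Y} with one risky asset ($m=1$), two Brownian motions ($d=2$), state variable $Y_t=V_t$, and volatility $\sigma(t,V)=(\sqrt{V},0)$, so the market is indeed incomplete ($d>m$). Since the interest rate $r$ is constant, Proposition \ref{prop:hara_opt} applies. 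The bond price reduces to $B_{t,T}=\exp(-r(T-t))$, and because the utility \eqref{HARA_utility} is over terminal wealth only (i.e.\ $w=0$, no intermediate consumption), formula \eqref{xtH_2} gives $\bar X_t = X_t - \bar x\exp(-r(T-t))$, which is exactly \eqref{Xbar_Heston}. Combining \eqref{H_C_ratio} with $\pi^r\equiv 0$ (deterministic rate) delivers $\pi_H = (\bar X_t/X_t)\,\pi_C$.

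The substantive computational work is deriving the closed-form CRRA policy $\pi_C(t,V_t)$ in \eqref{piHN_crra}, together with its component decomposition and the investor-specific price of risk $\theta^u$. Here I would reduce to the known CRRA solution under the Heston-SV model, which is available from \cite{liu2007portfolio}. The mean-variance component follows from \eqref{pmv_crra_explicit}: with $\sigma(t,V)=(\sqrt V,0)$ and excess return $\lambda V$, we get $\pi_C^{mv}=(\sigma\sigma^\top)^{-1}(\mu-r)/\gamma = (\lambda V)/(V\gamma)=\lambda/\gamma$, confirming the stated myopic component. The hedging component $\pi_C^\theta=-\rho\sigma\delta\phi(T-t)$ requires solving the relevant Riccati-type ODE. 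The standard route is to posit a value function of exponential-affine form $J(t,x,V)=\frac{x^{1-\gamma}}{1-\gamma}\exp\!\big(A(T-t)+B(T-t)V\big)$, substitute into the HJB equation, and reduce to a matrix Riccati ODE for $B(\tau)$ whose solution is the stated $\phi(\tau)$ with the given $\tilde\kappa$, $\delta$, and $\varsigma$. Differentiating the indirect utility and reading off the hedge term via the envelope/feedback formula then gives $\pi_C^\theta$.

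The genuinely new piece, and what I expect to be the main obstacle, is deriving the closed-form investor-specific price of risk $\theta^u(t,V_t;T)$ in \eqref{thetau_explicit}. By Proposition \ref{prop:hara_opt} this coincides with its CRRA counterpart, so I would invoke the characterization equation \eqref{integral_theta} of Theorem \ref{thm_integral_equation}, specialized to the CRRA case, together with the least-favorable-completion constraint $\pi^F_v\equiv 0$. The key structural fact is that the fictitious asset is driven by the orthogonal Brownian direction: with $\sigma(t,V)=(\sqrt V,0)$, the orthogonality condition \eqref{ortho_cond} forces $\sigma^F\propto(0,1)$, so $\theta^u$ has vanishing first component, giving the form $\theta^u=(0,\theta_2^u)^\top$. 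To pin down $\theta_2^u$, I would exploit the relation $\sigma^+\sigma - I_d$ appearing in \eqref{integral_theta}, which projects onto the unhedgeable direction, and then match the resulting expression to the hedge component already computed. Concretely, I expect the identity $\theta_2^u=-\frac{1}{\rho}\gamma\sqrt{1-\rho^2}\sqrt{V_t}\,\pi_C^\theta$ to emerge from relating the second-coordinate shadow price to the first-coordinate hedging demand through the correlation structure $(\rho,\sqrt{1-\rho^2})$ of the variance shock. The subtlety will be carefully tracking the projection matrix and the sign/normalization conventions so that substituting $\pi_C^\theta=-\rho\sigma\delta\phi(T-t)$ yields the clean final form $\gamma\sqrt{1-\rho^2}\sigma\delta\phi(T-t)\sqrt{V_t}$; verifying consistency with the terminal condition \eqref{integral_terminal}, namely $\phi(0)=0$ so that $\theta^u(T,\cdot;T)=0$, provides a useful check on the algebra.
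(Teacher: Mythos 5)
Your proposal is correct and follows essentially the same route as the paper: invoke Proposition \ref{prop:hara_opt} with $w=0$ and constant $r$ to get the multiplicative structure (\ref{HN_hara})--(\ref{Xbar_Heston}), import the CRRA components from \cite{liu2007portfolio}, and obtain $\theta^{u}$ by combining the orthogonality condition (which kills the first coordinate) with the CRRA characterization equation and the hedge component. Your observation that the link between $\theta_{2}^{u}$ and $\pi_{C}^{\theta}$ runs through the $(\rho,\sqrt{1-\rho^{2}})$ loading of the variance shock is precisely the mechanism the paper uses (via the proportionality of the two Malliavin-derivative coordinates of $V_{s}$), and the $\phi(0)=0$ consistency check is a sensible addition.
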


\begin{proof}
See \ref{Appendix_proof_Corollary_Heston}.
\end{proof}

The above explicit results illustrate again how our decomposition can be
potentially applied to obtain closed-form optimal policies under sophisticated
settings. They will be further applied in Section \ref{section:numerical_all}
for conducting economically insightful comparative studies. Closed-form
solution (\ref{thetau_explicit}) for the investor-specific price of risk
$\theta^{u}$ provides a concrete example for solving the complex equation
system developed in Theorem \ref{thm_integral_equation}, demonstrating the
existence, uniqueness, and differentiability of the solution for that
particular model. Besides, straightforward calculation based on
(\ref{def_thetau}) renders $\theta_{2}^{u}(t,V_{t};T)=(\mu_{t}^{F}%
-r)/\sigma_{2}^{F}(t,V_{t})$, i.e., $\theta_{2}^{u}(t,V_{t};T)$ also equals
the Sharpe ratio of the fictitious asset. Similar to the market price of risk
$\lambda\sqrt{V_{t}}$, the investor-specific price of risk increases in
variance via the square-root term $\sqrt{V_{t}}$. However, it also
additionally depends on the remaining investment horizon $T-t$, as revealed in
representation (\ref{thetau_representation_v}).

\subsection{Empirical model estimation\label{Section_estimation}}

To achieve empirical validity up to date in the subsequent analysis, we
estimate the parameters of Heston-SV model based on the daily data of SPDR
S\&P 500 ETF from the recent ten years, i.e., 2010/1/3 to 2019/12/31. We
employ the maximum likelihood estimation approach developed by
\cite{AitSahalia_Li_Li_2020_MMLE}, which is an efficient method for estimating
continuous-time models with latent factors. The annualized parameter estimates
are obtained as
\begin{equation}
\kappa=12.5850,\text{ }\rho=-0.8141,\text{ }\lambda=6.6992,\text{ }%
\theta=0.0193,\text{ }r=0.0051,\text{ and }\sigma=0.5385.
\label{Heston_parameter}%
\end{equation}
The large value of $\kappa$ indicates that the variance process is highly
mean-reverting. The large negative value of $\rho$ implies a strong leverage
effect, i.e., the price changes of the risky asset and its variance process
are negatively correlated. Moreover, the market is characterized by a low
long-run volatility $\sqrt{\theta}\approx0.1389$, a low risk-free rate $r$,
and a high market price of risk coefficient $\lambda$. These features reflect
the behavior of the US financial market between 2010 and 2020, which is the
longest bull market in the history, associated with remarkably low volatility
and interest rate, as well as significantly high returns (e.g., the S\&P 500
Index surges by $185\%$ during this period).

\section{Static and dynamic impacts of
wealth-dependence\label{section:numerical_all}}

\label{section:wealth_impact}

In this section, we further reveal the impact of wealth-dependent utility on
optimal portfolio allocation using the representative Heston-SV model with
HARA utility. First, we investigate static impacts from wealth-dependent
utility. Second, we examine dynamic aspects, namely cycle-dependence of
optimal policy. Then, we quantify the dynamic impacts of the wealth-dependent
utility on investment performance for delegated portfolio management, i.e.,
how the investment performance varies with the investor's initial wealth level
and its interaction with other model parameters. It induces a risk-return
trade-off that is unseen under wealth-independent utilities. Next, we
demonstrate a novel hysteresis effect in the portfolio allocation, which stems
from cycle-dependence of optimal policy under the wealth-dependent HARA
utility. We show that the optimal policy and investment performance depend not
only on the composition of market states, but also on their sequence of
occurence. Thanks to the closed-form optimal policy under the estimated
Heston-SV model of Section \ref{section:hn_closedform}, we can use path
simulations to quantify hysteresis effect by taking different market scenarios
into account, instead of using just one stock price path. Our analysis
provides novel valuable understandings for delegated portfolio management.

\subsection{Static impacts from wealth-dependence\label{Section_static_impact}%
}

We now apply the closed-form formulae (\ref{HN_hara}) -- (\ref{piHN_crra})
under the Heston-SV model to explicitly examine how the optimal policy
$\pi_{H}(t,X_{t},V_{t})$ is impacted by the wealth-dependent property of the
HARA utility. In particular, we examine static impacts via three channels: the
wealth level $X_{t}$, the interest rate $r$, and the investment horizon $T-t$.
For the subsequent impact analysis, it helps that the corresponding optimal
policy $\pi_{C}(t,V_{t})$ under CRRA utility is independent of the wealth
level $X_{t}$ and the interest rate $r$. In the three panels of Figure
\ref{fig:HN_policy_vary}, we plot how the optimal policies vary with the
current wealth level $X_{t}$ (left), interest rate $r$ (middle), and
investment horizon $T-t$ (right). The red square and blue circle curves denote
the policy under CRRA and HARA utilities. \ 

By (\ref{HN_hara}), we see that the optimal policy $\pi_{H}(t,X_{t},V_{t})$
under HARA utility is impacted by the wealth level $X_{t}$ via the ratio
$\bar{X}_{t}/X_{t}$. By (\ref{Xbar_Heston}), we can express this ratio as
\begin{equation}
\frac{\bar{X}_{t}}{X_{t}}=1-\left(  \frac{X_{t}}{\overline{x}}\right)
^{-1}\exp(-r(T-t)). \label{XbarX_formula}%
\end{equation}
Here, $X_{t}/\overline{x}$ measures the current wealth level $X_{t}$ relative
to the minimum requirement $\overline{x}$. This formula explicitly shows that
$\bar{X}_{t}/X_{t}$ increases concavely in $X_{t}$ with $\lim_{X_{t}%
\rightarrow\infty}\bar{X}_{t}/X_{t}=1$. The left panel of Figure
\ref{fig:HN_policy_vary} shows that the impact of wealth level can be
substantial. As $X_{t}$ increases from $2\overline{x}$ to $10\overline{x}$,
the allocation on the risky asset from HARA investors increases by
approximately $70\%$ from $97\%$ to $167\%$. But such an impact diminishes as
the wealth level becomes higher, i.e., the minimum requirement constraint
becomes less binding. Such a concave increase in the allocation for risky
asset with respect to investor's wealth is empirically observed in househould
finance literatue; see, e.g., \cite{Roussanov2010},
\cite{wachter2010household}, and \cite{calvet2014twin}.%

\begin{figure}[H]%
%

\begin{center}%
\begin{tabular}
[c]{c}%
{\includegraphics[
trim=0.795944in 0.000000in 0.596958in 0.000000in,
height=2.2972in,
width=6.6608in
]%
{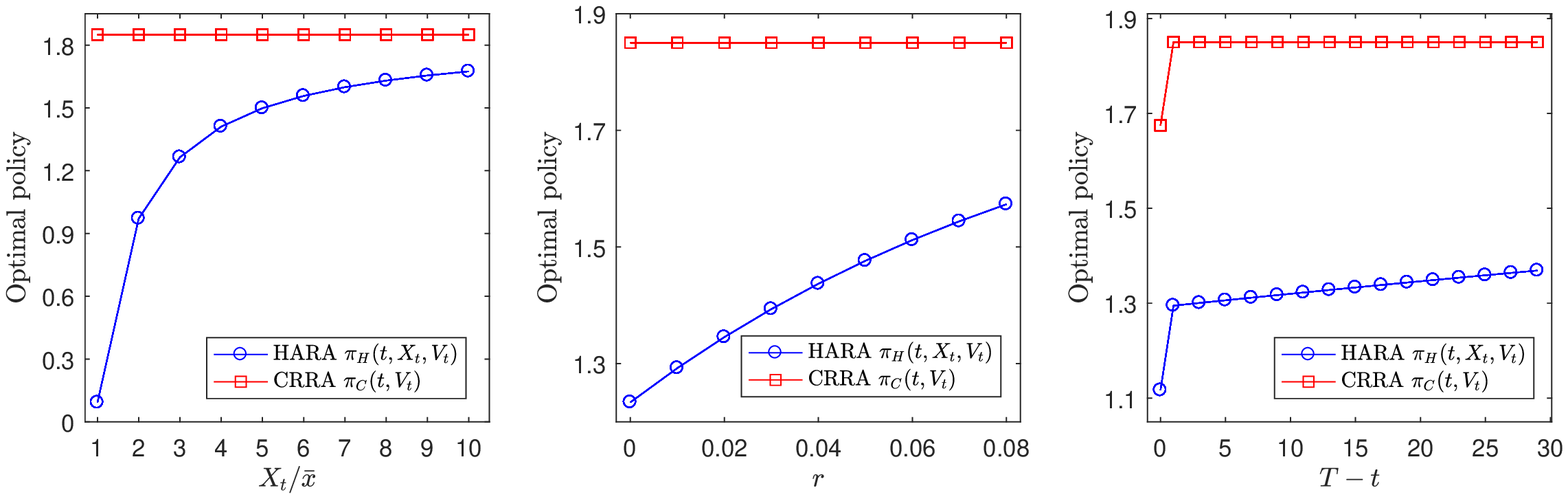}%
}
\end{tabular}%
\caption
{Behaviors of optimal policy in the Heston-SV model under the HARA and CRRA utilities.}%
\label{fig:HN_policy_vary}%
\end{center}%
%

\renewcommand{\baselinestretch}{1.0}%
\begin{small}%
The three panels plot the optimal policies in the Heston-SV model under the
HARA (blue circle) and CRRA (red square) utilities at different wealth level
(left), interest rate (middle), and investment horizon (right). These figures
are generated using our estimated parameters in (\ref{Heston_parameter}) with
the risk aversion parameter $\gamma=4$. Besides, we set $T-t=10$ years in the
left and middle panels, and $X_{t}/\overline{x}=3$ in the middle and right
panels.%
\end{small}%
\noindent
\end{figure}%

We next analyze how the optimal policy changes with the interest rate $r$.
With a higher interest rate, the risk-free asset delivers a higher return.
Thus, we might expect that investors should increase their allocation on the
risk-free asset, and accordingly decrease their allocation on the risky asset.
However, thanks to our closed-form optimal policy, we can show that the
investors with HARA utility will actually behave in the opposite way, i.e.,
they will increase their allocation on the risky asset under a higher interest
rate. Indeed, by the closed-form formulae (\ref{HN_hara}) and
(\ref{Xbar_Heston}), we find that the optimal policy $\pi_{H}(t,X_{t},V_{t})$
depends on the interest rate $r$ via the ratio $\bar{X}_{t}/X_{t}$, which
increases concavely in $r$ according to (\ref{XbarX_formula}). We can
interpret this behavior according to the optimal investment strategy described
after Proposition \ref{prop:hara_opt}. While a higher risk-free rate does not
impact the optimal policy under CRRA utility, it lowers the bond price
$B_{t,T}=\exp(-r(T-t))$. So, it costs less for HARA investors to ensure their
minimum requirement $\overline{x}$ by investing in bonds, and thus increases
their remaining wealth $\bar{X}_{t}$ for investing in the risky asset
following the CRRA policy. The above analysis is demonstrated by the middle
panel of Figure \ref{fig:HN_policy_vary}, where we observe that the optimal
policy $\pi_{H}(t,X_{t},V_{t})$ increases concavely with $r$. The magnitude of
such an impact is indeed sizeable: the optimal policy $\pi_{H}(t,X_{t},V_{t})$
increases by approximately $30\%$ as the interest rate increases from $0$ to
$0.08$.

Finally, by the closed-form formulae (\ref{piHN_crra}), the investment horizon
$T-t$ affects the optimal policy $\pi_{C}(t,V_{t})$ for CRRA investors only
through the price of risk hedge component $\pi_{C}^{\theta}(t,V_{t}%
)=-\rho\sigma\delta\phi(T-t)$. When $\rho<0$, this component increases
monotonically in $T-t$. However, for HARA investors, in addition to the
dependence through the corresponding CRRA policy as mentioned above, their
optimal policy is impacted by $T-t$ also via the ratio $X_{t}/\overline{x}$
according to (\ref{HN_hara}) and (\ref{XbarX_formula}), which increases
concavely in $T-t$. The right panel of Figure \ref{fig:HN_policy_vary} shows
that both the optimal policies $\pi_{C}(t,V_{t})$ and $\pi_{H}(t,X_{t},V_{t})$
increase with the investment horizon $T-t$. For the CRRA case, the sharp
increase due to the price of risk hedge component mainly occurs when the
investment horizon is short; under longer investment horizons, the optimal
policy becomes almost insensitive to $T-t$. For the HARA case, however, beside
the similar sharp increase for short horizons, $\pi_{H}^{\theta}(t,X_{t}%
,V_{t})$ keeps increasing in $T-t$ even for longer investment horizons. We can
interpret these behaviors as follows. Under CRRA utility, a longer investment
horizon increases the uncertainty in the price of risk. It leads to a larger
hedging demand for risk-averse investors, and results in a larger hedge
component in the optimal policy. On the other hand, the increase in $\pi
_{H}^{\theta}(t,X_{t},V_{t})$ under the HARA utility is generated by a
combination of two effects: first, the decrease of the bond price
$B_{t,T}=\exp(-r(T-t))$ and thus the increase of the remaining wealth $\bar
{X}_{t}$ in (\ref{Xbar_Heston}), and second, the increase of the corresponding
CRRA hedging demand as discussed above. The first effect is significant even
for long investment horizons, leading to a more lasting impact from the
investment horizon.

The above comparative analysis illustrates how we can apply our theoretical
decompositions to understand the behavior of the optimal policy under
incomplete market models with wealth-dependent utilities. In particular, we
have explicitly shown that the wealth-dependent\ property of the HARA utility
should not be taken only \textquotedblleft literally\textquotedblright.
Precisely speaking, by Proposition \ref{prop:hara_opt}, the optimal policy of
HARA investors is determined by the remaining wealth $\bar{X}_{t}$, which
depends on the current wealth level $X_{t}$, interest rate $r$, and investment
horizon $T-t$ according to (\ref{Xbar_Heston}). It gives further understanding
of what is induced by the wealth-dependent property of the HARA utility.

\subsection{Cycle-dependence of optimal policy\label{sec:cycle_dependence}}

In addition to the above static analysis of optimal policies at an arbitrary
fixed time $t,$ we now employ the Heston-SV model as representative model to
analyze the significant wealth impact of HARA utility from a dynamic
perspective, i.e., how the wealth-dependent property of HARA utility interacts
with the complex market dynamics in determining the optimal allocation
strategy and overall investment performance. In the literature, the vast
majority of studies on optimal dynamic portfolio allocation focuses on solving
optimal policies and then conducting static impact analysis from model
specifications under wealth-independent utilities; see e.g., \cite{Wachter02},
\cite{chacko2005dynamic}, and \cite{liu2007portfolio}, among others. As an
important issue in our setting, the dynamic impacts on investment performance
over different asset allocation strategies and/or investor profiles remain to
be investigated. Here, we aim to shed light on this under an environment with
stochastic volatility and wealth-dependent utility. A recent study on the
dynamic impacts from different policies can be found in \cite{MoreiraMuir2019}%
. It relies on a stochastic volatility model and shows that ignoring the hedge
component in the optimal policy would lead to a substantial utility loss.
However, \cite{MoreiraMuir2019} assume that the investor shares the recursive
utility of \cite{duffie1992stochastic}, which generalizes the
wealth-independent CRRA utility. Then, the derived optimal policy is
independent of the investor's wealth level.

In contrast, the wealth-dependent property of HARA utility leads to a
sophisticated dependence of the optimal portfolio policy on the entire path of
asset price and its volatility. This feature is totally absent under the
simple CRRA utility. The ratio $\bar{X}_{t}/X_{t}$ given in
(\ref{XbarX_formula}) is determined by the investor's wealth level $X_{t}$,
which evolves according to the following explicit dynamics%
\begin{equation}
dX_{t}=rX_{t}dt+\left[  X_{t}-\overline{x}\exp(-r(T-t))\right]  [\lambda
/\gamma-\rho\sigma\delta\phi(T-t)][\lambda V_{t}dt+\sqrt{V_{t}}dW_{1t}].
\label{wealth_equation_HARA}%
\end{equation}
It is obtained by plugging in the HARA policy (\ref{HN_hara}), specifying
$\mu_{t}=r+\lambda V_{t}$, $\sigma_{t}=\sqrt{V_{t}}$, as well as $r_{t}=r$ in
(\ref{opt_constraint}) according to the specification of Heston-SV model in
(\ref{Heston_St}), and setting $c_{t}\equiv0$ since intermediate consumption
is not included in investor's utility. The wealth dynamics
(\ref{wealth_equation_HARA}) obviously depends on the entire path of the stock
price $S_{t}$ and variance $V_{t},$ due to their common Brownian shock
$W_{1t}$. Thus, the optimal policy of HARA investors is also influenced by the
historical path of market dynamics through the ratio $\bar{X}_{t}/X_{t}$ in
(\ref{HN_hara}). On the contrary, under CRRA utility, we see from
(\ref{piHN_crra}) that the optimal policy $\pi_{C}(t,V_{t})$ is simply
independent of both current wealth level $X_{t}$ and variance $V_{t}$, and
only depends on the remaining investment horizon $T-t$. Thus, the optimal
policy of CRRA investor is entirely deterministic and does not depend on the
realization of the model dynamics (\ref{Heston_St}) -- (\ref{Heston_Vt}). This
stark contrast has important implications on delegated portfolio management.
The delegated portfolio manager for HARA investors must adjust the optimal
policy in a stochastic way according to the realization of market dynamics,
while, in contrast, portfolio managers for CRRA investors can simply determine
the optimal policy for the entire investment horizon by (\ref{piHN_crra}) at
the beginning of the period.

Let us illustrate the dynamic impact of the HARA utility, as discussed above
in theory, by examining how the initial wealth level of HARA investors affects
their optimal portfolio allocation over the entire investment horizon.
Consider a market where the stock price $S_{t}$ and its variance $V_{t}$
follow the Heston-SV model with parameters given in (\ref{Heston_parameter}).
Without loss of generality, the initial price and variance are set as
$S_{0}=100$ and $V_{0}=\theta=0.0195$. We consider two investors with HARA
utilities over terminal wealth for an investment horizon of $T=10$ years.
Their risk aversion coefficient and minimum requirement for terminal wealth
are set to $\gamma=4$ and $\overline{x}=10^{6}$, i.e., one million. The two
investors only differ in their initial wealth levels: the high-wealth investor
has an initial wealth of $X_{0}^{H}=5\times10^{6}$, while the low-wealth
investor has an initial wealth of $X_{0}^{L}=2\times10^{6}$. Thus, their
ratios of initial wealth over the minimum requirement are equal to $X_{0}%
^{H}/\overline{x}=5$ and $X_{0}^{L}/\overline{x}=2$. Denote the optimal
policies of the two investors by $\pi_{t}^{H}$ and $\pi_{t}^{L}$. Then, the
ratio $\pi_{t}^{H}/\pi_{t}^{L}$ measures how the optimal policy of the
high-wealth investor differs from that of the low-wealth investor. We simulate
market scenarios and the corresponding dynamics of optimal policies $\pi
_{t}^{H}$ and $\pi_{t}^{L}$ for the entire investment period. We conduct the
simulations using a standard Euler scheme on the Heston-SV model
(\ref{Heston_St}) -- (\ref{Heston_Vt}). Along the simulated path, we evaluate
the optimal policies $\pi_{t}^{H}$ and $\pi_{t}^{L}$ via (\ref{HN_hara}), and
the investor's wealth evolves according to equation
(\ref{wealth_equation_HARA}).

Figure \ref{fig: HN_simu_path} shows a representative market scenario, i.e., a
simulated path of the policy ratio $\pi_{t}^{H}/\pi_{t}^{L}$ (in red
dashdotted with the right $y$-axis in both panels), and the corresponding
paths of stock price $S_{t}$ (in blue solid with the left $y$-axis in the
upper panel) and its realized variance $RV_{t}$ (in black solid with the left
$y$-axis in the lower panel). Given the simulated path of the spot variance
$V_{t}$, we approximate the realized variance by averaging the daily spot
variance over the past one-month time window, i.e., $RV_{t}=\Sigma_{i=0}%
^{n-1}V_{t-i\Delta}/n$ with $\Delta=1/252$ and $n=22$. We classify the market
regimes following the method proposed by \cite{lunde2004duration}: a
transition from a bear to bull (resp. bull to bear) market is triggered when
the stock price increases by $20\%$ (resp. drops by $15\%$) from its lowest
(resp. highest) level in the current bear (resp. bull) market. {We represent
the bear markets by the shaded areas in Figure \ref{fig: HN_simu_path}. Not
surprisingly, the realized variance spikes during bear markets due to the
leverage effect, i.e., the changes of stock prices and variance are negatively
correlated.}

Recall that the optimal policy under CRRA utility, as given in
(\ref{piHN_crra}), is independent of the variance level $V_{t}$, stock price
$S_{t}$, and wealth level $X_{t}$. Thus, we simply have $\pi_{t}^{H}/\pi
_{t}^{L}=1$ for all $t$ under CRRA utility, i.e., the optimal policies from
the high-wealth and low-wealth investors always coincide. However, under HARA
utility, we observe that the ratio $\pi_{t}^{H}/\pi_{t}^{L}$ varies markedly
during the investment period. The relative difference in $\pi_{t}^{H}$ and
$\pi_{t}^{L}$ can be as large as $50\%$, i.e., the ratio $\pi_{t}^{H}/\pi
_{t}^{L}$ can exceed $1.5$. In particular, it is negatively correlated with
the stock price, but positively correlated with the realized volatility. It is
not an incidental result out of a specific path. With $10^{4}$ trials of
simulations, we find that the average correlation between $\pi_{t}^{H}/\pi
_{t}^{L}$ and $S_{t}$, resp.\ $RV_{t}$, is approximately $-0.90$,
resp.\ $0.10$; both are statistically significant at $0.1\%$ level. We can
interpret such a pattern as follows. Under our estimated Heston-SV model, both
HARA investors hold positive positions in the risky asset, so their wealth
levels decrease when the stock price drops. Moreover, according to our
closed-form formulae (\ref{HN_hara}) and (\ref{XbarX_formula}), the impact of
the wealth level on the optimal policy is more significant when the wealth
level is low, i.e., the minimum constraint is more binding, as shown by the
left panel of Figure \ref{fig:HN_policy_vary}. Thus, when stock price drops
(and realized variance increases from the leverage effect), we expect to see a
larger difference in the optimal policies of high- and low-wealth HARA
investors, resulting in a higher ratio $\pi_{t}^{H}/\pi_{t}^{L}$. In contrast,
in the bull market when stock price increases, both investors are wealthier
and behave more similarly, it leads to a smaller ratio $\pi_{t}^{H}/\pi
_{t}^{L}$.%

\begin{figure}[H]%
%

\begin{center}%
\begin{tabular}
[c]{c}%
{\includegraphics[
trim=0.422756in 0.282893in 0.000000in 0.284227in,
height=4.0568in,
width=5.5478in
]%
{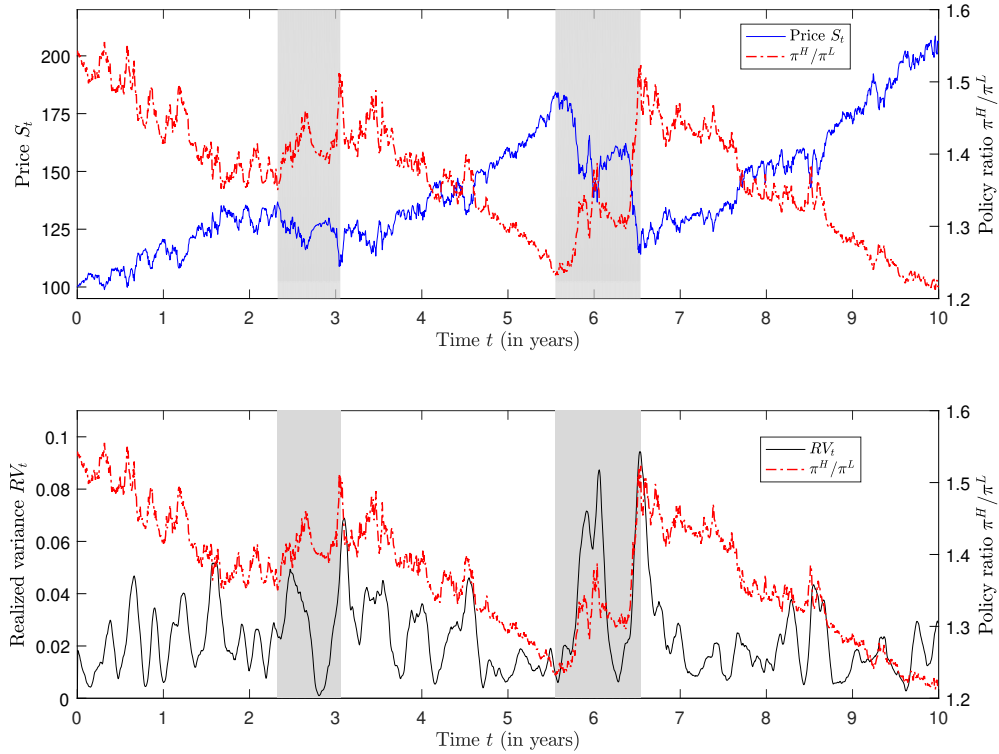}%
}
\end{tabular}%
\caption{A simulated path of stock price $S_{t}%
$, its realized variance $RV_{t}$, and optimal policy ratio $\pi_{t}^{H}%
/\pi_{t}^{L}$.}%
\label{fig: HN_simu_path}%
\end{center}%
%

\renewcommand{\baselinestretch}{1.0}%
\begin{small}%
This figure plots a simulated path of stock price $S_{t}$ in blue solid with
left $y$-axis in the upper panel, realized variance $RV_{t}$ in black solid
with left $y$-axis in the lower panel, and that of the corresponding optimal
policy ratio between the high-- and low--wealth investors under HARA utility,
i.e., $\pi_{t}^{H}/\pi_{t}^{L},$ in red dashdotted with right $y$-axis in both
two panels. The ratios are calculated by the closed-form formulae
(\ref{HN_hara}) and (\ref{piHN_crra}). The shaded areas depict the bear market
regimes. The parameters for the Heston SV model are chosen as our estimated
ones in (\ref{Heston_parameter}).%
\end{small}%
\noindent
\end{figure}%

This example illustrates the impact of HARA utility from a dynamic
perspective. As opposed to being fully determined by their different initial
wealth levels, the investor's optimal investment decisions indeed depend on
the historical path of market performance. In particular, cycles matter for
HARA investors. As we can see from Figure \ref{fig: HN_simu_path}, the optimal
policies of high- and low-wealth HARA investors tend to diverge, as shown by a
larger ratio $\pi_{t}^{H}/\pi_{t}^{L}$, during the bear markets (marked by the
shaded areas), where stock price drops and volatility spikes. However, this
important market cycle dependence is totally absent under the
wealth-independent CRRA utility. Such a dependence is important for delegated
portfolio management. In particular, investment advices can be considerably
erroneous and lead to completely suboptimal strategies if we ignore the
wealth-dependence property of investor's HARA utility.

\subsection{Impacts on investment performance for delegated portfolio
management\label{sec:impact_performance}}

The previous analysis documents that the initial wealth of HARA investor
induces a substantial impact on their portfolio allocation. In particular, the
HARA investor with a higher initial wealth will allocate more of her wealth to
the risky asset, as the ratio $\pi_{t}^{H}/\pi_{t}^{L}$ is greater than $1$
over the investment horizon. Thus, it is intuitive that a high-wealth HARA
investor enjoys higher returns but is exposed to more risk. However, precisely
quantifying this intuition on investment performance remains an open problem
with practical relevance for delegated portfolio management. For example, how
do the return mean and Sharpe ratio from the optimal investment strategy vary
for HARA investors with different initial wealth levels? While the impact
analysis as shown in Figure \ref{fig: HN_simu_path} partly sheds light on such
issues, it cannot give reliable answers owing to the limitation of capturing
only one specific market scenario among others. This type of limitation is
common in backtesting investment strategies. In contrast, thanks to our
closed-form optimal policy (\ref{HN_hara}), we can precisely and
quantitatively analyze the over-all impacts on investment performance by
taking various market scenarios into account under our estimated Heston-SV
model. Specifically, for the purpose of covering different market scenarios,
we simulate a large number of paths for the investment problem for HARA
investors with different levels of initial wealth.\footnote{Without the
closed-form optimal policy (\ref{HN_hara}), this task would become
computationally intensive due to the nested simulations for evaluating the
optimal policies during the investment period.}

For each simulated path, we compute the excess return mean, volatility, and
Sharpe ratio over the entire investment horizon $[0,T]$ using the\emph{
}annualized daily excess returns of the wealth, i.e., $R_{i}=\ln(X_{i\Delta
}/X_{(i-1)\Delta})/\Delta-r$ with $\Delta=1/252$ and $i=1,2,...,N,$ where $N$
denotes the total number of trading days in the entire investment horizon,
i.e., $T=N\Delta$. The excess return mean and volatility are calculated as the
sample mean and standard deviation of daily excess returns, i.e., $\bar{R} =
\sum_{i=1}^{N} R_{i}/N$ and $SD = \sqrt{\sum_{i=1}^{N}(R_{i}-\bar{R}%
)^{2}/(N-1)}$. Then, the Sharpe ratio is simply computed as $SR=\bar{R}/SD$.
Besides, we also compute the maximum drawdown of wealth, that measures the
risk of extreme losses in the investment strategy. It is calculated as
$MD=\max_{0\leq n\leq N}(1-M_{n}^{\text{min}}/M_{n}^{\text{max}})$,
where$\ M_{n}^{\text{max}}$ and $M_{n}^{\text{min}}$ denote the running
maximum and minimum of investor's wealth until day $n$, i.e., $M_{n}%
^{\text{max}}=\max_{0\leq k\leq n}X_{k\Delta}$ and $M_{n}^{\text{min}}%
=\min_{0\leq k\leq n}X_{k\Delta}.$ Then, to evaluate the over-all performance
across different market scenarios, we compute the average of these
time--series performance statistics across all the simulated paths. Those
averages estimate the expectations $E[\bar{R}],$ $E[SD],$ $E[SR],$ and $E[MD]$
over different realizations of market dynamics. In this implementation, we
employ a large sample of $10^{4}$ simulated paths for each initial wealth
level. The left panel of Figure \ref{fig:HN_performance} plots the estimates
for average excess return means (red square) and volatilities (blue circle)
for different initial wealth levels; the middle and right panel plot the
estimates for average Sharpe ratio and maximum drawdown. The dashed flat lines
in each panel show the corresponding performance statistics of a CRRA
investor, whose policy is independent of the initial wealth level.%

\begin{figure}[H]%
%

\begin{center}%
\begin{tabular}
[c]{c}%
{\includegraphics[
trim=0.914232in 0.029009in 0.925660in 0.218484in,
height=2.1352in,
width=6.5017in
]%
{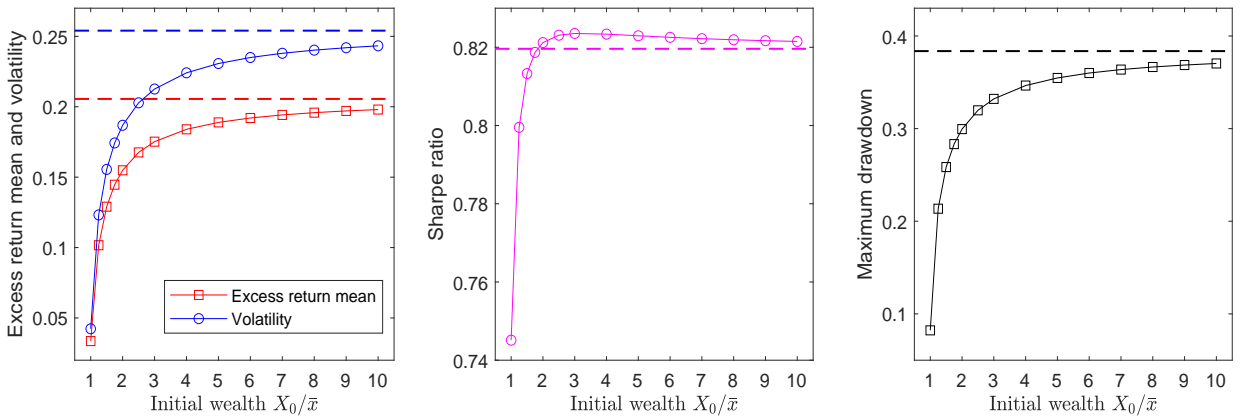}%
}
\end{tabular}%
\caption
{Performance statistics of dynamic optimal portfolio allocation with different initial wealth levels under the HARA utility.}%
\label{fig:HN_performance}%
\end{center}%
%

\renewcommand{\baselinestretch}{1.0}%
\begin{small}%
This figure the estimates of average excess return mean and volatility (left),
Sharpe ratio (middle), and maximum drawdown (right) for the investment problem
under HARA utility and our empirically estimated Heston-SV model given in
(\ref{Heston_parameter}), given that investors dynamically follow the optimal
policy. The corresponding levels under CRRA utility are represented by the
dashed lines. The averages are computed over $10^{4}$ simulated paths. We set
the risk aversion and investment horizon as $\gamma=4$ and $T-t=10$ years.%
\end{small}%
\noindent
\end{figure}%

Figure \ref{fig:HN_performance} reveals that the initial wealth level greatly
influences the investment performance as expected from the results of the
previous section. As shown in Figure \ref{fig: HN_simu_path}, HARA investors
with a higher initial wealth invest more in the risky asset, i.e., $\pi
_{t}^{H}/\pi_{t}^{L}>1$. In Figure \ref{fig:HN_performance}, we see that, on
average, it leads to a higher mean and Sharpe ratio, but also a larger
volatility and maximum drawdown. From this perspective, we can interpret the
impact on investment performance as a risk-return trade-off induced by
different initial wealth levels of HARA investors. Such a trade-off is
sizeable: as the initial wealth $X_{0}$ increases from $\overline{x}$ to
$10\overline{x}$, the average excess return mean increases from $3.5\%$ to
$23.5\%$, while the average excess return volatility increases from $4.2\%$ to
$25.4\%$ and the average maximum drawdown of wealth increases from $8.2\%$ to
$38.1\%$. Thus, a high-wealth HARA investor with initial wealth $X_{0}%
=10\overline{x}$ expects to earn an annual excess return of more than $20\%$,
but also to experience a more than $38\%$ loss of the wealth due to potential
market crashes. On the other hand, a HARA investor with relatively low initial
wealth $X_{0}=\overline{x}$ will allocate only a tiny portion of her wealth to
the risky asset, because, by the investment strategy described after
Proposition \ref{prop:hara_opt} for HARA investors, most of her wealth will be
allocated to the risk-free asset in order to ensure that the minimum
requirement for terminal wealth can always be satisfied. Such a conservative
strategy leads to a low average annual excess return mean of $3.5\%$, but also
a low average volatility ($4.2\%$) and maximum drawdown ($8.2\%$). The
above-mentioned sharp contrast between the investment performance of high- and
low-wealth HARA investors shows again that it is crucial to understand the
wealth-dependence property of the utility in delegated portfolio management.
Besides, the impact of the initial wealth is more significant at low wealth
levels, and quickly decays as $X_{0}/\overline{x}$ further increases.

Let us now provide an analysis for the Sharpe Ratio. In the above, we know
that HARA investors with higher initial wealth will allocate more to the risky
asset, and it naturally leads to higher average excess return mean and
volatility, as well as larger maximum drawdown. However, the increase of
average Sharpe ratio with the initial wealth of HARA investors, as shown in
the middle panel of Figure \ref{fig:HN_performance}, appears to be puzzling at
first sight. In the following, we show that it can be attributed to the
cycle-dependence of optimal policy, which essentially originates from the
wealth-dependent property of investor's utility.

We begin by introducing the following notations to ease exposition. Since the
daily time increment $\Delta=1/252$ is small, we can approximate the
annualized daily excess return as $R_{i}=\ln(X_{i\Delta}/X_{(i-1)\Delta
})/\Delta-r\approx\pi_{i\Delta}[\lambda V_{i\Delta}+\sqrt{V_{i\Delta}}Z_{i}],
$ according to the wealth dynamics (\ref{opt_constraint}) under the Heston-SV
model, where $Z_{i}=(W_{(i+1)\Delta}-W_{i\Delta})/\Delta$ is a normal random
variable with zero mean and variance $1/{\Delta}$. Thus, the excess return
$R_{i}$ is approximately linear in the optimal policy $\pi_{i\Delta}$. As we
mentioned, the HARA investors with higher initial wealth will allocate more
wealth on the risky asset, leading to larger $\pi_{i\Delta}$. To eliminate the
impact from the overall levels of $\pi_{i\Delta}$, we normalize the excess
returns $R_{i}$ by the average optimal policy over the entire investment
horizon, i.e., $\bar{\pi}=\sum_{i=1}^{N}\pi_{i\Delta}/N.$ That is, we define
the scaled excess return $R_{i}^{\prime}$ as $R_{i}^{\prime}:=R_{i}/\bar{\pi
}\approx\pi_{i\Delta}^{\prime}[\lambda V_{i\Delta}+\sqrt{V_{i\Delta}}Z_{i}], $
where $\pi_{i\Delta}^{\prime} :=\pi_{i\Delta}/\bar{\pi}$ denotes the scale
optimal policy. We then compute the mean and volatility of the scaled excess
return $R_{i}^{\prime}$ as $\bar{R}^{\prime}=\sum_{i=1}^{N}R_{i}^{\prime}/N$
and $SD^{\prime}=\sqrt{\sum_{i=1}^{N}(R_{i}^{\prime}-\bar{R}^{\prime}%
)^{2}/(N-1)}$. For each simulated path, it is easy to verify the Shape ratio
coincides with the one calculated based on the scaled excess returns, i.e.,
$SR= \bar{R}/SD =\bar{R}^{\prime}/SD^{\prime} $ since we have $\bar{R}%
^{\prime}=\bar{R}/\bar{\pi}$ and $SD^{\prime}=SD/\bar{\pi}.$ Thus, while the
average optimal policy $\bar{\pi}$ significantly impacts both the return mean
$\bar{R}$ and volatility $SD$, it does not directly affect the Sharpe ratio.

Those computations show that the average optimal policy $\bar{\pi}$, which
increases in the initial wealth level of HARA investors, cannot explain the
variation in Sharpe ratio observed in the middle panel of Figure
\ref{fig:HN_performance}. Instead, we can explain the increasing pattern of
Sharpe ratio by the cycle-dependence property of optimal policy under HARA
utility. It can be interpreted as follows. As we analyzed in Section
\ref{sec:cycle_dependence}, the optimal policy under HARA utility depends on
the historical path of market performance. Moreover, due to the wealth
constraint, such cycle-dependence is more pronounced for low-wealth HARA
investors, i.e., the portfolio allocation of low-wealth HARA investors is more
sensitive to the path of market performance. This cycle-dependence introduces
additional\textit{ }uncertainty in the scaled optimal policy $\pi_{i\Delta
}^{\prime}$ for HARA investors with low initial wealth, even after we
normalize it with the average level $\bar{\pi}$. It leads to more volatility
in the scaled excess return $R_{i}^{\prime}$ and thus a lower Sharpe ratio for
low-wealth HARA investors. On the other hand, the optimal policy of
high-wealth HARA investors is less sensitive to the market performance; it
reduces the uncertainty in their scaled optimal policy $\pi_{i\Delta}^{\prime
}$ and thus produces a higher Sharpe ratio.

We use simulations to illustrate the additional uncertainty in the scaled
optimal policy $\pi_{i\Delta}^{\prime}$ for low-wealth HARA investors. Figure
\ref{fig:SR2} exhibits the representative quantiles of the scaled optimal
policy $\pi_{i\Delta}^{\prime}$ for HARA investors with initial wealth
$X_{0}/\overline{x}=1$ (top) and $X_{0}/\overline{x}=4$ (bottom). For ease of
comparison, we use the same vertical axis in the two panels. We show the
quantiles at the beginning of each quarter in the investment horizon, which
are computed based on $10^{4}$ simulation trials. The drop near the end is due
to the decrease in the price of risk hedge component when investment horizon
shrinks to zero, as analyzed in Section \ref{Section_static_impact}.
Comparisons between the two panels, i.e., the low- vs. high-wealth investors,
clearly support our interpretations. The distributions of $\pi_{i\Delta
}^{\prime}$ spread out over much wider ranges for the low-wealth investor,
suggesting more uncertainty in her optimal policy. It explains the increasing
pattern of the expected Sharpe ratio in the middle panel of Figure
\ref{fig:HN_performance}.%

\begin{figure}[H]%
%

\begin{center}%
\begin{tabular}
[c]{c}%
{\includegraphics[
trim=0.496437in 0.241692in 0.467130in 0.230674in,
height=4.6185in,
width=5.0635in
]%
{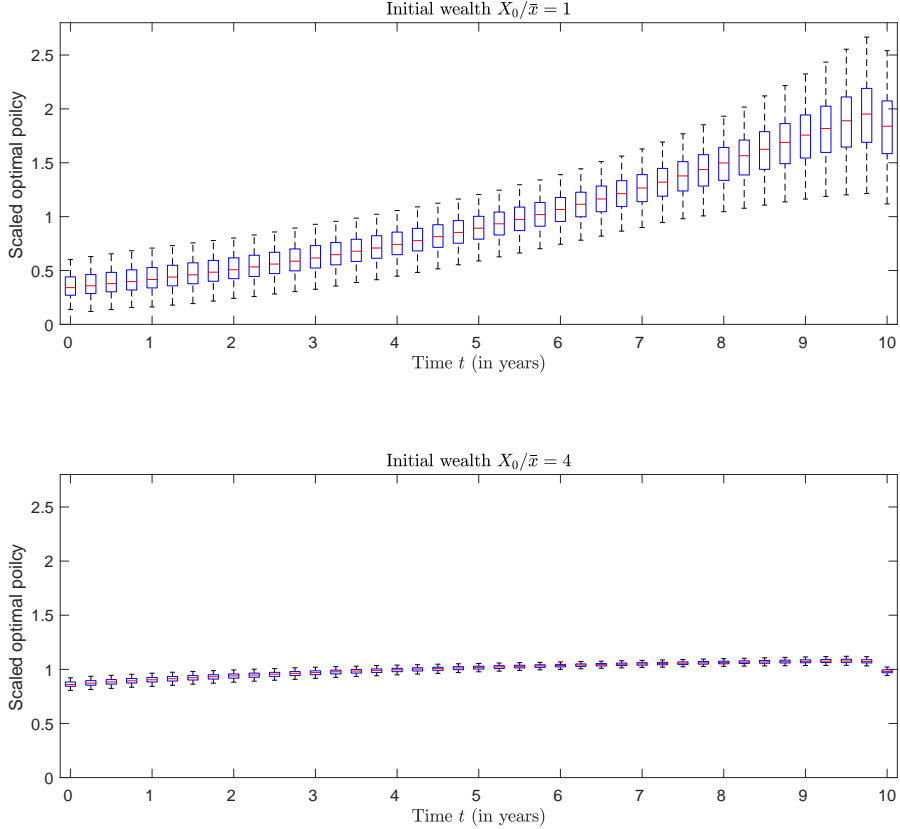}%
}
\end{tabular}%
\caption{Representative quantiles of the scaled optimal policy $\pi_{i\Delta
}^{\prime}$  for HARA investors}%
\label{fig:SR2}%
\end{center}%
%

\renewcommand{\baselinestretch}{1.0}%
\begin{small}%
This figure summarizes the distribution of the scaled optimal policy
$\pi_{i\Delta}^{\prime}$ at every quarter for HARA investors with initial
wealth $X_{0}/\overline{x}=1$ (top) and $X_{0}/\overline{x}=4$ (bottom), based
on the simulation trials under the estimated Heston-SV model given in
(\ref{Heston_parameter}). Besides, we set the risk aversion and investment
horizon as $\gamma=4$ and $T-t=10$ years. The box plots gather the 2.5th and
97.5th percentiles, the first and third quartiles, as well as the median, by
using the short bars at the ends of two whiskers, the upper and lower edges of
the blue box, as well as the red bar inside the box.
\end{small}%
\noindent
\end{figure}%

\subsection{Interaction between wealth level and model parameters}

In Section \ref{Section_static_impact}, we have shown that the
wealth-dependent property of HARA utility can affect the optimal policy via
other channels besides the current wealth level, i.e., the interest rate $r$
and investment horizon $T$. In this section, we study such effects from a
dynamic perspective, i.e., the impact of interest rate and investment horizon
on the investment performance for HARA investors. Moreover, we analyze how
these impacts vary for HARA investors with different initial wealth levels.
Such variation reflects the dynamic interaction between model parameters and
investor's wealth level.

\bigskip\bigskip%
\begin{figure}[H]%
%

\begin{center}%
\begin{tabular}
[c]{c}%
{\includegraphics[
trim=0.919478in 0.000000in 0.920577in 0.000000in,
height=2.1352in,
width=6.3884in
]%
{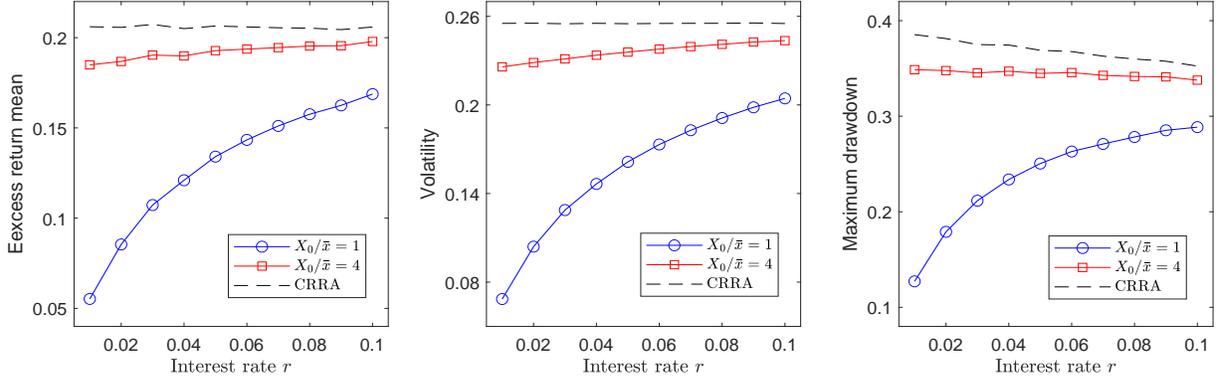}%
}
\end{tabular}%
\caption
{Performance statistics of dynamic optimal portfolio allocation with different interest rate under the HARA utility.}%
\label{fig:HN_performance_interest_rate}%
\end{center}%
%

\renewcommand{\baselinestretch}{1.0}%
\begin{small}%
This figure plots the average excess return mean (left), volatility (middle),
and maximum drawdown (right) at different interest rate $r$ for the investment
problem under HARA utility and our empirically estimated Heston-SV model given
in (\ref{Heston_parameter}). We consider three investors: low-wealth HARA
investor with $X_{0}/\overline{x}=1$ (blue circle), high-wealth HARA investor
with $X_{0}/\overline{x}=4$ (red square), and a CRRA investor (black dashed).
The averages are computed over $10^{4}$ simulated paths. We set the risk
aversion and investment horizon as $\gamma=4$ and $T-t=10$ years.%
\end{small}%
\noindent
\end{figure}%

In Figure \ref{fig:HN_performance_interest_rate}, we plot the average excess
return mean (left panel), volatility (middle panel), and maximum drawdown
(right panel) at different levels of interest rate $r$. The other model
parameters are given in (\ref{Heston_parameter}). We set the investment
horizon $T=10$ years and the risk aversion level $\gamma=4$. Same as other
studies, we estimate the investment performance by $10^{4}$ simulation trials.
In each panel, we show the corresponding statistics for three investors:
low-wealth HARA investor with $X_{0}/\overline{x}=1$ (blue circle),
high-wealth HARA investor with $X_{0}/\overline{x}=4$ (red square), and a CRRA
investor (black dashed). The optimal policy of the CRRA investor corresponds
to the limit case of $X_{0}/\overline{x}\rightarrow\infty$ for a HARA
investor, as suggested by the relationship (\ref{HN_hara}).

We now examine how the investment performance changes with interest rate $r$
for the three investors. By the left and middle panels of Figure
\ref{fig:HN_performance_interest_rate}, we see the average excess return mean
and volatility increase in $r$ for the two HARA investors, but remain almost
unchanged for the CRRA investor. Moreover, the increase is much more
significant for the low-wealth HARA investor than that for the high-wealth
one. It can be interpreted as follows. First, by the discussion in Section
\ref{sec:impact_performance}, a higher interest rate decreases the prices of
bonds held by HARA investors. Second, with a higher interest rate, the
investor's wealth increases faster as her savings account enjoys a higher
return. Both effects lead to more allocation on the risky asset for the HARA
investors, which produces higher average excess return mean and volatility.
Moreover, by Figure \ref{fig:HN_policy_vary}, the optimal policy of the
low-wealth HARA investor is more sensitive to the wealth level, as her wealth
constraint is more binding. Thus, the impact of $r$ is larger for the
low-wealth HARA investor than that for the high-wealth one. On the other hand,
the optimal policy of the CRRA investor is independent of both the wealth
level and interest rate. So the corresponding average excess return mean and
volatility are almost unaffected by $r$.

In addition, by the right panel of Figure
\ref{fig:HN_performance_interest_rate}, we find that the average maximum
drawdown can react differently to interest rate $r$ for the three investors.
In particular, the average maximum drawdown decreases slightly in $r$ for the
CRRA and high-wealth HARA investors, but increases significantly for the
low-wealth HARA investor. We can explain such different responses by the dual
role played by interest rate in affecting the maximum drawdown. On one hand, a
higher $r$ increases the return of investor's portfolio,\footnote{To see this,
note that the instantaneous expected return of portfolio is given by
$r+\lambda\pi_{t}V_{t}$.} leading to a smaller maximum drawdown. On the other
hand, as $r$ increases, the wealth of HARA investors accumulates at a faster
speed. Thus, they would invest more aggressively on the risky asset. It
introduces more risk and thus increases the maximum drawdown, as discussed in
Section \ref{sec:impact_performance}. For the low-wealth HARA investor, the
second effect dominates as her optimal policy is more sensitive to the wealth
level. It produces the increasing pattern of average maximum drawdown in the
interest rate. However, for the CRRA and high-wealth HARA investors, the first
effect outweighs the second as their optimal policy is less impacted by the
wealth level. It explains the mild decreasing pattern of average maximum
drawdown in the right panel of Figure \ref{fig:HN_performance_interest_rate}.

Next, Figure \ref{fig:HN_performance_horizon} shows how the investment horizon
impacts the investment performance for the three investors. The average excess
return mean, volatility, and maximum drawdown are plotted in the left, middle,
and right panel. The model parameters are set as (\ref{Heston_parameter}) and
the risk aversion is $\gamma=4$. By the left and middle panels, we see the
average excess return mean and volatility increase in the investment horizon
$T$ for the two HARA investors, but stay almost unchanged for the CRRA
investor.\footnote{The very small increase in the average excess return mean
for the CRRA investor is due to the increase in the price of risk hedge
component when investment horizon becomes longer; see the right panel of
Figure \ref{fig:HN_policy_vary}.} In addition, the magnitudes are much larger
for the low-wealth HARA investor. We can explain these patterns by the wealth
effect as discussed for the impact of interest rate. With longer investment
horizons, investors can accumulate more wealth as time goes by, leading to
higher wealth levels. Besides, by the analysis in Secion
\ref{Section_static_impact}, a longer investment horizon makes the bond prices
lower. Both effects increase the allocation on the risky asset for the HARA
investors, which translates to higher average excess return mean and
volatility. Again, these effects are more significant for the low-wealth HARA
investor, leading to a larger increase in the average excess return mean and volatility.%

\begin{figure}[H]%
%

\begin{center}%
\begin{tabular}
[c]{c}%
{\includegraphics[
trim=0.919066in 0.000000in 0.920285in 0.000000in,
height=2.1943in,
width=6.3885in
]%
{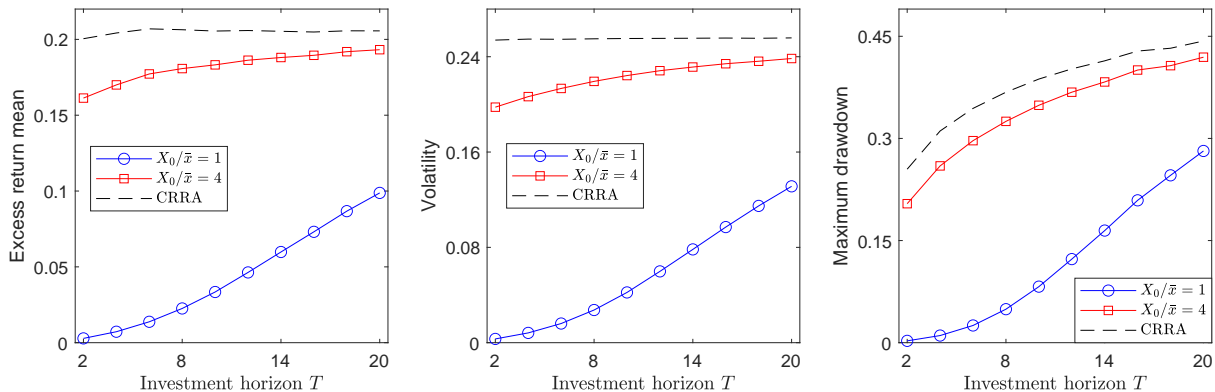}%
}
\end{tabular}%
\caption
{Performance statistics of dynamic optimal portfolio allocation with different investment horizon under the HARA utility.}%
\label{fig:HN_performance_horizon}%
\end{center}%
%

\renewcommand{\baselinestretch}{1.0}%
\begin{small}%
This figure plots the average excess return mean (left), volatility (middle),
and maximum drawdown (right) at different investment horizon $T$ for the
investment problem under HARA utility and our empirically estimated Heston-SV
model given in (\ref{Heston_parameter}). We consider three investors:
low-wealth HARA investor with $X_{0}/\overline{x}=1$ (blue circle),
high-wealth HARA investor with $X_{0}/\overline{x}=4$ (red square), and a CRRA
investor (black dashed). The averages are computed over $10^{4}$ simulated
paths. We set the risk aversion and investment horizon as $\gamma=4$ and
$T-t=10$ years.%
\end{small}%
\noindent
\end{figure}%

Finally, the right panel of Figure \ref{fig:HN_performance_horizon} shows that
the average maximum drawdown increases in the investment horizon $T$ for all
the three investors, and the magnitude of increase is the largest for the
low-wealth HARA investor. Indeed, increasing $T$ leads to a larger average
maximum drawdown for two reasons. First, the maximum drawdown naturally
increases in $T$ as investors are likely to experience bigger loss when they
have a longer investment horizon. Second, as analyzed before, a longer $T$
increases the allocation on the risky asset from HARA investors. It translates
to more uncertainty that also contributes to a larger maximum drawdown. The
second effect is absent for the CRRA investor, but more significant for the
low-wealth HARA investor. Combining these two effects explain the patterns in
the right panel of Figure \ref{fig:HN_performance_horizon} for the average
maximum drawdown.

\subsection{Hysteresis effect in optimal portfolio allocation}

In this section, we demonstrate a novel additional impact from the
wealth-dependent utility: the hysteresis effect in optimal portfolio
allocation. That is, the optimal policy and investment performance depend not
only on the composition of market states (e.g., bear and bull markets) as
exemplified in Figure \ref{fig: HN_simu_path}, but also the sequence of their
occurrences. This effect is fully absent for wealth-independent utilities, and
has not yet been investigated in the literature.

To illustrate this hysteresis effect, we perform the following experiments.
For a given market path of stock price $S_{t}$ and variance $V_{t}$, we
shuffle it by moving the \textquotedblleft good\textquotedblright\ periods to
the beginning of the investment horizon. It is done as follows. Denote the
increments in the Brownian motions by $\Delta_{W,i}^{(k)}=W_{k,i\Delta
}-W_{k,(i-1)\Delta}$, for $i=1,2,...,T/\Delta$ and $k\in\{1,2\}$. The
increments $\{\Delta_{W,i}^{(k)}\}$ can be viewed as the realization of random
market states. It determines the path of the two Brownian motions, and thus
the stock price and variance processes via (\ref{Heston_St}) --
(\ref{Heston_Vt}). For each year, we compute the annual return of the stock as
$\ln(S_{t}/S_{t-1})$. We select the three years with the highest annual
returns. Then, we move the increments $\Delta_{W,i}^{(k)}$ in these three
years to the beginning of the investment horizon for both the stock price and
its volatility process. That is, we rearrange the sequence of $\{\Delta
_{W,i}^{(k)}\}$, $k\in\{1,2\}$, such that for $j=1,2,3$, the increments in the
year with the $j$-th highest annual return is now moved to the $j$-th year of
the new path. The sequence in the other years remain unchanged. With the
shuffled increments, we construct the new stock price and variance processes
$S_{t}^{\prime}$ and $V_{t}^{\prime}$ following (\ref{Heston_St}) --
(\ref{Heston_Vt}). Thus, the shuffling does not alter the values of the
increments in the two processes, but only changes the sequence of their
occurrences by moving the three years with the best stock performance to the
beginning of the horizon.

Let us analyze how the shuffling of path impacts the optimal policy and
investment performance. In Figure \ref{fig:HN_hysteresis_example}, we plot a
representative path of stock price (upper panel) and its corresponding optimal
policy (lower panel). The paths before (resp.\ after) the shuffling are
represented by the blue solid (resp.\ red dashdotted) lines. The model
parameters are set as (\ref{Heston_parameter}) with the risk aversion level
$\gamma=4$. By the upper panel, we see that the stock price after shuffling
increases faster at the start of the investment horizon. It is not surprising
as our shuffling is to move the \textquotedblleft good\textquotedblright%
\ periods ahead. On the other hand, the terminal price $S_{T}^{\prime}$
remains almost the same after the shuffling since the values of the increments
are not changed. The rearrangement of the sequence of market states has
notable impact on the optimal policy. By the lower panel, we see that the
optimal policy from the new path, i.e., with \textquotedblleft
good\textquotedblright\ years ahead, is generally larger than that from the
original path, and the difference is especially significant in the first
several years. The explanation is the wealth effect of HARA utility. In the
shuffled path with \textquotedblleft good\textquotedblright\ periods ahead,
investors will accumulate more wealth at the start of their horizon compared
with the original path. It leads to larger optimal policy under the HARA
utility, as we show in Section \ref{Section_static_impact}.%

\begin{figure}[H]%
%

\begin{center}%
\begin{tabular}
[c]{c}%
{\includegraphics[
trim=0.325055in 0.000000in 0.326447in 0.000000in,
height=4.5587in,
width=5.6181in
]%
{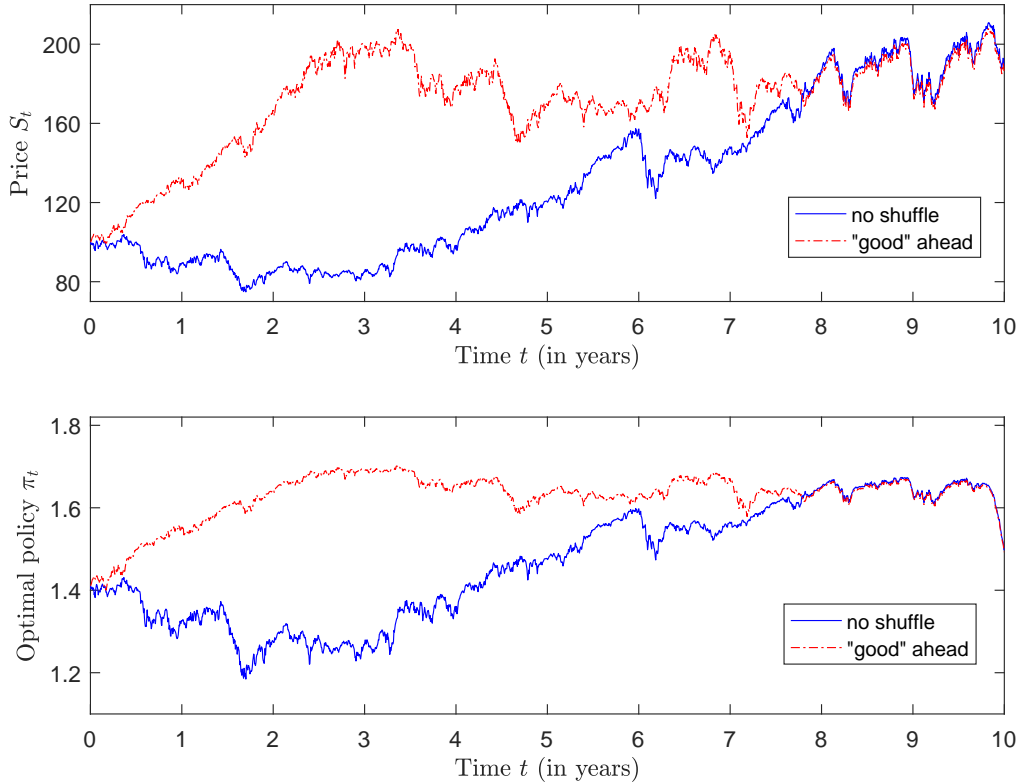}%
}
\end{tabular}%
\caption{Simulated paths of price $S_t$ and optimal policy $\pi
_t$ with and without shuffling.}%
\label{fig:HN_hysteresis_example}%
\end{center}%
%

\renewcommand{\baselinestretch}{1.0}%
\begin{small}%
This figure plots the simulated paths of stock price $S_{t}$ (upper panel) and
optimal policy $\pi_{t}$ (lower panel) under the HARA utility and our
empirically estimated Heston-SV model given in (\ref{Heston_parameter}). The
risk aversion and investment horizon are set as $\gamma=4$ and $T-t=10$ years.
We show the original paths (blue solid) and the shuffled paths (red
dashdotted) with \textquotedblleft good\textquotedblright\ periods moved
ahead.
\end{small}%
\noindent
\end{figure}%

The comparison in Figure \ref{fig:HN_hysteresis_example} highlights the
hysteresis effect in asset allocation under the HARA utility, i.e., the
sequence of market states matters. This effect essentially stems from the
wealth-dependent property of the HARA utility. The hysteresis effect can also
affect the investment performance. For example, with more allocation on the
risky asset, the HARA investor is expected to bear more risk along the
shuffled path in Figure \ref{fig:HN_hysteresis_example}. In the following, we
quantify such impact using simulations with a large number of trials. In
Figure \ref{fig:HN_hysteresis_performance}, we plot the average excess return
mean (left panel) and volatility (right panel) for HARA investors with
different initial wealth levels under three scenarios. The black dashed line
shows the results from original paths without any shuffling, i.e., same as the
results in the left panel of Figure \ref{fig:HN_hysteresis_performance}. The
blue circle (resp.\ red square) line reports the corresponding results after
we shuffle each path by moving the increments in the three years with the
highest (resp.\ lowest) annual stock returns to the beginning of the horizon.
The points on the far right in each panel represent the estimates for a CRRA
investor. It coincides with the limit case as the initial wealth level goes to
infinity. Same as the other studies, each average point in the figure is
estimated by $10^{4}$ simulation trials.%

\begin{figure}[H]%
%

\begin{center}%
\begin{tabular}
[c]{c}%
{\includegraphics[
trim=0.685643in 0.000000in 0.922312in 0.000000in,
height=2.763in,
width=6.0025in
]%
{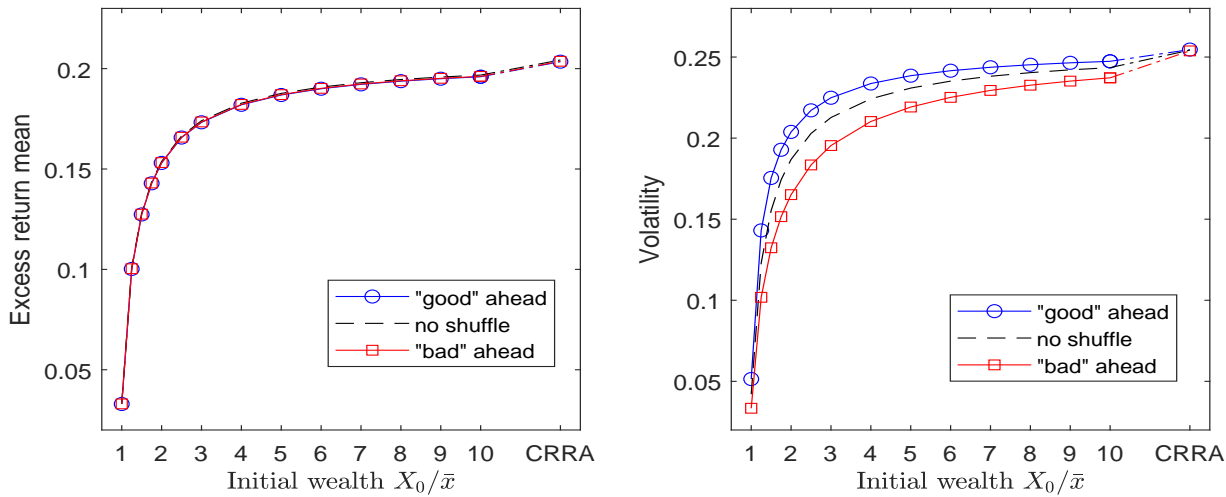}%
}
\end{tabular}%
\caption
{Performance statistics of dynamic optimal portfolio allocation with different initial wealth levels under the HARA utility.}%
\label{fig:HN_hysteresis_performance}%
\end{center}%
\vspace{-5pt}
\renewcommand{\baselinestretch}{1.0}%
\begin{small}%
This figure plots the average excess return mean (left) and volatility (right)
for the investment problem under HARA utility and our empirically estimated
Heston-SV model in (\ref{Heston_parameter}). The corresponding levels under
CRRA utility are represented by point at the right ends. Three scenarios are
considered: \textquotedblleft good\textquotedblright\ periods ahead (blue
circle), \textquotedblleft bad\textquotedblright\ periods ahead (red square),
and without shuffling (black dashed). The averages are computed by $10^{4}$
simulated paths. We set the risk aversion and investment horizon as $\gamma=4$
and $T-t=10$ years.%
\end{small}%
\noindent
\end{figure}%

By the left panel of Figure \ref{fig:HN_hysteresis_performance}, we find that
the average excess return means are almost identical in the three scenarios,
i.e., unaffected by the shuffling. The reason for this seemingly surprising
result is the following. We take the scenario with \textquotedblleft
good\textquotedblright\ periods ahead for illustration (blue circle line). In
this scenario, HARA investors tend to allocate more on the risky asset as
shown by the lower panel of Figure \ref{fig:HN_hysteresis_example}, leading to
higher excess return means. However, in the shuffled path with
\textquotedblleft good\textquotedblright\ periods ahead, investors are more
likely to experience negative shocks in the subsequent years of their
investment horizon. When the negative shocks happen, the larger weights on the
risky asset magnify the loss and decrease the excess return mean. These two
effects offset each other and the average excess return mean remains almost
unchanged. A similar discussion also applies to the scenario where we move the
\textquotedblleft bad\textquotedblright\ periods ahead. Thus, while the
sequence of market states matters for the optimal policy, it almost does not
impact the average excess return mean.

On the contrary, by the right panel of Figure
\ref{fig:HN_hysteresis_performance}, we see that the sequence of market states
does impact the average excess return volatility. In particular, for a given
initial wealth level, the scenario with \textquotedblleft
good\textquotedblright\ (resp. ``bad'') periods ahead leads to the largest
(resp.\ smallest) volatility. Such variation highlights the hysteresis effect
in optimal portfolio allocation under the HARA utility and its impact on
investment performance. This pattern can be interpreted following our
discussion for Figure \ref{fig:HN_hysteresis_example}. When the
\textquotedblleft good\textquotedblright\ (resp.\ \textquotedblleft
bad\textquotedblright) periods are moved ahead, investors will accumulate more
(resp.\ less) wealth at the beginning of their horizon; it leads to more
(resp.\ less) allocation on the risky asset under the HARA utility, and
produces a larger (resp.\ smaller) volatility. On the other hand, the average
volatilities are almost the same under the three scenarios for a CRRA
investor, as shown by the points on the right end of the panel. That is, the
hysteresis effect entirely vanishes under the CRRA utility. It is because the
optimal policy under the CRRA utility is independent of the market path.

\section{Conclusions and discussions\label{Section_conclusions_discussions}}

This paper establishes and implements a novel decomposition of the optimal
policy under general incomplete-market diffusion models with flexible
wealth-dependent utilities. The decomposition, as an indispensable
generalization of the existing complete-market policy decomposition in, e.g.,
\cite{merton71} and \cite{Detemple_2003_JF}, contains four components: the
mean-variance component, the interest rate hedge component, and two components
for hedging uncertainties in market and investor-specific price of risk. The
structural clarity of our decomposition reveals how investor's wealth level
impacts the optimal policy under incomplete market models. It facilitates the
implementation of optimal policy via closed-form solutions or potential
numerical approaches.

We apply our decomposition under the wealth-dependent HARA utility. We show
that with nonrandom interest rate, the optimal policy under HARA utility can
be decomposed as a bond holding scheme and the corresponding CRRA strategy. It
further leads to closed-form solution under specific models. To demonstrate
the wealth effects in optimal portfolio allocation, we analyze the behavior of
optimal policy for HARA investors in a typical incomplete market featuring
stochastic volatility. With the model parameters calibrated from US data, we
find the wealth-dependent property leads to sophisticated cycle-dependence for
optimal policy, as well as a substantial risk-return trade-off and a novel
hysteresis effect in investment performance.

We can adapt or generalize our decomposition for optimal portfolio policies to
other settings, e.g., the forward measure based representation considered in
\cite{DetempleRindisbacher2010}. Moreover, it is interesting, among other
possible extensions, to consider other (exotic) types of market
incompleteness, e.g., the short-selling constraint or the \textquotedblleft
rectangular\textquotedblright\ constraint considered in
\cite{Cvitanic_Karatzas_1992_AAP} and/or \cite{Detemple_Rindisbacher_2005_MF},
as well as the presence of jumps considered in, e.g.,
\cite{Ait_Sahalia_2009_AAP} and \cite{JinZhang12}. Another important direction
is to develop decomposition results for the optimal policy under incomplete
market models and general recursive utilities, e.g., those generalizing the
wealth-independent recursive utilities employed in the literature and thus
leading to wealth-dependent optimal policies.

\newpage
\bibliographystyle{elsevier}
\bibliography{APEC_bib}

\newpage%

%

\setcounter{section}{0}
\setcounter{subsection}{0}
\setcounter{equation}{0}
\renewcommand{\thesection}{Appendix \Alph{section}}
\renewcommand\thesubsection{\thesection.\arabic{subsection}}
\renewcommand{\theequation}{\Alph{section}.\arabic{equation}}%

\section{Degeneration of decomposition under wealth-independent
utility\label{section:impact_utility}}

As briefly discussed in Section \ref{section:crra_policy}, we further document
in Corollaries \ref{Corollary_CRRA} and \ref{Corollary_CRRA_pitheta} below on
how our general policy decomposition in Theorems \ref{thm_representation_new}
and \ref{thm_integral_equation} degenerates under CRRA utility. The comparison
demonstrates the structural impact of wealth-dependent utility on optimal
policy in general models. We first provide the explicit result for optimal
policy under CRRA utility, then discuss in detail its fundamental difference
with the optimal policy under general wealth-dependent utilities.

\begin{corollary}
\label{Corollary_CRRA} Under CRRA utility, the interest rate hedge and price
of risk hedge components are given by
\begin{equation}
\pi^{r}(t,Y_{t})=-(\sigma(t,Y_{t})^{+})^{\top}\frac{E_{t}[\mathcal{\tilde{H}%
}_{t,T}^{r}]}{E_{t}[\mathcal{\tilde{G}}_{t,T}]}\text{ and }\pi^{\theta
}(t,Y_{t})=-(\sigma(t,Y_{t})^{+})^{\top}\frac{E_{t}[\mathcal{\tilde{H}}%
_{t,T}^{\theta}]}{E_{t}[\mathcal{\tilde{G}}_{t,T}]}. \label{pi_CRRA}%
\end{equation}
The functions $\mathcal{\tilde{H}}_{t,T}^{\theta}$, $\mathcal{\tilde{H}}%
_{t,T}^{r}$, and $\mathcal{\tilde{G}}_{t,T}$ are defined as
\begin{subequations}
\begin{align}
\tilde{\mathcal{H}}_{t,T}^{\theta}  &  :=\left(  1-\frac{1}{\gamma}\right)
\left[  (1-w)^{\frac{1}{\gamma}}e^{-\frac{\rho T}{\gamma}}\left(  \xi
_{t,T}^{\mathcal{S}}\right)  ^{1-\frac{1}{\gamma}}H_{t,T}^{\theta}+w^{\frac
{1}{\gamma}}\int_{t}^{T}e^{-\frac{\rho s}{\gamma}}\left(  \xi_{t,s}%
^{\mathcal{S}}\right)  ^{1-\frac{1}{\gamma}}H_{t,s}^{\theta}ds\right]
,\label{crra_Ht}\\
\tilde{\mathcal{H}}_{t,T}^{r}  &  :=\left(  1-\frac{1}{\gamma}\right)  \left[
(1-w)^{\frac{1}{\gamma}}e^{-\frac{\rho T}{\gamma}}\left(  \xi_{t,T}%
^{\mathcal{S}}\right)  ^{1-\frac{1}{\gamma}}H_{t,T}^{r}+w^{\frac{1}{\gamma}%
}\int_{t}^{T}e^{-\frac{\rho s}{\gamma}}\left(  \xi_{t,s}^{\mathcal{S}}\right)
^{1-\frac{1}{\gamma}}H_{t,s}^{r}ds\right]  , \label{crra_Hr}%
\end{align}
and%
\begin{equation}
\tilde{\mathcal{G}}_{t,T}:=(1-w)^{\frac{1}{\gamma}}e^{-\frac{\rho T}{\gamma}%
}\left(  \xi_{t,T}^{\mathcal{S}}\right)  ^{1-\frac{1}{\gamma}}+w^{\frac
{1}{\gamma}}\int_{t}^{T}e^{-\frac{\rho s}{\gamma}}\left(  \xi_{t,s}%
^{\mathcal{S}}\right)  ^{1-\frac{1}{\gamma}}ds, \label{crra_G}%
\end{equation}
with $\xi_{t,s}^{\mathcal{S}},$ $H_{t,s}^{\theta},$ and $H_{t,s}^{r}$ evolving
according to (\ref{SDE_xi_crra}), (\ref{SDE_Htheta_CRRA}), and
(\ref{thm1_SDE_Hr}). The investor-specific price of risk $\theta^{u}\left(
v,Y_{v};T\right)  $ satisfies the following $d-$dimensional equation
\end{subequations}
\begin{equation}
\theta^{u}\left(  v,Y_{v};T\right)  =\frac{\sigma(v,Y_{v})^{+}\sigma
(v,Y_{v})-I_{d}}{E[\tilde{\mathcal{Q}}_{v,T}|Y_{v}]}\times(E[\mathcal{\tilde
{H}}_{v,T}^{r}|Y_{v}]+E[\mathcal{\tilde{H}}_{v,T}^{\theta}|Y_{v}]),
\label{integral_crra}%
\end{equation}
where $\tilde{\mathcal{Q}}_{v,T}=-\tilde{\mathcal{G}}_{v,T}/\gamma$, as
defined by (\ref{crra_G}), i.e.,%
\begin{equation}
\tilde{\mathcal{Q}}_{v,T}:=-\frac{1}{\gamma}\left[  (1-w)^{\frac{1}{\gamma}%
}e^{-\frac{\rho T}{\gamma}}\left(  \xi_{v,T}^{\mathcal{S}}\right)
^{1-\frac{1}{\gamma}}+w^{\frac{1}{\gamma}}\int_{v}^{T}e^{-\frac{\rho s}%
{\gamma}}\left(  \xi_{v,s}^{\mathcal{S}}\right)  ^{1-\frac{1}{\gamma}%
}ds\right]  . \label{crra_Q}%
\end{equation}
The investor-specific price of risk $\theta^{u}\left(  v,Y_{v};T\right)  $ is
determined by a multidimensional equation system consisting of equation
(\ref{integral_crra}), as well as the SDEs of $Y_{s},$ $\xi_{t,s}%
^{\mathcal{S}},$ $H_{t,s}^{r},$ $H_{t,s}^{\theta},$ and $\mathcal{D}_{it}%
Y_{s}$ given in (\ref{SDE_Y}), (\ref{SDE_xi_crra}), (\ref{thm1_SDE_Hr}),
(\ref{SDE_Htheta_CRRA}), and (\ref{thm1_SDE_DYM}), which are all independent
of the multiplier $\lambda_{t}^{\ast}.$
\end{corollary}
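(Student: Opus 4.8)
The plan is to specialize the general formulas of Theorem \ref{thm_representation_new}, Proposition \ref{corollary_expression}, and Theorem \ref{thm_integral_equation} to the CRRA utility and to exploit two structural simplifications peculiar to the power form. First I would invoke Proposition \ref{Corollary_CRRA copy}, which guarantees that under CRRA the investor-specific price of risk reduces to $\theta^{u}(v,Y_{v};T)$ with no multiplier dependence, so that the relative state price density and the Malliavin term follow the $\lambda_{t}^{\ast}$-free dynamics (\ref{SDE_xi_crra}) and (\ref{SDE_Htheta_CRRA}). With this in hand, every building block entering the policy and the characterization of $\theta^{u}$ depends on the multiplier only through explicit prefactors supplied by the inverse marginal utilities, and the remaining work is purely algebraic.

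The core computation is as follows. For the CRRA specification $\bar{c}=\bar{x}=0$ in (\ref{hara_utility}), the inverse marginal utilities are $I^{u}(t,y)=w^{1/\gamma}e^{-\rho t/\gamma}y^{-1/\gamma}$ and $I^{U}(T,y)=(1-w)^{1/\gamma}e^{-\rho T/\gamma}y^{-1/\gamma}$. Substituting these into (\ref{Gamma_u1}) and (\ref{Gamma_u2}), I would establish the two identities that drive everything: each $\Gamma^{\iota}$ factors as $(\lambda_{t}^{\ast})^{-1/\gamma}$ times a $\lambda_{t}^{\ast}$-free functional of the state price density, and, because $\partial I^{\iota}/\partial y=-\tfrac{1}{\gamma}I^{\iota}/y$, one has $\Upsilon_{t,\cdot}^{\iota}=-\tfrac{1}{\gamma}\Gamma_{t,\cdot}^{\iota}$ for $\iota\in\{u,U\}$. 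The first identity isolates the multiplier dependence into a single common factor; the second collapses the combinations $\Gamma^{\iota}+\Upsilon^{\iota}$ appearing in (\ref{coro_Hr})--(\ref{coro_Ht}) to $(1-\tfrac{1}{\gamma})\Gamma^{\iota}$ and collapses $\mathcal{Q}_{t,T}$ in (\ref{coro_Q}) to $-\tfrac{1}{\gamma}\mathcal{G}_{t,T}$.

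Combining these, I would write $\mathcal{G}_{t,T}=(\lambda_{t}^{\ast})^{-1/\gamma}\tilde{\mathcal{G}}_{t,T}$, $\mathcal{Q}_{t,T}=(\lambda_{t}^{\ast})^{-1/\gamma}\tilde{\mathcal{Q}}_{t,T}$ with $\tilde{\mathcal{Q}}_{t,T}=-\tfrac{1}{\gamma}\tilde{\mathcal{G}}_{t,T}$, and $\mathcal{H}_{t,T}^{r}=(\lambda_{t}^{\ast})^{-1/\gamma}\tilde{\mathcal{H}}_{t,T}^{r}$, $\mathcal{H}_{t,T}^{\theta}=(\lambda_{t}^{\ast})^{-1/\gamma}\tilde{\mathcal{H}}_{t,T}^{\theta}$, where the tilde quantities are exactly (\ref{crra_Ht})--(\ref{crra_G}) and (\ref{crra_Q}) and carry no multiplier. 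Since $\lambda_{t}^{\ast}$ is determined by the information up to time $t$, the constraint (\ref{XtGt}) reads $X_{t}=(\lambda_{t}^{\ast})^{-1/\gamma}E_{t}[\tilde{\mathcal{G}}_{t,T}]$. Dividing the policy expressions (\ref{pir_thm1}) and (\ref{pitheta_thm1}) by this $X_{t}$, the common prefactor $(\lambda_{t}^{\ast})^{-1/\gamma}$ cancels between numerator and denominator, delivering (\ref{pi_CRRA}); the identical cancellation in the ratio of Theorem \ref{thm_integral_equation} yields the characterization (\ref{integral_crra}), with the conditioning reduced to $Y_{v}$ alone because every tilde quantity is now a functional of the $\lambda_{t}^{\ast}$-free system (\ref{SDE_Y}), (\ref{SDE_xi_crra}), (\ref{SDE_Htheta_CRRA}), (\ref{thm1_SDE_Hr}), and (\ref{thm1_SDE_DYM}).

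I expect the main obstacle to be the self-consistency of the multiplier-independence rather than the bookkeeping. The dynamics feeding $\mathcal{H}_{t,T}^{\theta}$ in (\ref{coro_Ht}) rely on $H_{t,s}^{\theta}(\lambda_{t}^{\ast})$ and on the state price density, both built from $\theta_{s}^{u}(\lambda_{t}^{\ast})$; the clean factorization $\mathcal{H}_{t,T}^{\theta}=(\lambda_{t}^{\ast})^{-1/\gamma}\tilde{\mathcal{H}}_{t,T}^{\theta}$ is only legitimate once one knows $\theta^{u}$ is $\lambda_{t}^{\ast}$-free, which is part of what the statement asserts. The careful step is therefore to verify the fixed point: positing that $\theta^{u}$ does not depend on the multiplier makes (\ref{SDE_xi_crra}) and (\ref{SDE_Htheta_CRRA}) hold, which renders all tilde quantities multiplier-free, whereupon (\ref{integral_crra}) returns a $\theta^{u}$ that indeed depends only on $(v,Y_{v};T)$, closing the loop. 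I would lean on Proposition \ref{Corollary_CRRA copy} to supply this consistency and keep the present argument focused on the algebraic reductions above.
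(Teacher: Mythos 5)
Your proposal is correct and follows essentially the same route as the paper: specialize Theorem \ref{thm_representation_new}, Proposition \ref{corollary_expression}, and Theorem \ref{thm_integral_equation} using $I^{u}(t,y)=w^{1/\gamma}e^{-\rho t/\gamma}y^{-1/\gamma}$ and $I^{U}(T,y)=(1-w)^{1/\gamma}e^{-\rho T/\gamma}y^{-1/\gamma}$, exploit $\Upsilon^{\iota}=-\tfrac{1}{\gamma}\Gamma^{\iota}$ to collapse $\Gamma^{\iota}+\Upsilon^{\iota}$ to $(1-\tfrac{1}{\gamma})\Gamma^{\iota}$ and $\mathcal{Q}$ to $-\tfrac{1}{\gamma}\mathcal{G}$, factor out the common $(\lambda_{t}^{\ast})^{-1/\gamma}$, and cancel it in the ratios against $X_{t}=(\lambda_{t}^{\ast})^{-1/\gamma}E_{t}[\tilde{\mathcal{G}}_{t,T}]$. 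Your deferral to Proposition \ref{Corollary_CRRA copy} for the multiplier-independence of $\theta^{u}$ (established there via the dual problem, whose CRRA objective factors as $(\lambda_{v}^{\ast})^{1-1/\gamma}$ times a $\lambda$-free functional) is exactly how the paper avoids the circularity you flag.
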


\begin{proof}
See Section S.3 in the online supplementary material.
\end{proof}

Though as a special case, the explicit decomposition in Corollary
\ref{Corollary_CRRA} is new relative to the existing analysis on the structure
of the optimal policy under CRRA utility by means of, e.g., HJB equations
(see, e.g., \cite{liu2007portfolio}). By comparing the decomposition in
Corollary \ref{Corollary_CRRA} and that in Theorems
\ref{thm_representation_new} and \ref{thm_integral_equation}, we explicitly
observe how the structure of the optimal policy under wealth-independent CRRA
utility differs from that under general wealth-dependent utilities, as well as
how the specific structure of CRRA utility allows for significant
simplification of the decomposition. This comparative study demonstrates again
the importance of our explicit decomposition results in Theorems
\ref{thm_representation_new} and \ref{thm_integral_equation}.

First, the building blocks employed in these two decompositions obviously have
different structures. Since investor-specific price of risk does not depend on
the multiplier $\lambda_{s}^{\ast}$ for the CRRA case, it takes the form
$\theta_{s}^{u}=\theta^{u}(s,Y_{s};T)$. By the analysis similar to those
immediately prior to Theorem \ref{thm_representation_new}, $\theta_{s}^{u}$
here is consequently independent of the wealth level $X_{s}.$ Thus, the market
completion under CRRA utility enjoys a simpler mechanism. We now compare the
dynamics of $\xi_{t,s}^{\mathcal{S}}(\lambda_{t}^{\ast})$ in
(\ref{thm1_SDE_xi_incomp_explicit}) with that of $\xi_{t,s}^{\mathcal{S}}$ in
(\ref{SDE_xi_crra}), as well as the dynamics of $H_{t,s}^{\theta}(\lambda
_{t}^{\ast})$ in (\ref{thm1_SDE_Htheta}) with that of $H_{t,s}^{\theta}$ in
(\ref{SDE_Htheta_CRRA}). Owing to the absence of multiplier $\lambda_{v}%
^{\ast}$ from $\theta^{u}\left(  v,Y_{v},\lambda_{v}^{\ast};T\right)  ,$
dynamics (\ref{SDE_xi_crra}) and (\ref{SDE_Htheta_CRRA}) are obviously simpler
than (\ref{thm1_SDE_xi_incomp_explicit}) and (\ref{thm1_SDE_Htheta}). In
particular, they are all independent of multiplier $\lambda_{t}^{\ast}.$

Next, our decomposition results illustrate how the current wealth level
impacts the optimal policy under general wealth-dependent utilities, but not
under CRRA utility.\footnote{This wealth-independent property of CRRA utility
reconciles the conclusions in \cite{Detemple_2003_JF} and
\cite{OconeKaratzas1991} for complete market models.} By Theorems
\ref{thm_representation_new} and \ref{thm_integral_equation} for general
wealth-dependent utilities, the current wealth level $X_{t}$ impacts the
optimal policy through two channels. First, it directly appears in the optimal
policy as the denominator in (\ref{pimv_thm1}) -- (\ref{pitheta_thm1}).
Second, due to the wealth equation $X_{t}=E_{t}[\mathcal{G}_{t,T}(\lambda
_{t}^{\ast})]$, $X_{t}$ is implicitly involved in the optimal policy through
the time--$t$ multiplier $\lambda_{t}^{\ast}$ in the functions $\mathcal{Q}%
_{t,T}\left(  \lambda_{t}^{\ast}\right)  $, $\mathcal{H}_{t,T}^{r}(\lambda
_{t}^{\ast})$, and $\mathcal{H}_{t,T}^{\theta}(\lambda_{t}^{\ast})$ with the
building blocks $\xi_{t,s}^{\mathcal{S}}(\lambda_{t}^{\ast})$ and
$H_{t,s}^{\theta}(\lambda_{t}^{\ast})$. However, both channels are absent
under CRRA utility, thanks to its special structure. As shown in
(\ref{pmv_crra_explicit}) and (\ref{pi_CRRA}), both $X_{t}$ and $\lambda
_{t}^{\ast}$ vanish in the components of the optimal policy. Furthermore, by
(\ref{SDE_xi_crra}) and (\ref{SDE_Htheta_CRRA}), the building blocks
$\xi_{t,s}^{\mathcal{S}}$ and $H_{t,s}^{\theta}$ are also independent of
$\lambda_{t}^{\ast}$. Such an independence guarantees that $X_{t}$ is not
implicitly involved in the optimal policy through $\lambda_{t}^{\ast}$ as in
the case with general wealth-dependent utilities. In essence, it is again
because the investor-specific price of risk $\theta^{u}(s,Y_{s};T)$ does not
depend on $\lambda_{s}^{\ast}$ under CRRA utility.

In addition, the decomposition of the price of risk hedge component, as given
in (\ref{thm_decomp2}) for general cases, can be simplified under CRRA
utility, as shown in Corollary \ref{Corollary_CRRA_pitheta} below.

\begin{corollary}
\label{Corollary_CRRA_pitheta}Under the incomplete market model
(\ref{SDE_price}) -- (\ref{SDE_Y}) and the CRRA utility function given in
(\ref{hara_utility}) with $\overline{x}=\overline{c}=0$, the price of risk
hedge component $\pi^{\theta}(t,Y_{t})$ in (\ref{crra_decompose}) can be
further decomposed as
\begin{equation}
\pi^{\theta}(t,Y_{t})=\pi^{h,Y}(t,Y_{t})+\pi^{u,Y}(t,Y_{t});
\label{crra_decompose2}%
\end{equation}
where the first and second components hedge the uncertainty in market and
investor-specific price of risk, due to variation of the state variable
$Y_{t}$. They are given by \
\[
\pi^{h,Y}(t,Y_{t})=-(\sigma(t,Y_{t})^{+})^{\top}\frac{E_{t}[\mathcal{\tilde
{H}}_{t,T}^{h,Y}]}{E_{t}[\mathcal{\tilde{G}}_{t,T}]}\text{ and }\pi
^{u,Y}(t,Y_{t})=-(\sigma(t,Y_{t})^{+})^{\top}\frac{E_{t}[\mathcal{\tilde{H}%
}_{t,T}^{u,Y}]}{E_{t}[\mathcal{\tilde{G}}_{t,T}]}.
\]
Here, $\mathcal{\tilde{H}}_{t,T}^{h,Y}$ and $\mathcal{\tilde{H}}_{t,T}^{u,Y}$
are defined in the same way as $\tilde{\mathcal{H}}_{t,T}^{\theta}$ in
(\ref{crra_Ht}), except for replacing $H_{t,s}^{\theta}$ for $t\leq s\leq T$
by $H_{t,s}^{h,Y}$ and $H_{t,s}^{u,Y}$, which follow dynamics in
(\ref{dH_thetah}) and
\begin{equation}
dH_{t,s}^{u,Y}=\mathcal{D}_{t}\theta^{u}(s,Y_{s};T)(\theta^{u}(s,Y_{s}%
;T)ds+dW_{s}). \label{dH_crra_u1}%
\end{equation}
Under the assumption that $\theta^{u}(v,y;T)$ is differentiable in its
argument, the above dynamics can be further expressed as
\begin{equation}
dH_{t,s}^{u,Y}=\left(  \mathcal{D}_{t}Y_{s}\right)  \nabla\theta^{u}%
(s,Y_{s};T)(\theta^{u}(s,Y_{s};T)ds+dW_{s}). \label{dH_crra_u1_exp}%
\end{equation}
\begin{proof}
The proof follows directly by combining the representation of the
investor-specific price of risk $\theta^{u}(s,Y_{s};T)$ under CRRA utility and
decomposition (\ref{thm_decomp2}).
\end{proof}

\end{corollary}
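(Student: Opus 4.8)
The plan is to read off the statement as a specialization of the general decomposition, the only genuine work being to confirm that under CRRA utility the wealth-hedging piece of the price-of-risk hedge disappears. I would begin from the split $\pi^{\theta}=\pi^{h}+\pi^{u}$ in (\ref{thm_decomp2}), with the building blocks $H^{h}_{t,s}$ and $H^{u}_{t,s}(\lambda_t^{\ast})$ governed by (\ref{dH_thetah}) and (\ref{dH_thetau1}), and then invoke the CRRA simplification of Proposition \ref{Corollary_CRRA copy}: the investor-specific price of risk collapses to $\theta^{u}(s,Y_s;T)$, free of the multiplier, so that the weights $\Gamma,\Upsilon$ and the density $\xi^{\mathcal{S}}_{t,s}$ take the explicit CRRA forms underlying (\ref{crra_Ht})--(\ref{crra_G}).

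Carrying out the substitution, the market-price-of-risk block is untouched, since $\theta^{h}(s,Y_s)$ never depends on wealth; relabelling $H^{h}_{t,s}$ as $H^{h,Y}_{t,s}$ and inserting it into the CRRA weighting of (\ref{crra_Ht}) gives $\tilde{\mathcal{H}}^{h,Y}_{t,T}$ and hence the stated $\pi^{h,Y}(t,Y_t)$. For the investor-specific block, replacing $\theta^{u}_s(\lambda_t^{\ast})$ by $\theta^{u}(s,Y_s;T)$ in (\ref{dH_thetau1}) yields the reduced dynamics (\ref{dH_crra_u1}) for $H^{u,Y}_{t,s}$, and feeding this through the same weighting produces $\tilde{\mathcal{H}}^{u,Y}_{t,T}$ and the stated $\pi^{u,Y}(t,Y_t)$. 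The differentiable form (\ref{dH_crra_u1_exp}) then follows from the chain rule for the Malliavin derivative, $\mathcal{D}_t\theta^{u}(s,Y_s;T)=(\mathcal{D}_t Y_s)\nabla\theta^{u}(s,Y_s;T)$, using the propagation law (\ref{thm1_SDE_DYM}) for $\mathcal{D}_t Y_s$ through a smooth function of the state.

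The one substantive step is to show that $\pi^{u}$ is fully captured by $\pi^{u,Y}$, i.e.\ that the second component in (\ref{crra_decompose2}) is the entire price-of-risk hedge beyond $\pi^{h,Y}$. For this I would apply Proposition \ref{prop:piu_decomp}, which under differentiability writes $\pi^{u}=\pi^{u,Y}+\pi^{u,\lambda}$, and note that $\pi^{u,\lambda}$ is driven by $H^{u,\lambda}_{t,s}$ whose dynamics (\ref{dH_thetauLm}) are proportional to $\partial\theta^{u}/\partial\lambda$; since $\theta^{u}(s,Y_s;T)$ carries no $\lambda$-argument under CRRA utility this partial derivative vanishes, forcing $H^{u,\lambda}_{t,s}\equiv 0_d$ and $\pi^{u,\lambda}\equiv 0_d$. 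I do not expect a real technical obstacle, as each step specializes an already-proved result; the point needing care is verifying that the multiplier-independence genuinely propagates through the Malliavin derivative, so that $\mathcal{D}_t\theta^{u}_s$ registers variation only through $Y_s$ and not through any residual wealth channel --- this is exactly what guarantees the clean two-component form (\ref{crra_decompose2}).
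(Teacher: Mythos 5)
Your proposal is correct and follows essentially the same route as the paper: the paper's own (one-line) proof likewise combines the CRRA representation $\theta^{u}(s,Y_{s};T)$ with the decomposition (\ref{thm_decomp2}), and your observation that $\pi^{u,\lambda}$ vanishes because $\partial\theta^{u}/\partial\lambda\equiv 0$ in (\ref{dH_thetauLm}) is precisely the mechanism already recorded in Proposition \ref{Corollary_CRRA copy}. Your fleshed-out version, including the Malliavin chain rule $\mathcal{D}_{t}\theta^{u}(s,Y_{s};T)=(\mathcal{D}_{t}Y_{s})\nabla\theta^{u}(s,Y_{s};T)$ for (\ref{dH_crra_u1_exp}), is a faithful and complete expansion of the paper's argument.
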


That is, the component $\pi^{u,\lambda}(t,X_{t},Y_{t})$ in (\ref{piu_decomp})
of Proposition \ref{prop:piu_decomp} vanishes under CRRA utility. As discussed
in Section \ref{section:crra_policy}, this is because the investor-specific
price of risk $\theta^{u}(s,Y_{s};T)$ does not depend on the wealth level via
the multiplier $\lambda_{s}^{\ast}$ under CRRA utility.

\section{Decomposition of optimal policy under HARA
utility\label{Appendix_policy_HARA}}

In this section, we establish the decomposition of optimal policy under
general incomplete market model (\ref{SDE_price}) -- (\ref{SDE_Y}) with the
wealth-dependent HARA utility (\ref{hara_utility}). The result is summarized
in the corollary below. It is obtained by directly particularizing Theorems
\ref{thm_representation_new} and \ref{thm_integral_equation} under the HARA utility.

\begin{corollary}
\label{Corollary_HARA}Under the HARA utility (\ref{hara_utility}) with $w>0$,
the investor-specific price of risk $\theta_{v}^{u}$ in
(\ref{thetau_representation_v}) satisfies the following $d-$dimensional
equation
\begin{equation}
\theta^{u}\left(  v,Y_{v},\lambda_{v}^{\ast};T\right)  =\frac{\sigma
(v,Y_{v})^{+}\sigma(v,Y_{v})-I_{d}}{E[\tilde{\mathcal{Q}}_{v,T}(\lambda
_{v}^{\ast})|Y_{v},\lambda_{v}^{\ast}]}\times(E[\mathcal{\tilde{H}}_{v,T}%
^{r}(\lambda_{v}^{\ast})|Y_{v},\lambda_{v}^{\ast}]+E[\mathcal{\tilde{H}}%
_{v,T}^{\theta}(\lambda_{v}^{\ast})|Y_{v},\lambda_{v}^{\ast}]+\left(
\lambda_{v}^{\ast}\right)  ^{\frac{1}{\gamma}}E\left[  \zeta_{v,T}(\lambda
_{v}^{\ast})|Y_{v},\lambda_{v}^{\ast}\right]  ). \label{integral_hara}%
\end{equation}
Here, $\mathcal{\tilde{H}}_{v,T}^{r}(\lambda_{v}^{\ast})$, $\mathcal{\tilde
{H}}_{v,T}^{\theta}(\lambda_{v}^{\ast})$, and $\tilde{\mathcal{Q}}%
_{v,T}(\lambda_{v}^{\ast})$ are defined as (\ref{crra_Hr}), (\ref{crra_Ht}),
and (\ref{crra_Q}), except for replacing $\xi_{v,s}^{\mathcal{S}}$ and
$H_{v,s}^{\theta}$ by the $\lambda_{v}^{\ast}$--dependent version $\xi
_{v,s}^{\mathcal{S}}(\lambda_{v}^{\ast})$ and $H_{v,s}^{\theta}(\lambda
_{v}^{\ast})$, which evolve according to SDEs
(\ref{thm1_SDE_xi_incomp_explicit}) and (\ref{thm1_SDE_Htheta}). Besides,
$\zeta_{v,T}(\lambda_{v}^{\ast})$ is a $d$--dimensional column vector given
by
\begin{subequations}
\begin{equation}
\zeta_{v,T}(\lambda_{v}^{\ast})=\zeta_{v,T}^{r}(\lambda_{v}^{\ast}%
)+\zeta_{v,T}^{\theta}(\lambda_{v}^{\ast}), \label{zeta_hara}%
\end{equation}
where%
\begin{align}
\zeta_{v,T}^{r}(\lambda_{v}^{\ast})  &  :=\bar{x}\xi_{v,T}^{\mathcal{S}%
}(\lambda_{v}^{\ast})H_{v,T}^{r}+\bar{c}\int_{v}^{T}\xi_{v,s}^{\mathcal{S}%
}(\lambda_{v}^{\ast})H_{v,s}^{r}ds,\label{zeta_r}\\
\zeta_{v,T}^{\theta}(\lambda_{v}^{\ast})  &  :=\bar{x}\xi_{v,T}^{\mathcal{S}%
}(\lambda_{v}^{\ast})H_{v,T}^{\theta}(\lambda_{v}^{\ast})+\bar{c}\int_{v}%
^{T}\xi_{v,s}^{\mathcal{S}}(\lambda_{v}^{\ast})H_{v,s}^{\theta}(\lambda
_{v}^{\ast})ds, \label{zeta_theta}%
\end{align}
with $\bar{x}$ and $\bar{c}$ being the minimum allowable levels for terminal
wealth and intermediate consumption under the HARA utility (\ref{hara_utility}%
). The optimal policy under HARA utility follows by $\pi_{t}=\pi^{mv}%
(t,X_{t},Y_{t})+\pi^{r}(t,X_{t},Y_{t})+\pi^{\theta}(t,X_{t},Y_{t})$, where
\end{subequations}
\begin{subequations}
\begin{equation}
\pi^{mv}(t,X_{t},Y_{t})=-\frac{1}{X_{t}}(\sigma(t,Y_{t})^{+})^{\top}\theta
^{h}(t,Y_{t})\left(  \lambda_{t}^{\ast}\right)  ^{-\frac{1}{\gamma}}%
E_{t}[\tilde{\mathcal{Q}}_{t,T}\left(  \lambda_{t}^{\ast}\right)  ],
\label{pimv_hara}%
\end{equation}
and the hedge components given by:
\begin{align}
\pi^{r}(t,X_{t},Y_{t})  &  =-\frac{1}{X_{t}}(\sigma(t,Y_{t})^{+})^{\top
}\left(  \left(  \lambda_{t}^{\ast}\right)  ^{-\frac{1}{\gamma}}%
E_{t}[\mathcal{\tilde{H}}_{t,T}^{r}\left(  \lambda_{t}^{\ast}\right)
]+E_{t}\left[  \zeta_{t,T}^{r}(\lambda_{t}^{\ast})\right]  \right)
,\label{pir_hara}\\
\pi^{\theta}(t,X_{t},Y_{t})  &  =-\frac{1}{X_{t}}(\sigma(t,Y_{t})^{+})^{\top
}\left(  \left(  \lambda_{t}^{\ast}\right)  ^{-\frac{1}{\gamma}}%
E_{t}[\mathcal{\tilde{H}}_{t,T}^{\theta}\left(  \lambda_{t}^{\ast}\right)
]+E_{t}[\zeta_{t,T}^{\theta}(\lambda_{t}^{\ast})]\right)  .
\label{pitheta_hara}%
\end{align}
The multiplier $\lambda_{t}^{\ast}$ is characterized as the unique solution
for the wealth constraint:
\end{subequations}
\begin{equation}
\left(  \lambda_{t}^{\ast}\right)  ^{-\frac{1}{\gamma}}E_{t}[\tilde
{\mathcal{G}}_{t,T}\left(  \lambda_{t}^{\ast}\right)  ]+\overline{x}%
E_{t}\left[  \xi_{t,T}^{\mathcal{S}}\left(  \lambda_{t}^{\ast}\right)
\right]  +\overline{c}E_{t}\left[  \int_{t}^{T}\xi_{t,s}^{\mathcal{S}}\left(
\lambda_{t}^{\ast}\right)  ds\right]  =X_{t}, \label{constr_hara}%
\end{equation}
where $\tilde{\mathcal{G}}_{t,T}\left(  \lambda_{t}^{\ast}\right)
=-\gamma\tilde{\mathcal{Q}}_{t,T}(\lambda_{t}^{\ast})$, defined by
(\ref{crra_G}) with $\xi_{t,s}^{\mathcal{S}}$ replaced by the $\lambda
_{t}^{\ast}$--dependent version $\xi_{t,s}^{\mathcal{S}}(\lambda_{t}^{\ast})$.
For the case of $w=0$ in utility (\ref{hara_utility}), the above
representation still holds except for dropping the terms related to
$\overline{c}$ in (\ref{zeta_r}), (\ref{zeta_theta}), and (\ref{constr_hara}).
\end{corollary}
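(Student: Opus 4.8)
The plan is to prove Corollary~\ref{Corollary_HARA} by direct substitution of the HARA specification into the general formulas of Theorems~\ref{thm_representation_new} and~\ref{thm_integral_equation} and Proposition~\ref{corollary_expression}, exploiting the fact that the HARA inverse marginal utility is a constant shift of its CRRA counterpart. First I would compute the inverse marginal utilities. From (\ref{hara_utility}) the marginal utilities are $\partial u/\partial c = w e^{-\rho t}(c-\bar c)^{-\gamma}$ and $\partial U/\partial x = (1-w)e^{-\rho T}(x-\bar x)^{-\gamma}$, so that $I^{u}(t,y)=\bar c + w^{1/\gamma}e^{-\rho t/\gamma}y^{-1/\gamma}$ and $I^{U}(T,y)=\bar x + (1-w)^{1/\gamma}e^{-\rho T/\gamma}y^{-1/\gamma}$, with $\partial I^{u}/\partial y = -\tfrac{1}{\gamma}w^{1/\gamma}e^{-\rho t/\gamma}y^{-1/\gamma-1}$ and the analogous expression for $\partial I^{U}/\partial y$. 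The crucial structural observation is that each inverse marginal utility equals the CRRA one plus an additive constant ($\bar c$ or $\bar x$), and that this constant survives in $I^{u},I^{U}$ but is annihilated by the $y$-differentiation in $\partial I^{u}/\partial y,\partial I^{U}/\partial y$.

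Second, I would substitute these into the definitions (\ref{Gamma_u1})--(\ref{Gamma_u2}) of $\Gamma^{U}_{t,T}(\lambda_t^*)$, $\Gamma^{u}_{t,s}(\lambda_t^*)$, $\Upsilon^{U}_{t,T}(\lambda_t^*)$, $\Upsilon^{u}_{t,s}(\lambda_t^*)$, evaluated at the argument $\lambda_t^*\xi^{\mathcal S}_{t,s}(\lambda_t^*)$ with the $\lambda_t^*$-dependent state price density. Each $\Gamma$ then splits as $\Gamma=(\text{constant})\cdot\xi^{\mathcal S}+(\lambda_t^*)^{-1/\gamma}(\text{CRRA term})$, whereas each $\Upsilon$ retains only the CRRA term scaled by $(\lambda_t^*)^{-1/\gamma}$, the constant having dropped out. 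Carrying this through the assembled quantities (\ref{coro_Q})--(\ref{coro_Ht}), I would obtain $\mathcal Q_{t,T}(\lambda_t^*)=(\lambda_t^*)^{-1/\gamma}\tilde{\mathcal Q}_{t,T}(\lambda_t^*)$ with $\tilde{\mathcal Q}$ as in (\ref{crra_Q}), and $\mathcal H^{r}_{t,T}(\lambda_t^*)=(\lambda_t^*)^{-1/\gamma}\tilde{\mathcal H}^{r}_{t,T}(\lambda_t^*)+\zeta^{r}_{t,T}(\lambda_t^*)$ with $\tilde{\mathcal H}^{r}$ as in (\ref{crra_Hr}) and $\zeta^{r}$ as in (\ref{zeta_r}), and likewise for $\mathcal H^{\theta}_{t,T}$. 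Here the $\zeta$-terms are precisely the contributions of the constant shifts $\bar x,\bar c$ integrated against the processes $H^{r}_{t,s}$, $H^{\theta}_{t,s}$. Substituting into the policy formulas (\ref{pimv_thm1})--(\ref{pitheta_thm1}) gives (\ref{pimv_hara})--(\ref{pitheta_hara}), and substituting into the wealth constraint (\ref{XtGt}) via $\mathcal G_{t,T}=\Gamma^{U}_{t,T}+\int_t^T\Gamma^{u}_{t,s}\,ds$ gives (\ref{constr_hara}).

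Third, for the integral equation (\ref{integral_hara}) I would insert these decompositions into the characterization (\ref{integral_theta}) of Theorem~\ref{thm_integral_equation}. The denominator $E[\mathcal Q_{v,T}(\lambda_v^*)|Y_v,\lambda_v^*]$ carries the common factor $(\lambda_v^*)^{-1/\gamma}$, while the numerator equals $(\lambda_v^*)^{-1/\gamma}\big(E[\tilde{\mathcal H}^{r}_{v,T}]+E[\tilde{\mathcal H}^{\theta}_{v,T}]\big)+E[\zeta^{r}_{v,T}+\zeta^{\theta}_{v,T}]$; cancelling $(\lambda_v^*)^{-1/\gamma}$ leaves exactly the stated $(\lambda_v^*)^{1/\gamma}E[\zeta_{v,T}]$ term with $\zeta_{v,T}$ as in (\ref{zeta_hara}). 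Finally, the case $w=0$ follows by observing that the intermediate-consumption utility degenerates, so that every term carrying $w^{1/\gamma}$ or the constant $\bar c$ — i.e. all $\int_t^T(\cdots)ds$ pieces of $\tilde{\mathcal G},\tilde{\mathcal H}^{r},\tilde{\mathcal H}^{\theta}$ and the $\bar c$-integrals in (\ref{zeta_r}), (\ref{zeta_theta}), (\ref{constr_hara}) — vanishes.

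The main obstacle is not analytical but careful bookkeeping: I must track exactly where the factor $(\lambda_v^*)^{-1/\gamma}$ appears, and in particular recognize that $\mathcal Q$ inherits no $\zeta$-type constraint term because it is built solely from the $\Upsilon$ quantities, in which the additive constants $\bar x,\bar c$ are killed by differentiation. This asymmetry between numerator and denominator of (\ref{integral_theta}) — constraint terms surviving in the former but not the latter — is precisely what produces the extra $(\lambda_v^*)^{1/\gamma}E[\zeta_{v,T}]$ contribution distinguishing the HARA equation (\ref{integral_hara}) from its CRRA analogue (\ref{integral_crra}). Uniqueness of the multiplier $\lambda_t^*$ solving (\ref{constr_hara}) I would simply inherit from Theorem~\ref{thm_representation_new}, where it is already characterized as the unique solution of (\ref{XtGt}).
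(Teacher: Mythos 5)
Your proposal is correct and follows exactly the route the paper takes: the corollary is obtained by directly particularizing Theorems \ref{thm_representation_new} and \ref{thm_integral_equation} (via Proposition \ref{corollary_expression}) to the HARA inverse marginal utilities, and your key bookkeeping observation — that the additive constants $\bar c,\bar x$ survive in the $\Gamma$'s but are annihilated by the $y$-differentiation in the $\Upsilon$'s, so that $\mathcal{Q}$ picks up only the $(\lambda_t^{\ast})^{-1/\gamma}$ factor while $\mathcal{H}^{r}$ and $\mathcal{H}^{\theta}$ additionally acquire the $\zeta$-terms — is precisely what generates (\ref{integral_hara}) and (\ref{pimv_hara})--(\ref{constr_hara}).
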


Corollary \ref{Corollary_HARA} shows how investor's wealth level gets involved
in the optimal policy when switching from the wealth-indpendent CRRA utility
to the HARA utility. First, by (\ref{integral_hara}), we can see that the
multiplier $\lambda_{v}^{\ast}$ directly appears in the investor-specific
price of risk under the HARA utility via the term $\left(  \lambda_{v}^{\ast
}\right)  ^{\frac{1}{\gamma}}E\left[  \zeta_{v,T}(\lambda_{v}^{\ast}%
)|Y_{v},\lambda_{v}^{\ast}\right]  $, which is fully absent under the CRRA
utility by (\ref{integral_crra}). It introduces the dependence of $\theta
^{u}\left(  v,Y_{v},\lambda_{v}^{\ast};T\right)  $ on the multiplier
$\lambda_{v}^{\ast}$ under the HARA utility. As the multiplier $\lambda
_{v}^{\ast}$ is determined by the wealth level $X_{v}$ via (\ref{XtGt}), the
investor-specific price of risk is wealth-dependent under the HARA utility. By
the dynamics in (\ref{thm1_SDE_xi_incomp_explicit}) and (\ref{thm1_SDE_Htheta}%
), the components $\xi_{v,s}^{\mathcal{S}}(\lambda_{v}^{\ast})$ and
$H_{v,s}^{\theta}(\lambda_{v}^{\ast})$, also depend on the multiplier
$\lambda_{v}^{\ast}$. It further introduces $\lambda_{v}^{\ast}$ into the
functions $\mathcal{\tilde{H}}_{v,T}^{r}\left(  \lambda_{v}^{\ast}\right)  $,
$\mathcal{\tilde{H}}_{v,T}^{\theta}\left(  \lambda_{v}^{\ast}\right)  $,
$\tilde{\mathcal{G}}_{v,T}\left(  \lambda_{v}^{\ast}\right)  $, and
$\tilde{\mathcal{Q}}_{v,T}(\lambda_{v}^{\ast})$ by (\ref{crra_Ht}) --
(\ref{crra_G}) and (\ref{crra_Q}).

The above comparison shows that the term $\left(  \lambda_{v}^{\ast}\right)
^{\frac{1}{\gamma}}E\left[  \zeta_{v,T}(\lambda_{v}^{\ast})|Y_{v},\lambda
_{v}^{\ast}\right]  $ in (\ref{integral_hara}) plays an essential role in
differentiating the optimal policy under the HARA utility from that under the
CRRA utility. By definitions (\ref{zeta_hara}) -- (\ref{zeta_theta}), we see
that $\zeta_{v,T}(\lambda_{v}^{\ast})$ is essentially introduced by the
coefficients $\bar{c}$ and $\bar{x}$, which denote the lower bounds on
intermediate consumption and terminal wealth in (\ref{hara_utility}) for HARA
investors. When $\bar{x}=\bar{c}=0$, the term $\zeta_{v,T}(\lambda_{v}^{\ast
})$ vanishes in (\ref{integral_hara}) as the HARA utility reduces to the CRRA
utility, under which the optimal policy becomes independent of wealth. Our
decomposition suggests that the wealth-dependent property of the HARA utility
is indeed introduced by the wealth constraints $\bar{c}$ and $\bar{x}$.

\bigskip

\section{Proofs of Proposition \ref{prop:hara_opt} and Corollary
\ref{corollary:Heston} \label{Appendix_proofs}}

\subsection{Proof of Proposition \ref{prop:hara_opt}%
\label{Appendix_proof_HARA_prop}}

As a supplement to Proposition \ref{prop:hara_opt}, we first characterize the
investor-specific price of risk $\theta_{v}^{u}$ under HARA utility with
nonrandom interest rate $r_{t}$. By Proposition \ref{prop:hara_opt}, it
coincides with the counterpart under CRRA utility, and thus can be simplified
as $\theta_{v}^{u}=\theta^{u}(v,Y_{v};T)$. It satisfies the following
$d$-dimensional equation:
\begin{equation}
\theta^{u}\left(  v,Y_{v};T\right)  =\frac{\sigma(v,Y_{v})^{+}\sigma
(v,Y_{v})-I_{d}}{E[\tilde{\mathcal{Q}}_{v,T}|Y_{v}]}\times E[\mathcal{\tilde
{H}}_{v,T}^{\theta}|Y_{v}], \label{CRRA_HARA_theta_u_equation_det_r}%
\end{equation}
where $\mathcal{\tilde{H}}_{v,T}^{\theta}$ and $\tilde{\mathcal{Q}}_{v,T}$ are
given by (\ref{crra_Ht}) and (\ref{crra_Q}). In the following, we prove
Proposition \ref{prop:hara_opt} together with the representation
(\ref{CRRA_HARA_theta_u_equation_det_r}).

Next, as a technical preparation, we prove the following lemma.

\begin{lemma}
\label{lemma_xiH}With deterministic interest rate $r_{s}$, the following
relationship holds for general utility functions:%
\begin{equation}
E_{t}[\xi_{t,s}^{\mathcal{S}}\left(  \lambda_{t}^{\ast}\right)  H_{t,s}%
^{\theta}\left(  \lambda_{t}^{\ast}\right)  ]\equiv0_{d},\text{ for any }s\geq
t. \label{conclusion}%
\end{equation}
Here, $\xi_{t,s}^{\mathcal{S}}\left(  \lambda_{t}^{\ast}\right)  $ is the
relative state price density, and $H_{t,s}^{\theta}\left(  \lambda_{t}^{\ast
}\right)  $ is the Malliavin term related to the uncertainty in the total
price of risk, with dynamics explicitly given in
(\ref{thm1_SDE_xi_incomp_explicit}) and (\ref{thm1_SDE_Htheta}).
\end{lemma}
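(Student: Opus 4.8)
The plan is to recognize the left-hand side of (\ref{conclusion}) as a conditional expectation that becomes trivial after an appropriate change of measure, using nothing about the utility beyond the deterministic interest rate and the common Brownian driver of the two SDEs. The starting point is that, with a deterministic $r_v$, the relative state price density factorizes as $\xi_{t,s}^{\mathcal{S}}(\lambda_t^{\ast})=B_{t,s}\,Z_{t,s}$, where $B_{t,s}=\exp(-\int_t^s r_v\,dv)$ is deterministic and $Z_{t,s}:=\xi_{t,s}^{\mathcal{S}}(\lambda_t^{\ast})/B_{t,s}$ is, by (\ref{thm1_SDE_xi_incomp_explicit}), the stochastic exponential solving $dZ_{t,s}=-Z_{t,s}\,\theta_s^{\mathcal{S}}(\lambda_t^{\ast})^{\top}dW_s$ with $Z_{t,t}=1$. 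Under the Novikov-type condition already assumed for the total price of risk, $Z_{t,s}$ is a genuine martingale, so it defines, conditionally on the information at time $t$, an equivalent measure $Q$ with density $dQ/dP=Z_{t,s}$.

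First I would invoke Girsanov's theorem: since $Z$ is the exponential of $-\int\theta^{\mathcal{S}}(\lambda_t^{\ast})^{\top}dW$, the process $\tilde W_s:=W_s-W_t+\int_t^s\theta_v^{\mathcal{S}}(\lambda_t^{\ast})\,dv$ is a $Q$-Brownian motion on $[t,\infty)$. The key observation is that the increment appearing in the dynamics (\ref{thm1_SDE_Htheta}) of $H_{t,s}^{\theta}(\lambda_t^{\ast})$ is exactly $\theta_s^{\mathcal{S}}(\lambda_t^{\ast})\,ds+dW_s=d\tilde W_s$. Hence under $Q$ the Malliavin term solves $dH_{t,s}^{\theta}(\lambda_t^{\ast})=A_s\,d\tilde W_s$, where $A_s:=(\mathcal{D}_tY_s)\nabla\theta^{h}(s,Y_s)+\mathcal{D}_t\theta_s^{u}(\lambda_t^{\ast})$ and $H_{t,t}^{\theta}(\lambda_t^{\ast})=0_d$; that is, $H_{t,s}^{\theta}(\lambda_t^{\ast})=\int_t^s A_v\,d\tilde W_v$ is a $Q$-local martingale started from zero, and under the standing integrability of the diffusion coefficients it is a true $Q$-martingale, so $E_t^{Q}[H_{t,s}^{\theta}(\lambda_t^{\ast})]=0_d$.

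It then remains to translate this back to $P$. Using that $B_{t,s}$ is deterministic and that $Z_{t,s}$ is the $Q$-density, I would write $E_t[\xi_{t,s}^{\mathcal{S}}(\lambda_t^{\ast})H_{t,s}^{\theta}(\lambda_t^{\ast})]=B_{t,s}\,E_t[Z_{t,s}H_{t,s}^{\theta}(\lambda_t^{\ast})]=B_{t,s}\,E_t^{Q}[H_{t,s}^{\theta}(\lambda_t^{\ast})]=0_d$, which is precisely (\ref{conclusion}). An equivalent self-contained route avoids the measure change: apply It\^o's product rule to $\xi_{t,s}^{\mathcal{S}}(\lambda_t^{\ast})H_{t,s}^{\theta}(\lambda_t^{\ast})$, note that the finite-variation contribution $\xi\,A_s\theta_s^{\mathcal{S}}(\lambda_t^{\ast})\,ds$ from the drift of $H^{\theta}$ is exactly cancelled by the quadratic-covariation term $-\xi\,A_s\theta_s^{\mathcal{S}}(\lambda_t^{\ast})\,ds$ generated by the shared Brownian driver, so the product has drift $-r_s\,\xi_{t,s}^{\mathcal{S}}H_{t,s}^{\theta}\,ds$ only; taking $E_t[\cdot]$ and pulling the deterministic $r_s$ out yields the linear ODE $f'(s)=-r_s f(s)$ for $f(s):=E_t[\xi_{t,s}^{\mathcal{S}}H_{t,s}^{\theta}]$ with $f(t)=0_d$, whence $f\equiv 0_d$.

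The main obstacle is not the algebra of the cancellation, which is automatic from the matching structure of (\ref{thm1_SDE_xi_incomp_explicit}) and (\ref{thm1_SDE_Htheta}), but the justification that the relevant stochastic integrals are true martingales rather than merely local ones. This is precisely where the deterministic interest rate enters essentially: it lets $B_{t,s}$ factor out and lets $r_s$ leave the conditional expectation, and it forces one to appeal to the Novikov-type condition on $\theta^{\mathcal{S}}(\lambda_t^{\ast})$ and to the integrability of $A_s$ to rule out a nonzero drift surviving after conditioning.
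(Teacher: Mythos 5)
Your proposal is correct and is essentially the paper's own argument: the paper factors $\xi_{t,s}^{\mathcal{S}}=B_{t,s}\eta_{t,s}$ with $B_{t,s}$ deterministic and applies It\^o's product rule to $\eta_{s}H_{t,s}^{\theta}(\lambda_{t}^{\ast})$, where the drift of $H^{\theta}$ cancels against the quadratic covariation exactly as in your ``self-contained route,'' leaving a driftless integral with zero initial value. Your Girsanov framing is just the measure-change repackaging of that same cancellation, and both your write-up and the paper's rely on the assumed Novikov condition at the same (asserted) level of rigor for the true-martingale step.
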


\begin{proof}
By (\ref{def_Xit_incomp}), we get $\xi_{t}^{\mathcal{S}}=\exp(-\int_{0}%
^{t}r_{v}dv-\int_{0}^{t}(\theta_{v}^{\mathcal{S}})^{\top}dW_{v}-\frac{1}%
{2}\int_{0}^{t}(\theta_{v}^{\mathcal{S}})^{\top}\theta_{v}^{\mathcal{S}}dv)$
for the state price density in incomplete markets. We can decompose it into
two parts related to the interest rate and total price of risk, i.e., $\xi
_{t}^{\mathcal{S}}=B_{t}\eta_{t},$ where%
\begin{equation}
B_{t}=\exp\left(  -\int_{0}^{t}r_{v}dv\right)  \ \text{and}\ \eta_{t}%
=\exp\left(  -\int_{0}^{t}\left(  \theta_{v}^{\mathcal{S}}\right)  ^{\top
}dW_{vs}-\frac{1}{2}\int_{0}^{t}\left(  \theta_{v}^{\mathcal{S}}\right)
^{\top}\theta_{v}^{\mathcal{S}}dv\right)  . \label{Bt_etat}%
\end{equation}
With a deterministic interest rate $r_{s}$, the discount term $B_{t}$ is also
deterministic. A straightforward application of Ito's formula leads to the SDE
of $\eta_{t}$ as
\begin{equation}
d\eta_{t}=-\eta_{t}\left(  \theta_{t}^{\mathcal{S}}\right)  ^{\top}dW_{t}.
\label{SDE_eta}%
\end{equation}
The martingale property of $\eta_{t}$ leads to%
\begin{equation*}
E_{t}\left[  \eta_{s}\right]  =\eta_{t},\text{ for any\ }s\geq t.
\end{equation*}
Next, we prove%
\begin{equation}
E_{t}[\eta_{s}H_{t,s}^{\theta}\left(  \lambda_{t}^{\ast}\right)  ]\equiv0_{d},
\label{EetaHtheta}%
\end{equation}
using standard Ito calculus. By the dynamics of $\eta_{s}$ in (\ref{SDE_eta})
and that of $H_{t,s}^{\theta}(\lambda_{t}^{\ast})$ in (\ref{thm1_SDE_Htheta}),
i.e., $dH_{t,s}^{\theta}(\lambda_{t}^{\ast})=M_{t,s}\left(  \lambda_{t}^{\ast
}\right)  [\theta_{s}^{\mathcal{S}}(\lambda_{t}^{\ast})ds+dW_{s}]$ with%
\[
M_{t,s}\left(  \lambda_{t}^{\ast}\right)  =\left(  \mathcal{D}_{t}%
Y_{s}\right)  \nabla\theta^{h}(s,Y_{s})+\mathcal{D}_{t}\theta_{s}^{u}%
(\lambda_{t}^{\ast}),
\]
a straightforward application of the Ito's formula leads to
\[
d(\eta_{s}H_{t,s}^{\theta}\left(  \lambda_{t}^{\ast}\right)  )=(-\eta
_{s}\theta_{s}^{\mathcal{S}}\left(  \lambda_{t}^{\ast}\right)  ^{\top}%
H_{t,s}^{\theta}\left(  \lambda_{t}^{\ast}\right)  +\eta_{s}M_{t,s}\left(
\lambda_{t}^{\ast}\right)  )dW_{s}.
\]
Then, (\ref{EetaHtheta}) follows by $E_{t}[\eta_{s}H_{t,s}^{\theta}\left(
\lambda_{t}^{\ast}\right)  ]=\eta_{t}H_{t,t}^{\theta}\left(  \lambda_{t}%
^{\ast}\right)  \equiv0_{d}$ according to initial condition $H_{t,t}^{\theta
}\left(  \lambda_{t}^{\ast}\right)  \equiv0_{d}$. Finally, relationship
(\ref{conclusion}) comes from $E_{t}[\xi_{t,s}^{\mathcal{S}}\left(
\lambda_{t}^{\ast}\right)  H_{t,s}^{\theta}\left(  \lambda_{t}^{\ast}\right)
]=E_{t}[\xi_{s}^{\mathcal{S}}H_{t,s}^{\theta}]/\xi_{t}^{\mathcal{S}}%
=B_{s}E_{t}[\eta_{s}H_{t,s}^{\theta}\left(  \lambda_{t}^{\ast}\right)
]/\xi_{t}^{\mathcal{S}}=0_{d},$ where the second equality follows from the
deterministic nature of the discount term $B_{s}$ as well as (\ref{EetaHtheta}).
\end{proof}
Now, we are ready to prove Proposition \ref{prop:hara_opt} for the optimal
policy under HARA utility with a deterministic interest rate. Without loss of
generality, we assume $w>0$ in utility (\ref{hara_utility}), as the case of
$w=0$ follows in a similar fashion.\footnote{For the proof under the case of
$w=0,$ we just need to drop all the terms related to $\bar{c}$.}

\begin{proof}
\textit{Part 1:} First, we show that with a deterministic interest rate, the
investor-specific price of risk $\theta^{u}\left(  v,Y_{v},\lambda_{v}^{\ast
};T\right)  $ under HARA utility coincides with its counterpart under CRRA
utility given in Corollary \ref{Corollary_CRRA}, and is thus independent of
multiplier $\lambda_{v}^{\ast}$. To begin with, we employ the dual problem
technique introduced in \cite{HePearson91} to characterize the
investor-specific price of risk $\theta^{u}$ by
\begin{equation}
\inf_{\theta^{u}\in\text{Ker}(\sigma)}E\left[  \int_{0}^{T}\tilde{u}%
(v,\lambda_{v}^{\ast})dv+\tilde{U}(T,\lambda_{T}^{\ast})\right]  ,
\label{dual_HP_1_main}%
\end{equation}
where $\tilde{u}(t,y):=\sup_{x\geq0}(u(t,x)-yx)$ and $\tilde{U}(t,y):=\sup
_{x\geq0}\left(  U(t,x)-yx\right)  $; the constraint $\theta^{u}\in
$Ker$(\sigma)$ abbreviates for $\theta_{v}^{u}\in$Ker$(\sigma(v,Y_{v}))$ for
$0\leq v\leq T$, with Ker$(\sigma(v,Y_{v})):=\{w\in R^{d}:\sigma
(v,Y_{v})w\equiv0_{m}\}$ denoting the kernel of $\sigma(v,Y_{v})$. Using the
explicit forms of functions $I^{u}(t,y)$ and $I^{U}(t,y)$ under HARA utility,
we can explicitly specify the dual problem (\ref{dual_HP_1_main}) conditional
on information up to time $v$ as
\begin{equation}
\inf_{\theta^{u}\in\text{Ker}(\sigma)}E_{v}\left[  (1-w)^{\frac{1}{\gamma}%
}e^{-\frac{\rho T}{\gamma}}\left(  \lambda_{T}^{\ast}\right)  ^{1-\frac
{1}{\gamma}}+w^{\frac{1}{\gamma}}\int_{v}^{T}e^{-\frac{\rho s}{\gamma}}\left(
\lambda_{s}^{\ast}\right)  ^{1-\frac{1}{\gamma}}ds+\frac{\gamma-1}{\gamma
}A_{v,T}\right]  , \label{dual_HARA}%
\end{equation}
where $A_{v,T}=\bar{x}\lambda_{T}^{\ast}+\bar{c}\int_{v}^{T}\lambda_{s}^{\ast
}ds.$ Here, $\bar{x}$ and $\bar{c}$ are the minimum requirements for terminal
wealth and intermediate consumption. On the other hand, under CRRA utility,
the dual problem (\ref{dual_HP_1_main}) specifies to
\begin{equation}
\inf_{\theta^{u}\in\text{Ker}(\sigma)}E_{v}\left[  (1-w)^{\frac{1}{\gamma}%
}e^{-\frac{\rho T}{\gamma}}\left(  \lambda_{T}^{\ast}\right)  ^{1-\frac
{1}{\gamma}}+w^{\frac{1}{\gamma}}\int_{v}^{T}e^{-\frac{\rho s}{\gamma}}\left(
\lambda_{s}^{\ast}\right)  ^{1-\frac{1}{\gamma}}ds\right]
,\ \label{dual_CRRA_v_main}%
\end{equation}
conditioning on information up to time $v$. Comparing (\ref{dual_HARA}) and
(\ref{dual_CRRA_v_main}), we see that the term $A_{v,T}$ in (\ref{dual_HARA})
distinguishes the dual problem under HARA utility from that under CRRA
utility.
With a deterministic interest rate, we then verify that $E_{v}\left[
A_{v,T}\right]  $ does not depend on the control process $\theta_{v}^{u}$ for
$v\in\lbrack v,T]$ and thus can be dropped from the dual problem
(\ref{dual_HARA}) to simplify it as the CRRA counterpart
(\ref{dual_CRRA_v_main}). To see this, we use the relationship $\lambda
_{s}^{\ast}=\lambda_{0}^{\ast}\xi_{s}^{\mathcal{S}}=\lambda_{v}^{\ast}%
\xi_{v,s}^{\mathcal{S}}$ to derive that
\begin{equation}
E_{v}\left[  A_{v,T}\right]  =\bar{x}E_{v}\left[  \lambda_{T}^{\ast}\right]
+\bar{c}\int_{v}^{T}E_{v}\left[  \lambda_{s}^{\ast}\right]  \ ds=\lambda
_{v}^{\ast}\left[  \bar{x}E_{v}[\xi_{v,T}^{\mathcal{S}}]+\bar{c}\int_{v}%
^{T}E_{v}[\xi_{v,s}^{\mathcal{S}}]ds\right]  . \label{EvA}%
\end{equation}
We express the conditional expectation $E_{v}[\xi_{v,s}^{\mathcal{S}}]$ as
$E_{v}[\xi_{v,s}^{\mathcal{S}}]=E_{v}\left[  B_{v,s}\eta_{v,s}\right]  $,
where $B_{v,s}:=B_{s}/B_{v}$ and $\eta_{v,s}:=\eta_{s}/\eta_{v}$ following
(\ref{Bt_etat}). A straightforward application of Ito's formula leads to the
SDE of $\eta_{v,s}$ as $d\eta_{v,s}=-\eta_{v,s}\left(  \theta_{s}%
^{\mathcal{S}}\right)  ^{\top}dW_{s}.$ As we assume a deterministic interest
rate, $B_{v,s}$ is also deterministic. Thus, we have
\begin{equation}
E_{v}[\xi_{v,s}^{\mathcal{S}}]=E_{v}\left[  B_{v,s}\eta_{v,s}\right]
=B_{v,s}E_{v}\left[  \eta_{v,s}\right]  =B_{v,s}, \label{E_xi_hara}%
\end{equation}
where the last equality follows from the martingale property of $\eta_{v,s}$
as a process in $s$ and the fact that $\eta_{v,v}=1$. Plugging $E_{v}\left[
\xi_{v,s}^{\mathcal{S}}\right]  $ into (\ref{EvA}), we obtain that
$E_{v}\left[  A_{v,T}\right]  =\lambda_{v}^{\ast}[\bar{x}B_{v,T}+\bar{c}%
\int_{v}^{T}B_{v,s}\ ds],$ which obviously does not depend on the control
process $\theta_{v}^{u}$. Thus, we can drop the term $A_{v,T}$ from
(\ref{dual_HARA}).
By the above arguments, we show that with a deterministic
interest rate, the investor-specific price of risk $\theta_{v}^{u}$ under HARA
utility is uniquely characterized as the control process for the dual problem
(\ref{dual_CRRA_v_main}) with the underlying Markov process $\left(  Y_{s}%
,\xi_{v,s}^{\mathcal{S}}\right)  $ for $v\leq s\leq T$. Thus, we can verify
that the dual problems, as well as the underlying Markov process, are actually
the same under the HARA and CRRA utilities. They directly lead to the same
optimal control process $\theta_{v}^{u}$. It proves that with a deterministic
interest rate, the investor-specific price of risk $\theta^{u}\left(
v,Y_{v},\lambda_{v}^{\ast};T\right)  $ under HARA utility coincides with its
counterpart under CRRA utility, and is thus independent of the multiplier
$\lambda_{v}^{\ast}$. So, we can express it as $\theta_{v}^{u}=\theta
^{u}\left(  v,Y_{v};T\right)  $ with the same function $\theta^{u}\left(
v,Y_{v};T\right)  $ that satisfies equation (\ref{integral_hara}) under CRRA
utility. Consequently, the quantities $\mathcal{\tilde{H}}_{v,T}^{r}%
(\lambda_{v}^{\ast})$, $\mathcal{\tilde{H}}_{v,T}^{\theta}(\lambda_{v}^{\ast
})$, $\tilde{\mathcal{Q}}_{v,T}(\lambda_{v}^{\ast})$, and $\tilde{\mathcal{G}%
}_{v,T}(\lambda_{v}^{\ast})$ for HARA utility are also independent of the
multiplier $\lambda_{v}^{\ast}$, and coincide with their counterparts under
CRRA utility, which are given in (\ref{crra_Hr}), (\ref{crra_Ht}),
(\ref{crra_Q}), and (\ref{crra_G}).
Next, we establish equation
(\ref{CRRA_HARA_theta_u_equation_det_r}) that governs the investor-specific
price of risk $\theta^{u}\left(  v,Y_{v};T\right)  $. It follows from equation
(\ref{integral_hara}) that, whether the interest rate is deterministic or not,
$\theta^{u}\left(  v,Y_{v},\lambda_{v}^{\ast};T\right)  $ under HARA utility
is characterized by
\begin{equation}
\theta^{u}\left(  v,Y_{v},\lambda_{v}^{\ast};T\right)  =\frac{\sigma
(v,Y_{v})^{+}\sigma(v,Y_{v})-I_{d}}{E[\tilde{\mathcal{Q}}_{v,T}(\lambda
_{v}^{\ast})|Y_{v},\lambda_{v}^{\ast}]}\times(E[\mathcal{\tilde{H}}_{v,T}^{r}(\lambda_{v}^{\ast
})|Y_{v},\lambda_{v}^{\ast}]+E[\mathcal{\tilde{H}}_{v,T}^{\theta}(\lambda_{v}^{\ast}%
)|Y_{v},\lambda_{v}^{\ast}]+\left(  \lambda_{v}^{\ast}\right)  ^{\frac{1}{\gamma}}E\left[
\zeta_{v,T}(\lambda_{v}^{\ast})|Y_{v},\lambda_{v}^{\ast}\right]  ), \label{theta_u_HARA_eqn}%
\end{equation}
where $\zeta_{v,T}(\lambda_{v}^{\ast})=\zeta_{v,T}^{r}(\lambda_{v}^{\ast
})+\zeta_{v,T}^{\theta}(\lambda_{v}^{\ast})$ according to (\ref{zeta_hara}).
With a deterministic interest rate, we have $H_{v,s}^{r}\equiv0_{d}$ due to
(\ref{thm1_SDE_Hr}) and $\nabla r(s,Y_{s})\equiv0_{n}$. Thus, it follows from
(\ref{crra_Hr}) and (\ref{zeta_r}) that $\mathcal{\tilde{H}}_{v,T}^{r}%
(\lambda_{v}^{\ast})\equiv0_{d}\text{ and }\zeta_{v,T}^{r}(\lambda_{v}^{\ast
})\equiv0_{d}.$ Also recall that $\theta^{u}\left(  v,Y_{v},\lambda_{v}^{\ast
};T\right)  $ is independent of $\lambda_{v}^{\ast}$ and thus simplifies to
$\theta^{u}\left(  v,Y_{v};T\right)  .$ Then, we simplify equation
(\ref{theta_u_HARA_eqn}) to
\begin{equation}
\theta^{u}\left(  v,Y_{v};T\right)  =\frac{\sigma(v,Y_{v})^{+}\sigma
(v,Y_{v})-I_{d}}{E[\tilde{\mathcal{Q}}_{v,T}(\lambda_{v}^{\ast})|Y_{v},\lambda_{v}^{\ast}]}%
\times(E[\mathcal{\tilde{H}}_{v,T}^{\theta}(\lambda_{v}^{\ast})|Y_{v},\lambda_{v}^{\ast}]+\left(
\lambda_{v}^{\ast}\right)  ^{\frac{1}{\gamma}}E[  \zeta^{\theta}_{v,T}(\lambda
_{v}^{\ast})|Y_{v},\lambda_{v}^{\ast}]  ), \label{thetau_ap}%
\end{equation}
where $\zeta_{v,T}^{\theta}(\lambda_{v}^{\ast})$ is defined by
(\ref{zeta_theta}) as $\zeta_{v,T}^{\theta}(\lambda_{v}^{\ast})=\bar{x}%
\xi_{v,T}^{\mathcal{S}}(\lambda_{v}^{\ast})H_{v,T}^{\theta}(\lambda_{v}^{\ast
})+\bar{c}\int_{v}^{T}\xi_{v,s}^{\mathcal{S}}(\lambda_{v}^{\ast}%
)H_{v,s}^{\theta}(\lambda_{v}^{\ast})ds.$ By Lemma \ref{lemma_xiH}, its
expectation is always zero under a deterministic interest rate, i.e.,
\begin{equation}
E_{v}[\zeta_{v,T}^{\theta}(\lambda_{v}^{\ast})]=\bar{x}E_{v}[\xi
_{v,T}^{\mathcal{S}}(\lambda_{v}^{\ast})H_{v,T}^{\theta}(\lambda_{v}^{\ast
})]+\bar{c}\int_{v}^{T}E_{v}[\xi_{v,s}^{\mathcal{S}}(\lambda_{v}^{\ast
})H_{v,s}^{\theta}(\lambda_{v}^{\ast})]ds\equiv0_{d}. \label{zeta_zero}%
\end{equation}
Thus, the last term $\left(  \lambda_{v}^{\ast}\right)  ^{\frac{1}{\gamma}%
}E[\zeta_{v,T}^{\theta}(\lambda_{v}^{\ast})|Y_{v},\lambda_{v}^{\ast}]$ vanishes in
(\ref{thetau_ap}), and the equation further simplifies to
\[
\theta^{u}\left(  v,Y_{v};T\right)  =\frac{\sigma(v,Y_{v})^{+}\sigma
(v,Y_{v})-I_{d}}{E[\tilde{\mathcal{Q}}_{v,T}(\lambda_{v}^{\ast})|Y_{v},\lambda_{v}^{\ast}]}\times
E[\mathcal{\tilde{H}}_{v,T}^{\theta}(\lambda_{v}^{\ast})|Y_{v},\lambda_{v}^{\ast}].
\]
By examining the definitions of $\mathcal{\tilde{H}}_{v,T}^{\theta}%
(\lambda_{v}^{\ast})$ and $\tilde{\mathcal{Q}}_{v,T}(\lambda_{v}^{\ast})$ as
well as the SDEs of $\xi_{v,s}^{\mathcal{S}}(\lambda_{v}^{\ast})$ and
$H_{v,s}^{\theta}(\lambda_{v}^{\ast})$ in (\ref{thm1_SDE_xi_incomp_explicit})
and (\ref{thm1_SDE_Htheta}), we confirm that $\mathcal{\tilde{H}}%
_{v,T}^{\theta}(\lambda_{v}^{\ast})$ and $\tilde{\mathcal{Q}}_{v,T}%
(\lambda_{v}^{\ast})$ reduce to $\mathcal{\tilde{H}}_{v,T}^{\theta}$ and
$\tilde{\mathcal{Q}}_{v,T}$ given in (\ref{crra_Ht}) and (\ref{crra_Q}).
Hence, the multiplier $\lambda_{v}^{\ast}$ does not show up in either the
above equation system or its solution $\theta^{u}\left(  v,Y_{v};T\right)  $. It proves equation (\ref{CRRA_HARA_theta_u_equation_det_r}) for $\theta^u(v,Y_v;T)$.
\textit{Part 2:} Next, we look into the optimal policy under HARA utility with
a deterministic interest rate. Under this circumstance, we have $H_{t,s}%
^{r}\equiv0_{d}$ by (\ref{thm1_SDE_Hr}). Thus, it follows from (\ref{crra_Hr}%
), (\ref{zeta_r}), and (\ref{pir_hara}) that the interest hedge component
$\pi_{H}^{r}(t,X_{t},Y_{t})=0_{m}$, i.e., there is no need to hedge interest
rate uncertainty. So, we only need to focus on the mean-variance and price of
risk hedge components. First, we solve for the multiplier $\lambda_{t}^{\ast}$
from the wealth equation (\ref{constr_hara}), i.e., $\left(  \lambda_{t}%
^{\ast}\right)  ^{-\frac{1}{\gamma}}E_{t}[\tilde{\mathcal{G}}_{t,T}%
]+\overline{x}E_{t}[\xi_{t,T}^{\mathcal{S}}]+\overline{c}E_{t}\big[\int%
_{t}^{T}\xi_{t,s}^{\mathcal{S}}ds\big]=X_{t}.$ Here, we drop the dependence on
$\lambda_{t}^{\ast}$ from $\tilde{\mathcal{G}}_{t,T}$ and $\xi_{t,s}%
^{\mathcal{S}}$. It is because we have shown in Part 1 that the
investor-specific price of risk $\theta^{u}\left(  t,Y_{t};T\right)  $ does
not depend on $\lambda_{t}^{\ast},$ and neither do $\tilde{\mathcal{G}}_{t,T}$
and $\xi_{t,s}^{\mathcal{S}}$ according to (\ref{crra_G}) and
(\ref{thm1_SDE_xi_incomp_explicit}). By (\ref{E_xi_hara}), we have $E_{t}%
[\xi_{t,s}^{\mathcal{S}}]=B_{t,s}$. Plugging it to the above equation, we
solve $\left(  \lambda_{t}^{\ast}\right)  ^{-\frac{1}{\gamma}}$ as
\begin{equation}
\left(  \lambda_{t}^{\ast}\right)  ^{-\frac{1}{\gamma}}=\frac{\bar{X}_{t}%
}{E_{t}[\tilde{\mathcal{G}}_{t,T}]}, \label{lambda_start_power}%
\end{equation}
where $\bar{X}_{t}$ is defined in (\ref{xtH_1}), i.e., $\bar{X}_{t}%
=X_{t}-\overline{x}B_{t,T}-\overline{c}\int_{t}^{T}B_{t,s}ds$; $\tilde
{\mathcal{G}}_{t,T}$ is defined in (\ref{crra_G}). Plugging
(\ref{lambda_start_power}) into the mean-variance component in
(\ref{pimv_hara}) and invoking the relationship $\tilde{\mathcal{G}}%
_{t,T}=-\gamma\tilde{\mathcal{Q}}_{t,T}$, we can derive%
\begin{align}
\pi_{H}^{mv}(t,X_{t},Y_{t})  &  =-\frac{1}{X_{t}}(\sigma(t,Y_{t})^{+})^{\top
}\theta^{h}(t,Y_{t})\left(  \lambda_{t}^{\ast}\right)  ^{-\frac{1}{\gamma}%
}E_{t}[\tilde{\mathcal{Q}}_{t,T}]\nonumber\\
&  =-\frac{1}{X_{t}}(\sigma(t,Y_{t})^{+})^{\top}\theta^{h}(t,Y_{t})\frac
{\bar{X}_{t}}{E_{t}[\tilde{\mathcal{G}}_{t,T}]}E_{t}[\tilde{\mathcal{Q}}%
_{t,T}]\nonumber\\
&  =\frac{\bar{X}_{t}}{\gamma X_{t}}(\sigma(t,Y_{t})^{+})^{\top}\theta
^{h}(t,Y_{t}). \label{pimv_determin_r}%
\end{align}
Next, in (\ref{pitheta_hara}), we have
\[
\pi_{H}^{\theta}(t,X_{t},Y_{t})=-\frac{1}{X_{t}}(\sigma(t,Y_{t})^{+})^{\top
}\left(  \left(  \lambda_{t}^{\ast}\right)  ^{-\frac{1}{\gamma}}%
E_{t}[\mathcal{\tilde{H}}_{t,T}^{\theta}]+E_{t}[\zeta_{t,T}^{\theta}]\right)
\break=-\frac{1}{X_{t}}(\sigma(t,Y_{t})^{+})^{\top}\left(  \lambda_{t}^{\ast
}\right)  ^{-\frac{1}{\gamma}}E_{t}[\mathcal{\tilde{H}}_{t,T}^{\theta}]
\]
for the price of risk hedge component, where the second equality follows from
(\ref{zeta_zero}). Plugging (\ref{lambda_start_power}) into the right-hand
side, we obtain%
\begin{equation}
\pi_{H}^{\theta}(t,X_{t},Y_{t})=-(\sigma(t,Y_{t})^{+})^{\top}\frac{\bar{X}%
_{t}}{X_{t}}\frac{E_{t}[\mathcal{\tilde{H}}_{t,T}^{\theta}]}{E_{t}%
[\tilde{\mathcal{G}}_{t,T}]}=(\sigma(t,Y_{t})^{+})^{\top}\frac{\bar{X}_{t}%
}{X_{t}}\frac{E_{t}[\mathcal{\tilde{H}}_{t,T}^{\theta}]}{\gamma E_{t}%
[\tilde{\mathcal{Q}}_{t,T}]}, \label{pitheta_determin_r}%
\end{equation}
where the last equality follows from the relation $\tilde{\mathcal{Q}}%
_{t,T}=-\tilde{\mathcal{G}}_{t,T}/\gamma$ by (\ref{crra_G}) and (\ref{crra_Q}%
). Finally, we can obtain relationships (\ref{H_C_ratio}) by comparing the
optimal components in (\ref{pimv_determin_r}) and (\ref{pitheta_determin_r})
with their counterparts under CRRA utility.
\end{proof}

\subsection{Proof of Corollary \ref{corollary:Heston}
\label{Appendix_proof_Corollary_Heston}}

\begin{proof}
The optimal policy (\ref{HN_hara}) directly follows from the HARA policy
decomposition (\ref{H_C_ratio}) established in Proposition \ref{prop:hara_opt}
as well as the simple fact that the interest rate hedge component is zero for
both CRRA and HARA utilities as the interest rate is constant in the Heston-SV
model. In particular, $\bar{X}_{t}$ in (\ref{Xbar_Heston}) follows from
(\ref{xtH_2}); the decomposition of the policy under CRRA utility in
(\ref{piHN_crra}) follows from Theorems \ref{thm_representation_new} and
\ref{thm_integral_equation} as well as the results in \cite{liu2007portfolio}.
The form of (\ref{thetau_explicit}) for the investor-specific price of risk is
derived as follows. By the definition (\ref{def_thetau}) and the orthogonal
condition (\ref{ortho_cond}), we have
$
\sigma(t,Y_{t})\theta^{u}_{t} \equiv0_{m}.$
Under Heston-SV model with HARA utility, it implies $\sigma(t,V_{t}%
)\theta^{u}(t,V_{t};T)\equiv0$ with $\sigma(t,V_{t})=(\sqrt{V_{t}},0)$.
Combining (\ref{CRRA_HARA_theta_u_equation_det_r}) and (\ref{pi_CRRA}) as well
as plugging in $\pi_{C}^{\theta}(t,V_{t})=-\rho\sigma\delta\phi(T-t)$, we
obtain $\theta_{2}^{u}(t,V_{t};T)$ in closed-form as (\ref{thetau_explicit}).
\end{proof}

\end{document}